
\documentclass[reqno,centertags, 12pt, draft]{article}

\usepackage{amscd}

\usepackage{mathrsfs}

\usepackage{amsmath,amssymb, amsfonts, amsthm, latexsym,titlesec,tikz}










\newtheorem{theorem}{Theorem}[section]

\newtheorem{lem}{Lemma}[section]
\newtheorem{lemma}[lem]{Lemma}
\newtheorem{cor}[lem]{Corollary}
\newtheorem{propos}[lem]{Proposition}

\newtheorem{rem}[lem]{Remark}

\newtheorem{defin}{Definition}[section]

\newcommand{\N}{\mathbb{N}}
\newcommand{\E}{\mathbb E}
\newcommand{\Z}{\mathbb{Z}}
\newcommand{\R}{\mathbb{R}}

\newcommand{\Ex}[1]{\mathbb E\left[#1\right]}
\newcommand{\Tr}[1]{\textnormal{Tr}\left(#1\right)}
\newcommand{\Var}[1]{\textnormal{Var}\left(#1\right)}
\newcommand{\Cov}[1]{\textnormal{Cov}\left(#1\right)}




\numberwithin{equation}{section}

\begin{document}

\title{Spectral fluctuations for Schr\"odinger operators with a random decaying potential}
\author{Jonathan Breuer, Yoel Grinshpon, Moshe White \thanks{Institute of Mathematics, The Hebrew University of Jerusalem, Jerusalem, 91904, Israel.
Emails: jbreuer@math.huji.ac.il, yoel.grinshpon@mail.huji.ac.il, moshe.white@mail.huji.ac.il}}

\maketitle

\begin{abstract}
We study fluctuations of polynomial linear statistics for discrete Schr\"odinger operators with a random decaying potential. We describe a decomposition of the space of polynomials into a direct sum of three subspaces determining the growth rate of the variance of the corresponding linear statistic. In particular, each one of these subspaces defines a unique critical value for the decay-rate exponent, above which the random variable has a limit that is sensitive to the underlying distribution and below which the random variable has asymptotically Gaussian fluctuations.
\end{abstract}

\sloppy

\section{Introduction}

The purpose of this paper is to study fluctuations of the eigenvalue counting measure for discrete half-line Schr\"odinger operators with a random decaying potential. To be precise, we consider the operator
\begin{equation} \nonumber
(H_{\alpha}u)_n= \left\{ \begin{array}{cc} u_{n-1}+u_{n+1}+V_{\alpha}(n)\cdot u_n & n>1  \\
u_2+V_{\alpha}(1)\cdot u_1 &  n=1 \end{array} \right.
\end{equation}
on $\ell^2(\N)$, where $0<\alpha$ and $V_{\alpha}(n)=\frac{X_n}{n^\alpha}$ with $\{X_n\}_{n=1}^\infty$ a sequence of independent, identically distributed (iid) random variables satisfying $\Ex{X_n}=0$. We denote by $\eta^2=\Var{X_n} =\Ex{X_n^2}>0$. For the simplicity of some of the arguments below, we also assume that there exists some $C>0$ such that $|X_n|<C$ almost surely.

Such operators have received a fair amount of attention in the '80s and '90s of the previous century \cite{DSS2,DSS,KLS,Simon}, since they exhibit an analyzable spectral transition (from pure point spectrum to absolutely continuous spectrum via singular continuous spectrum), which occurs as the rate of decay, $\alpha$, increases. In particular, for $\alpha<1/2$ such operators are known to have pure point spectrum with super-polynomially decaying eigenfunctions with probability $1$, \cite{Simon,DSS2, DSS} whereas for $\alpha>1/2$ the spectrum is almost surely absolutely continuous \cite{KLS}. In the critical case of $\alpha=1/2$ the spectral measure may have a singular continuous as well as a pure point component, depending on $\eta$ \cite{KLS}.

The past decade has seen renewed interest in these operators and related models, due to the rich asymptotic spacing properties of eigenvalues of their finite volume approximations \cite{BLS, Kil-St,  KN, KN2, KVV}. Roughly speaking, these asymptotics undergo a qualitative transition from local `Poisson statistics'\footnote{Strictly speaking, Poisson statistics have so far only been shown for the continuum model \cite{KN2} and for the unit circle analog \cite{Kil-St}.} to `clock behavior' corresponding to the transition from pure point spectrum to absolutely continuous spectrum \cite{BLS,KN}. Moreover, at the critical decay rate of $n^{-1/2}$, where the transition occurs, the local eigenvalue statistics have random matrix asymptotics \cite{KVV,Kil-St, KN}.

Our aim in this paper is to analyze the fluctuations of the counting measure for the eigenvalues of finite volume approximations. Explicitly, we study fluctuations of polynomial linear statistics of the finite volume ensemble with the purpose of analyzing the rate of growth of the variance and understanding when the limit is Gaussian. Research into problems of this type has been common in models of random matrix theory (there is a huge number of works on fluctuations of linear statistics in this context; \cite{serfaty, Bor, BG, BGG, BD, BG, DE, DP, J1, J2, lambert, sosh1, sosh2} is a very partial list of relevant references) and lately also in the context of the Anderson model \cite{KP,PS}. The growth rate of the variance is considered to be an indicator of the rigidity of the eigenvalue locations. Fluctuations of the integrated density of states in the analogous continuum model have been studied by Nakano in \cite{Nakano}.

Remarkably, as we show, in the case of the Schr\"odinger operator with a random decaying potential, the underlying symmetry of the operator leads to a decomposition of the space of polynomials into subspaces with different asymptotic fluctuations. In particular, the space of polynomials decomposes naturally as a direct sum of three spaces, each of which has a different variance growth rate, and each of which has Gaussian asymptotics for different values of the relevant parameters.

We note that Jacobi matrices have also been utilized in the context of random matrix theory to study asymptotic fluctuations of the eigenvalue point process \cite{BD, DE1, DE, Popescu}. Indeed, Dumitriu-Edelman \cite{DE} and Popescu \cite{Popescu} study polynomials of random Jacobi matrix models with \emph{growing} off diagonal entries that arise naturally in this context. While such operators exhibit a spectral transition similar to the decaying random potential case \cite{Breuer-TAMS,BFS}, the asymptotic fluctuations are very different. We compare our results with theirs below towards the end of the Introduction.

In order to present our results we need some definitions. Given $N \in \N$, let
$$
H_{\alpha,N}=\begin{pmatrix}
V_{\alpha}\left(1\right) & 1 & 0 & \ldots & 0\\
1 & V_{\alpha}\left(2\right) & \ddots & \ddots & \vdots\\
0 & \ddots & \ddots & \ddots & 0\\
\vdots & \ddots & \ddots & \ddots & 1\\
0 & \ldots & 0 & 1 & V_{\alpha}\left(N\right)
\end{pmatrix}
$$
(the restriction of $H_\alpha$ to $[1,N]$). The \emph{empirical measure} of $H_{\alpha,N}$ is the measure
$$\textrm{d}\nu_{\alpha,N}=\frac{1}{N}\sum_{i=0}^{N} \delta_{\lambda_i^{(N)}}$$
where $\left \{\lambda_{1}^{(N)},\lambda_2^{(N)},\ldots,\lambda_N^{(N)} \right \}=\textrm{sp}\left(H_{\alpha,N} \right)$ is the spectrum of $H_{\alpha,N}$, and $\delta_{\lambda_i^{(N)}}$ is the Dirac measure at $\lambda_i^{(N)}$. When the empirical measure has a limit as $N \rightarrow \infty$, this limit is known as the density of states (DOS) of $H$. In the case of $H_{\alpha}$, since the operator is a compact perturbation of the operator with $V_n \equiv 0$, the random measure $\textrm{d}\nu_{\alpha,N}$ converges weakly almost surely to the arcsine measure:
$$\textrm{d}\nu(x)=\frac{1}{\pi\sqrt{4-x^2}}\chi_{[-2,2]}(x)\textrm{d}x.$$

We want to focus on asymptotics of the fluctuations of $\textrm{d}\nu_{\alpha,N}$. A natural way to study this is using \emph{linear statistics}, namely random variables of the form $\int f \textrm{d}\nu_N=\frac{1}{N}\Tr{f(H_{\alpha,N})}$ for some function $f$ defined on the spectrum of $H_{\alpha,N}$. We shall study these fluctuations for polynomial $f$'s.

For an odd and positive integer $m$, let $\mathcal{V}_m$ be the space of polynomials over $\mathbb{R}$ of degree $\le m$ with the inner product
\begin{equation} \nonumber
\left\langle \sum_{k=0}^m a_k x^k, \sum_{k=0}^m b_k x^k \right \rangle = \sum_{k=0}^m a_k b_k.
\end{equation}

Let $\mathcal{E}_m \subseteq \mathcal{V}_m$ be the subspace of polynomials in $x^2$, and let $\mathcal{O}_m=x \mathcal{E}_m$ be its orthogonal complement (i.e.\ the subspace of polynomials in odd powers of $x$). In $\mathcal{O}_m$, let $\mathcal{Q}_m$ be the one-dimensional subspace spanned by the polynomial
\begin{equation} \label{Qdef}
Q_m(x)=\sum_{j=0}^{(m-1)/2} \frac{(2j+1)!}{\left(j \right)!^2} x^{2j+1}
\end{equation}
and $\mathcal{Q}_m^\perp \subseteq \mathcal{O}_m$ its orthogonal complement in $\mathcal{O}_m$. We have
\begin{equation} \label{eq:decomp}
\mathcal{V}_m=\mathcal{Q}_m\oplus \mathcal{Q}_m^\perp\oplus \mathcal{E}_m.
\end{equation}

We denote by $N\left(0,\sigma^2\right)$ the Normal distribution on $\R$ with mean $0$ and variance $\sigma^2$, and denote by $\overset{d}{\longrightarrow}$ convergence in distribution.
Finally, for $0<t\leq 1$, denote
$$g_t(N)^2=\begin{cases}
\log N=\int_1^N \frac{1}{x^t}dx & t=1\\
{\frac{N^{1-t}}{1-t}}=\int_1^N \frac{1}{x^t}dx+\frac{1}{{1-t}} & 0<t<1
\end{cases},$$
and note that
\begin{equation} \label{eq:AysmptoticSum}
\frac{g_t(N)^2}{\sum_{n=1}^N \frac{1}{n^t}}\overset{N\longrightarrow \infty}{\longrightarrow} 1.
\end{equation}

Our main result says that for each of the different subspaces in the decomposition \eqref{eq:decomp} there is a critical value $\alpha_c$ such that for $\alpha\leq \alpha_c$, $\frac{\Tr{P(H_{\alpha,N})}-\Ex{\Tr{P(H_{\alpha,N})}}}{g_{\alpha/\alpha_c}(N)}$ converges to a Gaussian random variable, while for $\alpha>\alpha_c$, $\Tr{P(H_{\alpha,N})}-\Ex{\Tr{P(H_{\alpha,N})}}$ converges to a random variable (whose distribution depends on the distribution of the $X_n$'s), for any $P$ in that subspace. Explicitly,

\begin{theorem} \label{thm:main} 
Let $P \in \mathcal{V}_m$ be a polynomial with $\deg(P)>0$. \\
\begin{verse}
a) (i) For $0<\alpha \le \alpha_c^{\mathcal{Q}}:=\frac12$,
  $$\frac{\Tr{P(H_{\alpha,N})}-\Ex{\Tr{P\left(H_{\alpha,N}\right)}}}{g_{2\alpha}(N)}\overset{d}{\longrightarrow} N\left(0,\sigma_{\mathcal{Q}}(P)^2 \right) $$
  as $N \rightarrow \infty$, where
\begin{equation} \nonumber
\sigma_{\mathcal{Q}}(P)^2=\left|\left \langle P,Q_m \right \rangle \right|^2 \eta^2.
\end{equation} \\

\ \ \ \ (ii) For $\alpha>\frac12$, $\Tr{P(H_{\alpha,N})}-\Ex{\Tr{P\left(H_{\alpha,N} \right)}}$ converges a.s.\ to a random variable with finite variance. \\

\medskip

b) Assume $P \in \mathcal{E}_m$.

\ \ \ \ (i) For $\alpha \leq \alpha_c^{\mathcal{E}}:= \frac14$,
$$\frac{\Tr{P(H_{\alpha,N})}-\Ex{\Tr{P(H_{\alpha,N})}}}{g_{4\alpha}(N)}\overset{d}{\longrightarrow} N\left(0,\sigma_{\mathcal{E}}(P)^2 \right) $$
  as $N \rightarrow \infty$. If $\Var{X_n^2}> 0$ or $\deg(P)>2$, then $\sigma_{\mathcal{E}}(P)^2>0$. \\

\ \ \ \ (ii) For $\alpha>\frac14$, $\Tr{P(H_{\alpha,N})}-\Ex{\Tr{P(H_{\alpha,N})}}$ converges a.s.\ to a random variable with finite variance. \\

\medskip

c) Assume $P \in \mathcal{Q}_m^\perp$.

\ \ \ \ (i) For $\alpha \leq \alpha_c^{\mathcal{Q}^\perp}:=\frac16$,
$$\frac{\Tr{P(H_{\alpha,N})}-\Ex{\Tr{P(H_{\alpha,N})}}}{g_{6\alpha}(N)}\overset{d}{\longrightarrow} N\left(0,\sigma_{\mathcal{Q}^\perp}(P)^2 \right) $$
 as $N \rightarrow \infty$. We have $\sigma_{\mathcal{Q}^\perp}(P)^2>0$, unless both $\Var{X_n^2}=0$ and $P(x)=a\left(\frac{x^5}{5}-4x^3+18x\right)$, for some constant $a$. \\ 

\ \ \ \ (ii) For $\alpha>\frac16$, $\Tr{P(H_{\alpha,N})}-\Ex{\Tr{P(H_{\alpha,N})}}$ converges a.s.\ to a random variable with finite variance.

\end{verse}
\end{theorem}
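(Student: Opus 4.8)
My approach is to reduce everything to sums of independent (or martingale-difference) random variables by expanding $\Tr{P(H_{\alpha,N})}$ as a polynomial in the potential values $V_\alpha(1),\dots,V_\alpha(N)$, and then to isolate, for each of the three subspaces, the dominant term in that expansion. The trace $\Tr{H_{\alpha,N}^k}$ counts closed walks of length $k$ on the path graph $[1,N]$ weighted by products of potential values at the vertices visited; writing $V_\alpha(n)=X_n/n^\alpha$, a walk that visits a multiset $S$ of vertices contributes a term of order $\prod_{n\in S} n^{-\alpha}$ times a product of the corresponding $X_n$'s. The key structural observation — and this is where the decomposition \eqref{eq:decomp} comes from — is that for odd $P$ the \emph{linear} part in the $X_n$'s (walks touching the potential exactly once) survives, and its coefficient as a function of $P$ is exactly $\langle P,Q_m\rangle$ (this is why $Q_m$ with those binomial-type coefficients $\frac{(2j+1)!}{j!^2}$ appears: it records the number of closed walks of length $2j+1$ on $\Z$ through a fixed site, i.e. a Catalan/central-binomial count). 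For $P\in\mathcal{E}_m$ the linear part vanishes identically and the dominant contribution is \emph{quadratic} in the $X_n$'s; for $P\in\mathcal{Q}_m^\perp$ both the linear and — in the relevant combinations — the leading quadratic pieces vanish, pushing the dominant contribution to \emph{cubic} order. This is precisely what produces the critical exponents $\tfrac12,\tfrac14,\tfrac16$: a degree-$d$ term $\sum_{n_1,\dots,n_d} c\, X_{n_1}\cdots X_{n_d}\prod n_i^{-\alpha}$ has variance of order $\sum_n n^{-2d\alpha}$ (after accounting for near-diagonal walks), which is summable iff $2d\alpha>1$, i.e. $\alpha>\tfrac1{2d}$ with $d=1,2,3$.

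Concretely, I would proceed as follows. First, fix $k$ and write $\Tr{H_{\alpha,N}^k}=\sum_{j\ge 0} T_j^{(k,N)}$, where $T_j^{(k,N)}$ collects walks touching the potential at exactly $j$ (not necessarily distinct) sites; by the boundedness assumption $|X_n|<C$ and the finite band structure, $T_j^{(k,N)}$ is a sum over $O(N)$ "centers" of bounded-length local walks, so each $T_j$ is a sum of $m$-dependent bounded random variables indexed by $n$, up to boundary corrections that are $O(1)$. Second, for part (a): for odd $P$, show $T_0$ and the even-order terms contribute nothing random in the leading order, compute $\E$ of the $j=1$ term to be $O(1)$ (so it cancels against the centering), and identify the fluctuating part of $T_1$ as $\langle P,Q_m\rangle\sum_{n=1}^N X_n/n^\alpha + (\text{error})$; divide by $g_{2\alpha}(N)$ and apply a CLT for $m$-dependent (here actually independent, since the $j=1$ term is genuinely $\sum_n X_n n^{-\alpha}$ up to boundary) triangular arrays — the Lyapunov/Lindeberg condition is immediate from boundedness and $g_{2\alpha}(N)\to\infty$. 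The variance normalization uses \eqref{eq:AysmptoticSum}. For $\alpha>\tfrac12$ in (a)(ii), the series $\sum_n X_n n^{-\alpha}$ converges a.s., and one must show the whole remainder (all higher $j$, all $k$) converges a.s. with finite variance: this is a Kolmogorov three-series / $L^2$-summability argument using that every surviving term has exponent $>1$. Third, parts (b) and (c) are the same template one and two rungs up: for $P\in\mathcal{E}_m$ the leading object is a quadratic form $\sum_{n}\big(\text{local combinatorial weight}\big)\cdot\big(X_n^2-\E X_n^2\big)/n^{2\alpha}$ plus $\sum_n \beta_n X_nX_{n+r}/n^{2\alpha}$-type cross terms over bounded ranges $r$; this is a sum of bounded $m$-dependent variables, so a martingale/$m$-dependent CLT applies, with critical exponent $2\alpha=\tfrac12$. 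For $P\in\mathcal{Q}_m^\perp$ the quadratic leading term must be shown to vanish (using that $\langle P,Q_m\rangle=0$ kills exactly the weight that would otherwise appear — this is the content-heavy combinatorial identity), leaving a cubic expression with critical exponent $3\alpha=\tfrac12$.

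**The non-degeneracy claims.** For the statements that $\sigma_\bullet(P)^2>0$ except in the listed cases, I would compute each limiting variance explicitly as a quadratic form in the coefficients of $P$ and in $\Var{X_n}$, $\Var{X_n^2}$. For $\mathcal{Q}$ this is just $|\langle P,Q_m\rangle|^2\eta^2$, positive whenever $P\notin\mathcal{Q}_m^\perp\oplus\mathcal{E}_m$. For $\mathcal{E}_m$ and $\mathcal{Q}_m^\perp$ one gets a sum of two manifestly nonnegative pieces — one proportional to $\Var{X_n^2}$, one proportional to $\eta^4$ coming from the cross terms — and the claim reduces to checking that the $\eta^4$-piece, a polynomial in $P$'s coefficients, vanishes only on the indicated low-degree exceptional line (for $\mathcal{Q}_m^\perp$, the polynomial $\frac{x^5}{5}-4x^3+18x$, which one verifies directly is the unique degree-$5$ element of $\mathcal{Q}_m^\perp$ annihilating the relevant cross-term functional).

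**The main obstacle.** The hard part is the bookkeeping of closed-walk contributions: precisely identifying, for each subspace, \emph{which} combinatorial weight is the leading one, proving that all lower-order-in-$\alpha$ (i.e. higher-degree-in-$X$) terms are genuinely negligible after normalization, and — most delicately — proving the cancellations that define $\mathcal{Q}_m^\perp$, i.e. that the linear functional "coefficient of the leading quadratic term" on $\mathcal{O}_m$ is (a multiple of) $\langle\,\cdot\,,Q_m\rangle$, and similarly that on $\mathcal{Q}_m^\perp$ the leading cubic functional degenerates only as stated. This is where the spectral symmetry of the free operator (closed walks on $\Z$, central binomial coefficients, the specific form of $Q_m$) does the real work, and it is the step I expect to require the most care. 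The probabilistic inputs — an $m$-dependent CLT for bounded triangular arrays and a Kolmogorov-type a.s.-convergence criterion — are standard once the algebraic reduction is in place.
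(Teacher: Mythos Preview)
Your overall plan is essentially the paper's: expand $\Tr{H_{\alpha,N}^k}$ via closed walks, organize terms by the degree $|\beta|$ of the monomial $V_\alpha^\beta$ that each walk contributes, show that the lowest surviving degree dominates, and apply an $m$-dependent CLT (the paper uses Orey's theorem) for the Gaussian limits and a Kolmogorov $L^2$-summability argument for the almost-sure limits. The identification of $\langle P,Q_m\rangle$ as the coefficient of the degree-$1$ piece, and the exponents $\tfrac12,\tfrac14,\tfrac16$ coming from $2d\alpha=1$ with $d=1,2,3$, are exactly right.

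There is, however, one genuine confusion and one underestimate in your sketch.

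\textbf{The parity point.} You write that for $P\in\mathcal Q_m^\perp$ ``the quadratic leading term must be shown to vanish (using that $\langle P,Q_m\rangle=0$)'', and that the ``coefficient of the leading quadratic term'' on $\mathcal O_m$ is $\langle\,\cdot\,,Q_m\rangle$. This misplaces the cancellation. For odd $k$, $\Tr{H_{\alpha,N}^k}$ contains only \emph{odd}-degree monomials in the $V_\alpha(n)$ (any closed walk of odd length has an odd number of flat steps), so there are no quadratic terms at all on $\mathcal O_m$; the condition $\langle P,Q_m\rangle=0$ kills the degree-$1$ piece, and the next surviving degree is automatically $3$. There is no ``content-heavy combinatorial identity'' needed to pass from degree $1$ to degree $3$; parity does it for free. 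The combinatorial input you do need is only that the coefficient of $V_\alpha(n)$ in $\Tr{H_{\alpha,N}^k}$ is the number $p^k(\delta)=\frac{k!}{\big(\frac{k-1}{2}\big)!^2}$ of closed walks with a single flat step, which is exactly the coefficient defining $Q_m$.

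\textbf{The non-degeneracy for $\mathcal Q_m^\perp$.} Your description ``a sum of two manifestly nonnegative pieces, one proportional to $\Var{X_n^2}$ and one to $\eta^4$'' is accurate for $\mathcal E_m$ but not for $\mathcal Q_m^\perp$. At degree $3$ the limiting variance involves four distinct moment combinations ($\E[X_1^6]-\E[X_1^3]^2$, $\E[X_1^3]^2$, $\E[X_1^4]\E[X_1^2]$, $\E[X_1^2]^3$), each multiplied by a different path-counting sum, and the resulting quadratic form in the coefficients of $P$ is not obviously positive. The paper handles $\deg P\ge 7$ by isolating a single ``extremal'' multi-index $\widetilde\beta=\delta+\delta^1+\delta^{(d-3)/2}$ whose contribution is uncorrelated with the rest and has strictly positive variance $\eta^6$ times a positive combinatorial weight; $\deg P=3$ is trivial; but $\deg P=5$ genuinely requires a separate argument (the paper builds an auxiliary random vector $(W_1,W_2)$ in three independent copies of $X_1$ with the same covariance matrix and checks directly when a linear combination degenerates). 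Your proposed route of ``checking the $\eta^4$-piece vanishes only on the indicated line'' will not suffice here; you should expect the degree-$5$ case to require its own computation.
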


\begin{rem}
We compute explicit formulas for $\sigma_{\mathcal{E}}(P)$ $($see Proposition \ref{propos:even}$)$ and for $\sigma_{\mathcal{Q}^\perp}(P)$ $($see Proposition \ref{propos:perp}$)$ below. As they are quite cumbersome to describe we do not present them here. It takes a separate argument to show that they are positive under the conditions described above.
\end{rem}

\begin{rem}
Suppose that the random variables $X_n$ only take the values $c$ and $-c$ for some $c\neq 0$. For $P(x)=ax^2+b$ for some $a,b$, it is clear in this case that $\Tr{P(H_{\alpha,N})}-\Ex{\Tr{P(H_{\alpha,N})}}\equiv 0$.\\
Similarly, if $P(x)=a\left(\frac{x^5}{5}-4x^3+18x\right)$ for some $a$, we show in the proof of Proposition \ref{propos:perp} below that $\Var{\Tr{P(H_{\alpha,N})}-\Ex{\Tr{P(H_{\alpha,N})}}}=o(g_{6\alpha}(N))$.

\end{rem} 

\begin{rem}
As noted above, in the case of $\alpha>\alpha_c$ the limit of $\Tr{P(H_{\alpha,N})}-\Ex{\Tr{P(H_{\alpha,N})}}$ $($which we show exists$)$ is not necessarily Gaussian and in fact depends on the distribution of the $X_n$'s. For example, for $P(x)=x$, if the $\left \{X_n \right \}$ are uniformly distributed on the interval $[-1,1]$, then the characteristic function of $X_n$, denoted by $\phi_{X_n}(t)$, is $\frac{\sin t}{t}$, and $\phi_{\left(V_\alpha(n) \right)}(t)=\frac{\sin\left(\frac{t}{n^{\alpha}}\right)}{\frac{t}{n^{\alpha}}}$. Since $V_\alpha(n)$ are independent, the characteristic function of $\Tr{P(H_{\alpha,N})}=\sum_{n=1}^N V_\alpha(n)$ is $\prod_{n=1}^N \frac{\sin(\frac{t}{n^{\alpha}})}{\frac{t}{n^{\alpha}}}$, which does not converge to the characteristic function of a Gaussian distribution $($e.g.\ it vanishes at $2\pi)$.
\end{rem}

The proof of Theorem \ref{thm:main} proceeds by considering first the asymptotic fluctuations of monomials. The trace of $H_{\alpha,N}^k$ can be computed by counting paths on a graph described in Section 2 below. This is a well known fact in the context of Jacobi matrices that goes back at least to work of Flajolet \cite{flajolet} and Viennot \cite{viennot}, and has been used recently in exploiting the connection between Jacobi matrices and random matrix models (see, e.g., \cite{DE, Duy, HA, GS, Popescu, Wong}). More generally, casting the problem of computing moments as a path-enumeration problem is a classical method in random matrix theory \cite{RMT}. To the best of our knowledge, however, this approach has never been applied to the model considered here, where it leads to the surprising results described above. In our case, the decay of $V_\alpha(n)$ implies that only a small number of paths contribute significantly to the asymptotics and these can be computed explicitly.

The difference in the rate of growth between even and odd $k$ follows from the fact that the off diagonal elements of $H_\alpha$ are non-random. Letting
\begin{equation} \nonumber
(\Delta u)_n=\left\{ \begin{array}{cc} u_{n-1}+u_{n+1} & n>1  \\
u_2 & n=1  \end{array} \right.
\end{equation}
so that $H_\alpha=\Delta+V_{\alpha}$ (where we abuse notation and use $V_{\alpha}$ for the operator of multiplication by the function $V_{\alpha}$), it is not hard to see that the diagonal of $H_{\alpha,N}^k$ is composed of only odd (even) powers of $V_{\alpha}$ when $k$ is odd (even) (see Corollary \ref{oddeven} below). Thus, while the lowest power of $V_{\alpha}(n)$ ($1\leq n \leq N$) in $\Tr{H_{\alpha,N}^k}$ is $1$ in the case of odd $k$, it is $2$ for $k$ even, which implies a difference by a factor of $2$ in the decay rate. The condition for a polynomial, $P$, of being in $\mathcal{Q}_m^\perp$ means in fact that the summands of order $1$ in $\Tr{P\left(H_{\alpha,N} \right)}$ vanish. Since $\mathcal{Q}_m\subseteq \mathcal{O}_m$ this means that the lowest power of $V_{\alpha}$ in the diagonal of $P\left(H_{\alpha,N} \right)$ is $3$. This accounts for the special decay rate for $P \in \mathcal{Q}_m^\perp$.

In terms of spectral fluctuations, the different rates can be seen as coming from the following symmetry of $H_\alpha$: $\Delta$ is unitarily equivalent to $-\Delta$ through the unitary transformation $u_n \mapsto (-1)^nu_n$. This means that $H_\alpha=\Delta+V_{\alpha}$ is unitarily equivalent to $-H_{\alpha}^-=-(\Delta-V_{\alpha})$ and a similar equivalence holds for the truncated operator. This implies that for even $k$
\begin{equation} \label{eq:SignChange}
\textrm{sp}\left(H_{\alpha,N}^k \right)=\textrm{sp}\left(\left( H_{\alpha,N}^- \right)^k \right).
\end{equation}
Since $V_\alpha$  is decaying, the fluctuations of $\textrm{sp}(H_{\alpha,N})$ are determined by its first several values and by \eqref{eq:SignChange} fluctuations that reverse the sign of these values are insignificant for $\Tr{P\left(H_{\alpha,N} \right)}$ for $P \in \mathcal{E}_m$. However, such fluctuations are certainly significant for $P \in \mathcal O_m$. The different rates are thus also expected on spectral grounds.

Before we end this introduction, we comment on the related model of random Jacobi matrices with growing off diagonal entries where both the off diagonal elements and the diagonal elements are random. To be explicit, we want to briefly discuss matrices of the form
$$
J=\begin{pmatrix}
b_{1} & a_{1} & 0 & \ldots\\
a_{1} & b_{2} & a_{2} & \ddots\\
0 & a_{2} & \ddots & \ddots\\
\vdots & \ddots & \ddots & \ddots
\end{pmatrix}
$$
where $a_n=n^\eta+\lambda Y_n$ and $b_n=\lambda X_n$ with $\{X_n\}_{n=1}^\infty,\{Y_n\}_{n=1}^\infty$, two sequences of iid random variables with finite moments (the model studied in \cite{Breuer-TAMS} is more general, but we restrict our discussion to this case for simplicity). Continuity properties of the spectral measure for such matrices were studied in \cite{Breuer-TAMS}, where it was shown that these operators are susceptible to a similar analysis as that applied in the random decaying case and they exhibit a similar transition from pure point to continuous spectrum with $\eta$ playing the same role as $\alpha$ does for the decaying model\footnote{Note, however, that absolute continuity of the spectral measure for $\eta>1/2$ is still an open problem}. However, in contrast with the bounded (i.e.\ random decaying) case, the continuity of the spectral measure is uniform throughout the spectrum.

The original motivation for studying the model in \cite{Breuer-TAMS} comes from the fact that it is a natural generalization\footnote{Strictly speaking, the model arising in $\beta$-ensembles is not exactly a particular case of this general model, but is very close.} of a random Jacobi matrix model arising in the study of Gaussian $\beta$ ensembles \cite{DE1}. However, the asymptotic local eigenvalue process has not yet been studied for the model in \cite{Breuer-TAMS}. We expect its dependence on the parameters $\eta$ and $\lambda$ to be universal so that for $\eta<1/2$ we expect local asymptotic Poisson behavior, for $\eta>1/2$ we expect asymptotic clock behavior, and for $\eta=1/2$ we expect the local eigenvalue process to be asymptotic to a Gaussian $\beta$ ensemble with $\beta$ an explicit function of $\lambda$.

As for the fluctuations of polynomial linear statistics, these have been studied in \cite{DE} for the particular model arising from Gaussian $\beta$ ensembles and for more general matrices in this family in \cite{Popescu}. Since the matrices are unbounded, a rescaling is required in order to study asymptotic fluctuations of polynomials. The properly scaled truncated matrices are $A_N=\frac{J_N}{N^\eta}$ and for these matrices, Popescu \cite[Theorem 3]{Popescu} has shown that for any polynomial, $P$,
\begin{equation} \nonumber
N^{\eta-1/2}\left(\Tr{P(A_N)}-\Ex{\Tr{P(A_N)}}\right)
\end{equation}
converges to a Gaussian random variable with an explicit formula for the variance. Note that there is no need in this case to differentiate between subspaces of polynomials. This is due to the fact that the off diagonal elements are random here so the discussion in the previous paragraphs does not apply. Moreover, for any values of the parameter $\eta$ and with the proper scaling, one has Gaussian asymptotics. As for variance growth note that for $\eta>1/2$, the normalization $N^{\eta-1/2}$ implies that the variance of $\Tr{P(A_N)}$ is decaying with $N$, which corresponds to a rigidity of the eigenvalue process on the global scale. For $\eta<1/2$, the variance is growing, as one would expect if the eigenvalue process is indeed, at least locally, Poisson; and for $\eta=1/2$ the variance is bounded, which is well known to be the case for the $\beta$ ensembles (\cite{J2, DE1}).

We note again that in the decaying model, by Theorem \ref{thm:main}, for any polynomial, $P$, if $\alpha>1/2$ then $\Tr{P(H_{\alpha,N})}$ has a limit which is not necessarily Gaussian. Together with the fact that the variance growth rate depends on the subspace to which $P$ belongs, these are notable differences between the decaying diagonal and growing off diagonal models.

The rest of this paper is structured as follows. The next section contains a break-up of the proof of Theorem \ref{thm:main} into several propositions dealing with asymptotic fluctuations for monomials. In Section 3, the combinatorial identities needed for the computation of the leading terms in the asymptotics are formulated and proved, and in Section 4, Propositions \ref{propos:NoCLT}, \ref{propos:odd}, \ref{propos:even}, and \ref{propos:perp}, which compose the proof of Theorem \ref{thm:main}, are proved.

\textbf{Acknowledgments.} We are grateful to Ofer Zeitouni for a useful discussion.

Research by JB and YG was supported in part by the Israel Science Foundation (Grant No. 399/16) and in part by the United States-Israel Binational Science Foundation (Grant No. 2014337). Research by MW was supported in part by the Israel Science Foundation (Grant No. 1612/17) and in part by the United States-Israel Binational Science Foundation (Grant No. 2006066).


\section{Fluctuations of monomials and path counting}

Fix $k\in\mathbb{N}$ and note that $\Tr{H_{\alpha,N}^{k}}$ is a polynomial in the variables $V_1,\ldots,V_N$ (given $\alpha> 0$, we denote $V_n=V_\alpha(n)$ for simplicity). In order to prove Theorem \ref{thm:main}, we need to understand the asymptotic behavior (as $N\rightarrow\infty$) of coefficients of low degree monomials in this polynomial. We shall use multi-indices to label these coefficients.

\begin{defin}
A multi-index is defined as a finitely supported function $\beta:\Z\rightarrow\N \cup\left\{ 0\right\}$.
Let $\beta_{h}$ denote $\beta\left(h\right)$ for any $h\in\Z$.
If $\beta$ is supported on $\N$, we let $V^{\beta}$ denote
the monomial $\prod_{h}V_{h}^{\beta_{h}}$. The degree of the monomial
is $\sum_{h}\beta_{h}$, which we also denote by $\left|\beta\right|$.

For any multi-index $\beta$ and $i\in\mathbb{Z}$, define a new multi-index
$\beta^{i}$ by $\beta_{h}^{i}=\beta_{h-i}$ for every $h\in\mathbb{Z}$.
We also fix a multi-index $\delta$ by
$$\delta_{h}=\begin{cases}
1 & h=0\\
0 & h\neq0
\end{cases}.$$
Note that using this notation, $\delta_{h}^{i}$ is $1$ if $h=i$,
and $0$ otherwise.
\end{defin}

The key to the analysis is that the coefficient of a term of the form $V^\beta$ is given by a certain number of paths in a directed graph whose vertices are given by $\left(\mathbb{N}\cup\{0\}\right)\times\mathbb{Z}$ and directed edges are of the form $\left((x,y) \rightarrow (x+1,y+m)\right)$ where $m=-1,0,1$. A path of length $k$ in this graph, that originates at $(0,y_0)$, is uniquely associated with a sequence of integers $\left(y_{0},y_{1},\ldots,y_{k}\right)$, where $\left|y_{t}-y_{t-1}\right|\leq1$. Thus we shall refer to such sequences as paths as well. Following \cite{Popescu} we call each pair $\left(y_{t-1},y_{t}\right)$ in such a sequence a \emph{step}. A step $\left(y_{t-1},y_{t}\right)$ is called \textit{up}, \textit{down}, or \textit{flat}, if $y_{t}-y_{t-1}$ equals $+1,-1,0$ respectively. A step $\left(y_{t-1},y_{t}\right)$ is said to be at the \textit{level} $y_{t-1}$.

\begin{rem}
While $\Tr{H_{\alpha,N}^{k}}$ is a function of $V^\beta$ only for $\beta$ supported on $\mathbb{N}$, it is convenient for technical purposes to consider our multi-indices as defined over $\mathbb{Z}$. For the same reason we consider our paths in $\left(\mathbb{N}\cup\{0\}\right)\times\mathbb{Z}$ rather than over $\mathbb{N}\times \mathbb{N}$.
\end{rem}

Since we are dealing with the trace, we shall be mostly interested in paths that start and end at the same point. For a multi-index $\beta$, let $\mathcal{P}^k(\beta)$ be the set of paths $\left(0=y_{0},y_{1},\ldots,y_{k}=0\right)$ such that $\beta^{i}$ counts flat steps in the path, for some integer $i$. Namely, $\left(0=y_{0},y_{1},\ldots,y_{k}=0\right) \in \mathcal{P}^k(\beta)$ if there is some $i\in\mathbb{Z}$ so that for every $h\in\mathbb{Z}$, $\beta_{h-i}$ is the number of flat steps of $\left(y_{0},y_{1},\ldots,y_{k}\right)$ at level $h$ (see Figures 1 and 2). Let $p^{k}\left(\beta\right)$ denote the number of paths in $\mathcal{P}^k(\beta)$. Thus, for example, $p^{k}\left( \delta \right)$ counts the number of paths of length $k$ with one flat step, and $p^{k}\left(\delta+\delta^j \right)$ counts the number of paths of length $k$ with one flat step at a level $y$ and another flat step at the level $y+j$.

\begin{rem}\label{shift_rmk}
Clearly, $p^{k}\left(\beta^{i}\right)=p^{k}\left(\beta\right)$,
for any multi-index $\beta$ and integer $i$.
\end{rem}

\begin{figure}
  \centering
  \begin{tikzpicture}
    \coordinate (XAxisMin) at (-1,0);
    \coordinate (XAxisMax) at (5,0);
    \coordinate (YAxisMin) at (0,-1);
    \coordinate (YAxisMax) at (0,3);
    \draw [thick, black] (-0.5,0) -- (1.5,0);
    \draw [thick, black] (0,-0.5) -- (0,1.5);

    \draw[style=help lines,dashed] (-1,-2) grid[step=1cm] (5,3); 

    \draw [thick, black,-latex] (0,0) -- (1,1); 
    \draw [thick, black,-latex] (1,1) -- (2,1); 
    \draw [thick, black,-latex] (2,1) -- (3,2); 
    \draw [thick, black,-latex] (3,2) -- (4,1); 
    \draw [thick, black,-latex] (4,1) -- (5,0); 

    \node[draw,circle,inner sep=2pt,fill] at (0,0) {}; 
    \node[draw,circle,inner sep=2pt,fill] at (1,1) {};
    \node[draw,circle,inner sep=2pt,fill] at (2,1) {};
    \node[draw,circle,inner sep=2pt,fill] at (3,2) {};
    \node[draw,circle,inner sep=2pt,fill] at (4,1) {};
    \node[draw,circle,inner sep=2pt,fill] at (5,0) {};

  \end{tikzpicture}
  \caption{The path corresponding to $(0,1,1,2,1,0) \in \mathcal{P}^5\left( \delta \right)$}
\end{figure}

\begin{figure}
  \centering
  \begin{tikzpicture}
    \coordinate (XAxisMin) at (-1,0);
    \coordinate (XAxisMax) at (6,0);
    \coordinate (YAxisMin) at (0,-1);
    \coordinate (YAxisMax) at (0,3);
    \draw [thick, black] (-0.5,0) -- (1.5,0);
    \draw [thick, black] (0,-0.5) -- (0,1.5);

    \draw[style=help lines,dashed] (-1,-2) grid[step=1cm] (6,3);

    \draw[thick, black,-latex] (0,0) -- (1,1); 
    \draw[thick, black,-latex] (1,1) -- (2,0); 
    \draw[thick, black,-latex] (2,0) -- (3,-1); 
    \draw[thick, black,-latex] (3,-1) -- (4,-1); 
    \draw[thick, black,-latex] (4,-1) -- (5,0); 
    \draw[thick, black,-latex] (5,0) -- (6,0); 

    \node[draw,circle,inner sep=2pt,fill] at (0,0) {};
    \node[draw,circle,inner sep=2pt,fill] at (1,1) {};
    \node[draw,circle,inner sep=2pt,fill] at (2,0) {};
    \node[draw,circle,inner sep=2pt,fill] at (3,-1) {};
    \node[draw,circle,inner sep=2pt,fill] at (4,-1) {};
    \node[draw,circle,inner sep=2pt,fill] at (5,0) {};
    \node[draw,circle,inner sep=2pt,fill] at (6,0) {};
  \end{tikzpicture}
  \caption{The path corresponding to $(0,1,0,-1,-1,0,0) \in \mathcal{P}^6\left( \delta+\delta^1 \right)$}
\end{figure}

We show in Proposition \ref{coef_calc} below that for odd $k$
$$p^k\left( \delta \right)=\frac{k!}{\left(\frac{k-1}{2} \right)!^2},$$
and for even $k$ and $j \in \mathbb{N}$
$$p^{k}\left(\delta+\delta^{j}\right)=2\cdot\sum_{i,\ell,m}\binom{2m+i}{m}\binom{2\ell+j}{\ell}\binom{k-2-2m-i-2\ell-j}{\frac k2-1-m-\ell},$$
where the sum is extended over all integers $i,\ell,m$ for which the
expression is meaningful (i.e.\ $m,m+i,\ell,\ell+j,m+\ell,m+\ell+i+j\in\left[0,\frac{k}{2}-1\right]$).

We now break Theorem \ref{thm:main} into four propositions.

\begin{propos} \label{propos:NoCLT}
For any odd positive integer, $m$, any $\alpha>\alpha_c^\bullet$, and $P \in \bullet_m$ (where $\bullet=\mathcal{Q}, \mathcal{E}, \mathcal{Q}^\perp$),
$\Tr{P(H_{\alpha,N})}-\Ex{\Tr{P(H_{\alpha,N})}}$ converges a.s.\ to a random variable with finite variance.
\end{propos}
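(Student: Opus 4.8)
The plan is to show that for $\alpha > \alpha_c^\bullet$ the random variables $Y_N := \Tr{P(H_{\alpha,N})} - \Ex{\Tr{P(H_{\alpha,N})}}$ form an almost surely (and $L^2$) convergent sequence, by exhibiting the pointwise limit as a convergent series and controlling the tail. The starting point is the path-counting description: $\Tr{H_{\alpha,N}^k}$ is a polynomial in $V_1,\dots,V_N$ whose monomials $V^\beta$ are indexed by paths, and by Corollary \ref{oddeven} only odd (resp.\ even) powers of the $V_n$ appear when $k$ is odd (resp.\ even). Writing $P = \sum_k c_k x^k$ and expanding, $Y_N$ is a (finite-degree, but $N$-dependent) polynomial in the independent centered bounded variables $X_n/n^\alpha$, $1 \le n \le N$, with all constant and — after centering — appropriately many lower-order terms removed. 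The key structural input is the decay exponent of the lowest-degree surviving monomials: degree $\ge 1$ in general, degree $\ge 2$ when $P \in \mathcal{E}_m$, degree $\ge 3$ when $P \in \mathcal{Q}_m^\perp$ (this is exactly the discussion following \eqref{eq:SignChange} in the introduction, and is where $\alpha_c^{\mathcal Q} = 1/2$, $\alpha_c^{\mathcal E} = 1/4$, $\alpha_c^{\mathcal{Q}^\perp} = 1/6$ come from, via $\alpha > 1/(2d)$ when the lowest power is $d$).

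Concretely, I would first record that for each fixed multi-index $\beta$ (supported on $\mathbb N$) the coefficient of $V^\beta$ in $\Tr{H_{\alpha,N}^k}$ stabilizes once $N$ exceeds the support of $\beta$; call this coefficient $a_\beta^{(k)}$, a nonnegative integer counting paths, which by Remark \ref{shift_rmk} depends only on the "shape" of $\beta$ and satisfies a uniform polynomial-in-$k$ bound on the number of shapes of a given degree with a given diameter. Then $Y_N = \sum_{\beta : \operatorname{supp}\beta \subseteq [1,N],\ |\beta|\ge d_\bullet} a_\beta (V^\beta - \Ex{V^\beta})$ where $d_\bullet \in \{1,2,3\}$ and $a_\beta = \sum_k c_k a_\beta^{(k)}$; actually, since the $X_n$ are centered and independent, any $\beta$ with a coordinate equal to $1$ already has $\Ex{V^\beta}=0$, so only $\beta$ with all nonzero coordinates $\ge 2$ contribute to the centering term. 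The candidate limit is the formal series $Y_\infty = \sum_{\beta} a_\beta (V^\beta - \Ex{V^\beta})$ over all finitely supported $\beta$ with $|\beta| \ge d_\bullet$; I must show this converges a.s.\ and in $L^2$, and that $Y_N \to Y_\infty$.

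The analytic heart is an $L^2$ estimate. Since distinct monomials $V^\beta - \Ex{V^\beta}$ are orthogonal in $L^2$ only up to the fact that $X_n^j$ and $X_n^{j'}$ for $j \ne j'$ need not be uncorrelated; the clean statement is that $\Ex{(V^\beta - \Ex V^\beta)(V^\gamma - \Ex V^\gamma)}$ vanishes unless $\operatorname{supp}\beta = \operatorname{supp}\gamma$, and in general is bounded by $C^{|\beta|+|\gamma|} \prod_{n} n^{-\alpha(\beta_n + \gamma_n)}$ using $|X_n| < C$. Hence $\Var{Y_N} \le \sum_{\beta,\gamma :\ \operatorname{supp}\beta=\operatorname{supp}\gamma} |a_\beta||a_\gamma| C^{|\beta|+|\gamma|}\prod_n n^{-\alpha(\beta_n+\gamma_n)}$, and grouping by the support set $S$ and using $|a_\beta| \le (\deg P)\max_k a_\beta^{(k)}$ together with the path-count bounds, this is dominated by a constant times $\sum_S \big(\sum_{n \in S} n^{-\alpha d_\bullet/|S|}\big)^{\cdots}$—more simply, the worst case is $\beta = \gamma = d_\bullet\,\delta^{(n)}$ (all weight on one site), contributing $\sum_n n^{-2\alpha d_\bullet}$, which converges precisely when $\alpha > 1/(2 d_\bullet) = \alpha_c^\bullet$; the multi-site terms decay even faster. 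This shows $\sup_N \Var{Y_N} < \infty$ and, more importantly, that the tail $Y_N - Y_{N'}$ (for $N > N'$), which only involves $\beta$ meeting $[N'+1,N]$, has $L^2$ norm $\to 0$ as $N' \to \infty$; so $(Y_N)$ is $L^2$-Cauchy, giving an $L^2$ limit with finite variance. For the almost sure statement I would note that $Y_N$ is a martingale with respect to the filtration $\mathcal F_N = \sigma(X_1,\dots,X_N)$ — indeed $\Ex{Y_{N+1}\mid \mathcal F_N} = Y_N$ because every new monomial $V^\beta$ with $N+1 \in \operatorname{supp}\beta$ has $\beta_{N+1}\ge 1$ and $\Ex{X_{N+1}}=0$ — so an $L^2$-bounded martingale converges a.s.\ by the martingale convergence theorem, and the a.s.\ limit coincides with the $L^2$ limit.

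The main obstacle is the bookkeeping in the $L^2$ bound: one must verify that, after summing the path-enumeration coefficients over all $\beta$ with a fixed support set and fixed total degree, the resulting weights do not grow fast enough to overcome the product of $n^{-\alpha\beta_n}$ factors — i.e.\ that the number and size of contributing paths of length $\le m$ with flat steps at a prescribed set of levels is controlled uniformly, which follows because $\Tr{P(H_{\alpha,N})}$ has degree $\le m$ as a polynomial in the $V_n$ with coefficients bounded independently of $N$ (the total coefficient sum is at most $\Tr{P(3 I)}$-type bound, since replacing each $V_n$ by a constant bounds the number of all paths). Once that uniform bound is in hand, the separation of the three critical exponents is automatic from the degree-$d_\bullet$ lower bound on surviving monomials established via Corollary \ref{oddeven} and the definition of $\mathcal Q_m^\perp$.
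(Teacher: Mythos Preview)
Your structural understanding is right --- the lowest surviving monomial degree $d_\bullet\in\{1,2,3\}$ is exactly what makes $\sum_n n^{-2\alpha d_\bullet}<\infty$ the relevant condition --- but the two analytic mechanisms you invoke both fail as stated.

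\textbf{The martingale claim is false.} You assert $\Ex{Y_{N+1}\mid\mathcal F_N}=Y_N$ ``because every new monomial $V^\beta$ with $N+1\in\operatorname{supp}\beta$ has $\beta_{N+1}\ge 1$ and $\Ex{X_{N+1}}=0$''. This reasoning only gives $\Ex{V^\beta\mid\mathcal F_N}=0$ when $\beta_{N+1}=1$. If $\beta_{N+1}\ge 2$ (which certainly occurs: e.g.\ $\beta=2\delta^N+2\delta^{N+1}$ for $k\ge 6$ even) then
\[
\Ex{V^\beta-\Ex{V^\beta}\,\big|\,\mathcal F_N}=\Ex{V_{N+1}^{\beta_{N+1}}}\Big(\prod_{j\le N}V_j^{\beta_j}-\prod_{j\le N}\Ex{V_j^{\beta_j}}\Big),
\]
which is generically nonzero. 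So $(Y_N)$ is not a martingale and the a.s.\ convergence does not follow from $L^2$-boundedness this way. A separate issue: your identification $Y_N=\sum_{\operatorname{supp}\beta\subseteq[1,N]}a_\beta(V^\beta-\Ex{V^\beta})$ with the \emph{stabilized} coefficients $a_\beta$ is not correct near the boundary; for $\beta$ supported within distance $k$ of $1$ or of $N$ the coefficient in $\Tr{H_{\alpha,N}^k}$ differs from $p^k(\beta)$. The paper isolates this discrepancy as a separate term $A^k(N)$ (Lemma~\ref{ARV}) and shows it converges a.s.\ on its own.

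\textbf{The orthogonality claim is false.} You write that $\Cov{V^\beta,V^\gamma}=0$ unless $\operatorname{supp}\beta=\operatorname{supp}\gamma$. Counterexample: $\beta=2\delta^1+2\delta^2$, $\gamma=2\delta^2$ give $\Cov{V^\beta,V^\gamma}=\Ex{V_1^2}\Var{V_2^2}\ne 0$. Centered monomials are not Wick products; the covariance structure is messier than you assume, and the subsequent ``worst case is all weight on one site'' bound is not justified.

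The paper avoids both pitfalls by not attempting a global $L^2$/martingale argument. Instead it groups monomials by \emph{shape}: write $\Tr{H_{\alpha,N}^k}-\Ex{\cdot}=\sum_{\iota(\beta)=0}p^k(\beta)\sum_{i=1}^N\big(V_\alpha^{\beta^i}-\Ex{V_\alpha^{\beta^i}}\big)+A^k(N)$, note that for each fixed shape $\beta$ the inner sum is an $m$-dependent sequence with $\Var{V_\alpha^{\beta^i}}\le C\,i^{-2|\beta|\alpha}$, split it into finitely many independent subsequences, and apply Kolmogorov's one-series theorem to each (Lemma~\ref{using:kol}). Since only finitely many shapes contribute, this gives a.s.\ convergence directly, with no need for either orthogonality or a martingale structure.
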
 

For the next propositions, we will use the following notation: given an $(m+1)\times (m+1)$ matrix $\left( C(k,\ell)\right)_{k,\ell=0}^m$ and $P(x)=\sum_{k=0}^m a_k x^k \in\mathcal{V}_m$ we define
\begin{equation} \nonumber
CP(x)=\sum_{k=0}^m \sum_{\ell=0}^m C(k,\ell)a_\ell x^k.
\end{equation}
Recall that $Var(X_n)=\eta^2$.

\begin{propos}\label{propos:odd}
For any odd $k\in\N$ and $0< \alpha \le \frac12$,
$$\frac{\Tr{H_{\alpha,N}^k}-\Ex{\Tr{H_{\alpha,N}^k}}}{g_{2\alpha}(N)}\overset{d}{\longrightarrow} N\left(0,\sigma_k^2\right)$$
as $N \rightarrow \infty$, where
$$\sigma_k^2=p^k\left( \delta \right)^2\cdot \eta^2.$$

Moreover, for odd $k,\ell$,
\begin{equation} \label{eq:CQR}
\begin{split}
\underset{N\longrightarrow\infty}{\lim}&\Cov{\frac{\Tr{H_{\alpha,N}^k}-\Ex{\Tr{H_{\alpha,N}^k}}}{g_{2\alpha}(N)},\frac{\Tr{H_{\alpha,N}^\ell}-\Ex{\Tr{H_{\alpha,N}^\ell}}}{g_{2\alpha}(N)}}\\
&= p^k(\delta)p^\ell(\delta) \eta^2.
\end{split}
\end{equation}

Letting $C_{\mathcal O}^{(m)}(k,\ell)$ be defined by \eqref{eq:CQR} in the case that $k$ and $\ell$ are both odd, and $0$ otherwise, it follows that for any $P \in \mathcal{O}_m$

\begin{equation} \nonumber
\frac{\Tr{P\left(H_{\alpha,N}\right)}-\Ex{\Tr{P\left(H_{\alpha,N}\right)}}}{g_{2\alpha}(N)} \overset{d}{\longrightarrow} N\left(0,\sigma_{\mathcal{O}}(P)^2 \right)
\end{equation}
as $N \rightarrow \infty$, where
\begin{equation} \label{eq:varianceQ}
\sigma_{\mathcal{O}}(P)^2=\left \langle P, C_{\mathcal{O}}^{(m)} P \right \rangle=\left|\left \langle P,Q_m \right \rangle \right|^2 \eta^2.
\end{equation}

\end{propos}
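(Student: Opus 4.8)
The plan is to reduce all three assertions to the classical central limit theorem for the single partial sum $S_N:=\sum_{n=1}^N V_n=\sum_{n=1}^N X_n/n^\alpha$. First I would expand $\Tr{H_{\alpha,N}^k}$ over closed nearest-neighbour walks $z_0,\dots,z_k$ on $\{1,\dots,N\}$ with $z_0=z_k$: an up or down step contributes the factor $1$, a flat step at level $m$ contributes $V_m$, so $\Tr{H_{\alpha,N}^k}=\sum_\beta p^{(N)}_\beta V^\beta$, where $p^{(N)}_\beta$ counts the length-$k$ closed walks on $\{1,\dots,N\}$ with flat-step profile $\beta$. A length-$k$ closed walk stays within $O(k)$ of its flat-step levels, so translating such a walk down to the origin sets up a bijection with $\mathcal P^k(\beta)$; hence $p^{(N)}_\beta=p^k(\beta)$ whenever $\operatorname{supp}\beta$ avoids the two boundary blocks of size $O(k)$, while on those blocks one has only $p^{(N)}_\beta\le p^k(\beta)$ (and $p^{(N)}_\beta=0$ unless $|\beta|\le k$). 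For odd $k$ every $\beta$ that occurs has $|\beta|$ odd (Corollary \ref{oddeven}), and since $\Ex{X_n}=0$ annihilates the degree-one part of the mean, this gives the decomposition
\[
\Tr{H_{\alpha,N}^k}-\Ex{\Tr{H_{\alpha,N}^k}}=p^k(\delta)\,S_N+E_N+R_N,
\]
where $E_N=\sum_n\bigl(p^{(N)}_{\delta^n}-p^k(\delta)\bigr)V_n$ is supported on $O(k)$ boundary indices and $R_N=\sum_{|\beta|\ge 3}p^{(N)}_\beta\bigl(V^\beta-\Ex{V^\beta}\bigr)$.

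Next I would show both error terms become negligible after dividing by $g_{2\alpha}(N)$. Since $|V_n|\le C$ a.s., $E_N$ is bounded a.s.\ by a constant, and $g_{2\alpha}(N)\to\infty$ for $0<\alpha\le\tfrac12$, so $E_N/g_{2\alpha}(N)\to 0$ a.s. For $R_N$, only finitely many "shapes" of $\beta$ with $|\beta|\ge 3$ occur (each translated along $\N$), and two summands $V^\beta-\Ex{V^\beta}$, $V^{\beta'}-\Ex{V^{\beta'}}$ are independent once the length-$\le k$ index windows of $\beta$ and $\beta'$ are disjoint; thus the covariance matrix of the summands is banded with bandwidth $<k$, so only $O(N)$ covariance entries survive. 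Bounding each by Cauchy–Schwarz, using $\Ex{(V^\beta)^2}\le\mathrm{const}\cdot m^{-6\alpha}$ with $m=\min\operatorname{supp}\beta$ (the exponent $6\alpha$ coming from $|\beta|\ge 3$) and $p^{(N)}_\beta\le 3^k$, gives $\Var{R_N}\le\mathrm{const}\cdot\sum_{n=1}^N n^{-6\alpha}$. A case check ($6\alpha<1$, $=1$, $>1$) shows $\sum_{n\le N}n^{-6\alpha}=o\bigl(\sum_{n\le N}n^{-2\alpha}\bigr)$ for $0<\alpha\le\tfrac12$, and by \eqref{eq:AysmptoticSum} the latter sum is $(1+o(1))g_{2\alpha}(N)^2$; hence $\Var{R_N}=o(g_{2\alpha}(N)^2)$ and $R_N/g_{2\alpha}(N)\to 0$ in probability by Chebyshev.

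Then I would invoke the CLT for $S_N$: it is a sum of independent mean-zero variables bounded uniformly by $C$, with $\Var{S_N}=\eta^2\sum_{n\le N}n^{-2\alpha}=(1+o(1))\eta^2 g_{2\alpha}(N)^2$; since the summands are uniformly bounded while $g_{2\alpha}(N)\to\infty$, Lindeberg's condition holds trivially, so $S_N/g_{2\alpha}(N)\overset{d}{\longrightarrow}N(0,\eta^2)$. By the previous paragraph and Slutsky's theorem, $g_{2\alpha}(N)^{-1}\bigl(\Tr{H_{\alpha,N}^k}-\Ex{\Tr{H_{\alpha,N}^k}}\bigr)=p^k(\delta)\,S_N/g_{2\alpha}(N)+o_{\P}(1)\overset{d}{\longrightarrow}N(0,p^k(\delta)^2\eta^2)$, which is the first claim. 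Running the same decomposition for odd $\ell$ and expanding the covariance, all cross terms (e.g.\ $\Cov{S_N,R_N}$ or $\Cov{E_N^{(k)},R_N^{(\ell)}}$) are $o(g_{2\alpha}(N)^2)$ by Cauchy–Schwarz ($\le\sqrt{O(g_{2\alpha}^2)\cdot o(g_{2\alpha}^2)}$), while $p^k(\delta)p^\ell(\delta)\Var{S_N}/g_{2\alpha}(N)^2\to p^k(\delta)p^\ell(\delta)\eta^2$, giving \eqref{eq:CQR}.

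Finally, for $P=\sum_{k\ \mathrm{odd}}a_k x^k\in\mathcal O_m$, linearity of the trace and the above give $g_{2\alpha}(N)^{-1}\bigl(\Tr{P(H_{\alpha,N})}-\Ex{\Tr{P(H_{\alpha,N})}}\bigr)=\bigl(\sum_k a_k p^k(\delta)\bigr)S_N/g_{2\alpha}(N)+o_{\P}(1)$, hence a Gaussian limit with variance $\bigl(\sum_k a_k p^k(\delta)\bigr)^2\eta^2$; since by Proposition \ref{coef_calc} one has $p^{2j+1}(\delta)=(2j+1)!/j!^2$, which is exactly the coefficient of $x^{2j+1}$ in $Q_m$ (see \eqref{Qdef}), this sum equals $\langle P,Q_m\rangle$, and also $\langle P,C^{(m)}_{\mathcal O}P\rangle=\eta^2\sum_{k,\ell\ \mathrm{odd}}p^k(\delta)p^\ell(\delta)a_k a_\ell=\eta^2|\langle P,Q_m\rangle|^2$, which is \eqref{eq:varianceQ}. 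The main obstacle is the combinatorial bookkeeping underlying the first two paragraphs: pinning down the \emph{exact} coefficient $p^k(\delta)$ of the linear term through the bulk/boundary dichotomy for truncated walks, and controlling $\Var{R_N}$ via the banded covariance structure together with the power count that turns $|\beta|\ge 3$ into the exponent $6\alpha$, for which $\sum n^{-6\alpha}=o\bigl(\sum n^{-2\alpha}\bigr)$ precisely in the range $\alpha\le\tfrac12$ where $g_{2\alpha}(N)\to\infty$.
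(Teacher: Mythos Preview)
Your proof is correct and follows the same overall architecture as the paper's: isolate the degree-one contribution $p^k(\delta)\sum_{n}V_n$, show the boundary correction and the higher-degree remainder are negligible on the scale $g_{2\alpha}(N)$, apply a CLT to the main term, and then combine linearly for $P\in\mathcal O_m$ and identify $\sum_k a_k p^k(\delta)=\langle P,Q_m\rangle$.

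The one genuine difference is in the CLT machinery. The paper packages the limit theorem into a general Lemma~\ref{clt_lemma} based on Orey's CLT for $m$-dependent sequences (Proposition~\ref{thm:orey}); this lemma is then reused verbatim in the proofs of Propositions~\ref{propos:even} and~\ref{propos:perp}, where the leading sums genuinely involve $m$-dependent (not independent) summands such as $V_iV_{i+s}$ or $V_iV_{i+s}V_{i+s+t}$. You instead exploit the special feature of the odd case that the leading term is literally $S_N=\sum_n X_n/n^\alpha$, a sum of \emph{independent} bounded variables, so Lindeberg suffices with no $m$-dependence needed. Similarly, for the remainder you give a direct banded-covariance bound $\Var{R_N}\le C\sum_n n^{-6\alpha}$, whereas the paper routes this through its general covariance Lemma~\ref{norm_cov} and Corollary~\ref{cor:crossterms}. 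Your argument is therefore slightly more elementary for this particular proposition, at the cost of not immediately generalizing: if you go on to Propositions~\ref{propos:even} and~\ref{propos:perp} you will need an $m$-dependent CLT anyway, which is why the paper sets up the Orey-based lemma once and applies it uniformly.
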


\begin{propos}\label{propos:even}
For any even $k \in \N$ and $0< \alpha \le \frac14$,
$$\frac{\Tr{H_{\alpha,N}^k}-\Ex{\Tr{H_{\alpha,N}^k}}}{g_{4\alpha}(N)}\overset{d}{\longrightarrow} N\left(0,\sigma_k^2\right)$$
as $N \rightarrow \infty$, where
$$\sigma_k^2=\left(p^k(2\delta)\right))^2 \cdot \left(\Ex{X_1^4}-\eta^4 \right)+\left(\sum_{s=1}^{\infty}\left(p^k(\delta+\delta^s)\right)^2\right)\cdot \eta^4.$$
If $k\ge 4$, then $\sigma_k^2>0$ (if $\Var{X_n^2}>0$ then we can allow $k=2$ as well).

Moreover, for even $k,\ell$,
\begin{equation} \label{eq:CE}
\begin{split}
\underset{N\longrightarrow\infty}{\lim}&\Cov{\frac{\Tr{H_{\alpha,N}^k}-\Ex{\Tr{H_{\alpha,N}^k}}}{g_{4\alpha}(N)},\frac{\Tr{H_{\alpha,N}^\ell}-\Ex{\Tr{H_{\alpha,N}^\ell}}}{g_{4\alpha}(N)}}\\
&=p^k(2\delta)p^\ell(2\delta) \left(\Ex{X_1^4}-\eta^4 \right)+\left(\sum_{s=1}^\infty p^k(\delta+\delta^s) p^\ell(\delta+\delta^s)\right)\eta^4.
\end{split}
\end{equation}

Letting $C_{\mathcal{E}}^{(m)}(k,\ell)$ be defined by \eqref{eq:CE} in the case that $k$ and $\ell$ are both even, and $0$ otherwise, it follows that for any $P \in \mathcal{E}_m$,

\begin{equation} \nonumber
\frac{\Tr{P\left(H_{\alpha,N}\right)}-\Ex{\Tr{P\left(H_{\alpha,N}\right)}}}{g_{4\alpha}(N)} \overset{d}{\longrightarrow}N\left(0,\sigma_{\mathcal{E}}(P)^2 \right)
\end{equation}
as $N \rightarrow \infty$, where
\begin{equation} \label{eq:varianceE}
\sigma_{\mathcal{E}}(P)^2=\left \langle P, C_{\mathcal{E}}^{(m)} P \right \rangle.
\end{equation}
If $\deg(P)\ge 4$, or $\Var{X_n^2}>0$ and $P$ is not constant, then $\sigma_{\mathcal{E}}(P)^2>0$.
\end{propos}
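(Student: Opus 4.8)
The plan is to follow the same strategy as for Proposition \ref{propos:odd}: reduce the statement about $\Tr{H_{\alpha,N}^k}$ to a statement about a sum of independent random variables plus a negligible error, and then apply a Lindeberg-type central limit theorem. First I would expand $\Tr{H_{\alpha,N}^k}=\sum_\beta p^k(\beta) V^\beta$ over multi-indices $\beta$ supported on $[1,N]$ (truncating the path-counting picture at the wall), using Corollary \ref{oddeven} to note that since $k$ is even, only $\beta$ with $|\beta|$ even appear, so the lowest-order contributions come from $|\beta|=2$. The terms with $|\beta|=2$ are of two types: $V_n^2$ with coefficient $p^k(2\delta)$ (independent of $n$, up to boundary effects), contributing $p^k(2\delta)\sum_{n}\frac{X_n^2}{n^{2\alpha}}$; and $V_nV_{n+s}$ with coefficient $p^k(\delta+\delta^s)$, contributing $\sum_{s\ge 1}p^k(\delta+\delta^s)\sum_n \frac{X_nX_{n+s}}{n^\alpha (n+s)^\alpha}$. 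After centering, the leading fluctuating object is
\begin{equation} \nonumber
S_N:=p^k(2\delta)\sum_{n=1}^N \frac{X_n^2-\eta^2}{n^{2\alpha}}+\sum_{s\ge 1}p^k(\delta+\delta^s)\sum_{n=1}^N \frac{X_nX_{n+s}}{n^\alpha(n+s)^\alpha},
\end{equation}
where the $s$-sum is effectively finite for finite $N$. One then shows the contribution of all $\beta$ with $|\beta|\ge 4$, as well as boundary corrections to the coefficients, is $o(g_{4\alpha}(N))$ in $L^2$, so it suffices to prove a CLT for $S_N/g_{4\alpha}(N)$.

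The second step is the variance computation. Group the summands of $S_N$ by the largest index involved: writing $Z_n$ for the part of $S_N$ "created" at index $n$ (namely $p^k(2\delta)\frac{X_n^2-\eta^2}{n^{2\alpha}}$ plus $\sum_{s\ge 1}p^k(\delta+\delta^s)\frac{X_{n-s}X_n}{(n-s)^\alpha n^\alpha}$), the $Z_n$ are independent, and a direct computation gives $\Var{Z_n}=\frac{1}{n^{4\alpha}}\left(p^k(2\delta)^2(\Ex{X_1^4}-\eta^4)+\sum_{s\ge 1}p^k(\delta+\delta^s)^2\eta^4\right)+O(n^{-4\alpha-\epsilon})$, using $\Ex{X_{n-s}X_n}=0$ and $(n-s)^{-2\alpha}\sim n^{-2\alpha}$ to control cross terms. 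Summing over $n$ and invoking \eqref{eq:AysmptoticSum} yields $\Var{S_N}\sim \sigma_k^2 g_{4\alpha}(N)^2$ with $\sigma_k^2$ as stated; the covariance formula \eqref{eq:CE} follows from the same bookkeeping applied to the bilinear form. For the Lindeberg condition, boundedness of the $X_n$ gives $|Z_n|\le C' n^{-2\alpha}$, and since $g_{4\alpha}(N)^2\to\infty$ (as $4\alpha\le 1$) while $\max_{n\le N}|Z_n|/g_{4\alpha}(N)\to 0$ (when $4\alpha<1$ the max is $O(1)=o(g_{4\alpha}(N))$; when $4\alpha=1$ it is $O(1)=o(\sqrt{\log N})$), the Lindeberg–Feller theorem applies and gives the CLT; the joint convergence over several $k$ then follows from the Cramér–Wold device together with \eqref{eq:CE}. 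Passing to a general $P\in\mathcal{E}_m$ is linearity: $\Tr{P(H_{\alpha,N})}$ is the corresponding linear combination of the $\Tr{H_{\alpha,N}^k}$, so the limit is Gaussian with variance $\langle P, C_{\mathcal{E}}^{(m)}P\rangle$.

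The last and most delicate step is positivity of $\sigma_{\mathcal{E}}(P)^2$ under the stated hypotheses. If $\Var{X_n^2}>0$ then $\Ex{X_1^4}-\eta^4>0$, and since $p^k(2\delta)>0$ for every even $k$ (the constant path with all flat steps contributes, among others), every diagonal term $C_{\mathcal{E}}^{(m)}(k,k)$ has a strictly positive piece; a short argument that the relevant Gram-type matrix is positive definite on the non-constant part then gives $\sigma_{\mathcal{E}}(P)^2>0$ for all non-constant $P$. When $\Var{X_n^2}=0$ the first term drops and one must show the quadratic form $P\mapsto \sum_{s\ge1}\big(\sum_k p^k(\delta+\delta^s)a_k\big)^2\eta^4$ is positive definite on $\mathcal{E}_m$ for $\deg P\ge 4$; equivalently, that the vectors $\big(p^k(\delta+\delta^s)\big)_{s\ge1}$, for $k=2,4,\dots,m-1$... — actually for even $k$ up to $\deg P$ — are linearly independent. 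I expect this to be the main obstacle: it requires extracting enough about the asymptotics or exact values of $p^k(\delta+\delta^s)$ in $s$ (e.g.\ leading behavior as $s\to\infty$, or the value at $s=1$) from the explicit formula in Proposition \ref{coef_calc} to separate different degrees $k$, and to verify that the one genuinely degenerate direction is exactly the exceptional polynomial flagged in the remark (which for $\mathcal{E}_m$ means: there is none beyond constants once $\deg P\ge 4$). This is a finite linear-algebra computation but a genuinely model-specific one, and it is where the combinatorial identities of Section 3 do the real work.
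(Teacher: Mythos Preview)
Your overall architecture matches the paper's: isolate the $|\beta|=2$ contribution, show the remainder is $o(g_{4\alpha}(N))$ in $L^2$, apply a CLT, and compute covariances. Two points deserve correction.

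\medskip
\textbf{The independence claim is false.} Your $Z_n = p^k(2\delta)\frac{X_n^2-\eta^2}{n^{2\alpha}}+\sum_{s\ge 1}p^k(\delta+\delta^s)\frac{X_{n-s}X_n}{(n-s)^\alpha n^\alpha}$ depends on $X_{n},X_{n-1},\ldots,X_{n-(k-2)/2}$, so $Z_n$ and $Z_{n+1}$ share the variable $X_n$ and are \emph{not} independent. What is true is that the $Z_n$ are uncorrelated (your covariance checks go through) and, more usefully, form a bounded martingale difference sequence with respect to $\mathcal F_n=\sigma(X_1,\ldots,X_n)$; alternatively they are $m$-dependent with $m=(k-2)/2$. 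Lindeberg--Feller for independent summands therefore does not apply as stated. The paper handles this by packaging the CLT as Lemma~\ref{clt_lemma}, which invokes Orey's CLT for $m$-dependent sequences (Proposition~\ref{thm:orey}); a martingale CLT would work equally well with your decomposition. Either fix is short, but you must make one.

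\medskip
\textbf{Positivity: you are working much harder than necessary.} You propose to show that the vectors $\big(p^k(\delta+\delta^s)\big)_{s\ge1}$ for even $k\le m-1$ are linearly independent, and flag this as the main obstacle. The paper bypasses this entirely with a one-line trick: set $d=\deg P$ and take $\widetilde\beta=\delta+\delta^{d/2-1}$. A path counted by $p^k(\widetilde\beta)$ has two flat steps at levels differing by $d/2-1$, so it needs at least $2(d/2-1)+2=d$ steps; hence $p^k(\widetilde\beta)=0$ for every even $k<d$ and $p^d(\widetilde\beta)>0$. Thus the coefficient of $V_\alpha^{\widetilde\beta^i}$ in \eqref{even_poly_decomp} is exactly $a_d\,p^d(\widetilde\beta)\neq0$. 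Since distinct $\beta,\gamma\in B_2$ give uncorrelated $V_\alpha^{\beta^i},V_\alpha^{\gamma^j}$, the $\widetilde\beta$-piece is uncorrelated with the rest and contributes $a_d^2\,p^d(\widetilde\beta)^2\,\eta^4>0$ to the limiting variance when $d\ge4$ (and $a_d^2\,p^d(2\delta)^2(\Ex{X_1^4}-\eta^4)>0$ when $d=2$ and $\Var{X_1^2}>0$). No linear-algebra computation with the explicit formulas from Proposition~\ref{coef_calc} is needed.
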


For an odd $k \geq 3$, define
\begin{equation} \label{perpbasis}
P_k(x)=x^k-p^k(\delta)x.
\end{equation}

\begin{lemma} \label{lemma:perpdegree}
For any odd $m$, $P_3,P_5,...,P_m$ are a basis for $\mathcal{Q}_m^{\perp}$.
\end{lemma}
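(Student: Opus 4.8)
The plan is to reduce the claim to three elementary facts about the standard coefficient inner product on $\mathcal{V}_m$: first, that each $P_k$ with $3\le k\le m$ lies in $\mathcal{Q}_m^\perp$; second, that $P_3,P_5,\dots,P_m$ are linearly independent; and third, that their number equals $\dim\mathcal{Q}_m^\perp$. A linearly independent family of the correct cardinality inside a finite-dimensional space is automatically a basis, so these three facts suffice.

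For the first point, observe that $P_k(x)=x^k-p^k(\delta)x$ is a polynomial in odd powers of $x$ of degree $k\le m$, hence $P_k\in\mathcal{O}_m$; it remains to check $\langle P_k,Q_m\rangle=0$. From the definition \eqref{Qdef}, the coefficient of $x$ in $Q_m$ is $\frac{1!}{0!^2}=1$, and for odd $k$ with $3\le k\le m$ the coefficient of $x^k$ in $Q_m$ (taking $j=\frac{k-1}{2}$) is $\frac{k!}{\left(\frac{k-1}{2}\right)!^2}$, which is precisely $p^k(\delta)$ by Proposition \ref{coef_calc}. Since the inner product pairs coefficient vectors, $\langle x^k,Q_m\rangle=p^k(\delta)$ and $\langle x,Q_m\rangle=1$, so $\langle P_k,Q_m\rangle=p^k(\delta)-p^k(\delta)\cdot 1=0$, i.e.\ $P_k\in\mathcal{Q}_m^\perp$.

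For the second point, suppose $\sum_k c_kP_k=0$ with the sum over odd $k\in[3,m]$; comparing the coefficient of $x^k$ for each such $k$, which receives a contribution only from $P_k$ itself (as $k\ge 3$, so the $x$-terms of the other $P_{k'}$ are irrelevant), forces $c_k=0$, so the family is independent. For the third point, $\mathcal{O}_m$ has dimension $\frac{m+1}{2}$ with basis $x,x^3,\dots,x^m$, and $\mathcal{Q}_m^\perp$ is the orthogonal complement in $\mathcal{O}_m$ of the one-dimensional space $\mathcal{Q}_m$, so $\dim\mathcal{Q}_m^\perp=\frac{m-1}{2}$; and the list $P_3,P_5,\dots,P_m$ has exactly $\frac{m-1}{2}$ members. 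The only step needing input beyond linear algebra is matching the coefficient of $x^k$ in $Q_m$ with $p^k(\delta)$, and this is immediate from \eqref{Qdef} together with Proposition \ref{coef_calc}; I therefore do not anticipate a genuine obstacle. The lemma is essentially a bookkeeping statement, its purpose being to furnish a working basis of $\mathcal{Q}_m^\perp$ adapted to the monomial-by-monomial analysis, since each $P_k$ is exactly the combination that kills the order-$1$ contribution of $x^k$ to $\Tr{P(H_{\alpha,N})}$.
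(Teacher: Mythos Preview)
Your proof is correct and follows essentially the same approach as the paper: both arguments reduce to checking $\langle P_k,Q_m\rangle=0$ via the identification of the $x^k$-coefficient of $Q_m$ with $p^k(\delta)$ (and the $x$-coefficient with $p^1(\delta)=1$), together with the obvious linear independence and dimension count. The paper's version is terser, simply asserting independence and the dimension of the span, but the content is identical.
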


\begin{proof}
Since $P_3,...,P_m$ are linearly independent and they span a subspace of dimension $\frac{m-1}{2}$, it is enough to show that these polynomials are in $\mathcal{Q}_m^{\perp}$.
Observe that $Q_m(x)=\sum_{j=0}^m p^j(\delta)x^j$ and $p^1(\delta)=1$, therefore
$$\left<P_k,Q_m \right>=p^k(\delta)\cdot 1 - 1\cdot p^k(\delta)=0.$$

\end{proof}

The fluctuations of $P \in \mathcal{Q}_m^{\perp}$ involve several types of paths. We use the following notation to categorize and enumerate them, where $k,\ell\geq 3$ are odd integers.
\begin{equation} \label{eq:q_6}
q_6=p^k(3\delta)p^{\ell}(3\delta).
\end{equation}
\begin{equation} \label{eq:q_32}
q_{3,3}=\sum_{t=1}^\infty \left(p^k(2\delta+\delta^t)p^\ell(\delta+2\delta^t) + p^\ell(2\delta+\delta^t)p^k(\delta+2\delta^t)\right).
\end{equation}
\begin{equation} \label{eq:q_42}
\begin{split}
q_{4,2}& =\sum_{t=1}^\infty \left(p^k(3\delta)p^\ell(2\delta+\delta^t) + p^\ell(3\delta)p^k(2\delta+\delta^t) \right) \\
&\quad + \sum_{s=1}^\infty \left(p^k(3\delta)p^\ell(\delta+2\delta^s) + p^\ell(3\delta)p^k(\delta+2\delta^s)\right) \\
&\quad + \sum_{t=1}^\infty p^k(2\delta +\delta^t)p^\ell(2\delta +\delta^t)\\
&\quad + \sum_{s=1}^\infty p^k(\delta+2\delta^s)p^\ell(\delta+2\delta^s). \\
\end{split}
\end{equation}
\begin{equation} \label{q_222}
\begin{split}
q_{2,2,2}&=\sum_{s,t=1}^\infty \left(p^k(2\delta+\delta^{s+t})p^\ell(2\delta+\delta^t) + p^\ell(2\delta+\delta^{s+t})p^k(2\delta+\delta^t) \right)\\
&\quad + \sum_{s,t=1}^\infty \left(p^k(2\delta+\delta^t)p^\ell(\delta+2\delta^s) + p^\ell(2\delta+\delta^t)p^k(\delta+2\delta^s)\right) \\
&\quad + \sum_{s,t=1}^\infty \left(p^k(\delta+2\delta^s)p^\ell(\delta+2\delta^{s+t}) + p^\ell(\delta+2\delta^s)p^k(\delta+2\delta^{s+t})\right) \\
&\quad + \sum_{s,t=1}^\infty p^k(\delta+\delta^s+\delta^{s+t})p^\ell(\delta+\delta^s+\delta^{s+t}).
\end{split}
\end{equation}
Note that $p^k\left( \delta+\delta^s+\delta^{s+t}\right)=0$ whenever $s+t>(k-3)/2$, so all the sums above are in fact finite.

\begin{propos}\label{propos:perp}
Let $0<\alpha\leq\frac16$. For any odd $k,\ell\geq 3$, let
\begin{equation} \label{eq:M_kl}
\begin{split}
M_{k,\ell}& = q_6\left(\Ex{X_1^6}-\Ex{X_1^3}^2\right)+q_{3,3}\Ex{X_1^3}^2 \\
&\quad  + q_{4,2}\Ex{X_1^4}\Ex{X_1^2}+q_{2,2,2}\Ex{X_1^2}^3.
\end{split}
\end{equation}

Then for any odd $k\geq 3$,
$$\frac{\Tr{P_k(H_{\alpha,N})}-\Ex{\Tr{P_k(H_{\alpha,N})}}}{g_{6\alpha}(N)}\overset{d}{\longrightarrow} N\left(0,\sigma_{\mathcal Q^\perp}(P_k)^2 \right),$$
where $\sigma_{\mathcal Q^\perp}(P_k)^2=M_{k,k}>0$.

Moreover, for odd $k,\ell$
\begin{equation} \nonumber
\begin{split}
& \underset{N\longrightarrow\infty}{\lim}\Cov{\text{\scalebox{0.9}
{$\frac{\Tr{P_k(H_{\alpha,N})}-\Ex{\Tr{P_k(H_{\alpha,N})}}}{g_{6\alpha}(N)},\frac{\Tr{P_\ell(H_{\alpha,N})}-\Ex{\Tr{P_\ell(H_{\alpha,N})}}}{g_{6\alpha}(N)}$}}}\\
& \quad =M_{k,\ell}.
\end{split}
\end{equation}

Fixing an odd $m$, for any $0\leq k,\ell\leq m$, letting $C_{\mathcal Q^\perp}^{(m)}(k,\ell)=M_{k,\ell}$ if $k$ and $\ell$ are both odd and $\geq 3$, and $C_{\mathcal Q^\perp}^{(m)}(k,\ell)=0$ otherwise, it follows that for any $P \in \mathcal{Q}_m^{\perp}$,

\begin{equation} \nonumber
\frac{\Tr{P\left(H_{\alpha,N}\right)}-\Ex{\Tr{P\left(H_{\alpha,N}\right)}}}{g_{6\alpha}(N)} \overset{d}{\longrightarrow} N\left(0,\sigma_{\mathcal Q^\perp}(P)^2 \right)
\end{equation}
as $N \rightarrow \infty$, where
\begin{equation} \label{eq:varianceE}
\sigma_{\mathcal{Q}^{\perp}}(P)^2=\left \langle P, C_{\mathcal Q^{\perp}}^{(m)} P \right \rangle .
\end{equation}
If $\Var{X_n^2}=0$ and $P(x)=a\left(\frac{x^5}{5}-4x^3+18x\right)$ for some constant $a$, then $\sigma_{\mathcal Q^\perp}(P)^2=0$. Otherwise, $\sigma_{\mathcal Q^\perp}(P)^2>0$.

\end{propos}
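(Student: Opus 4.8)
The plan is to show that $\Tr{P(H_{\alpha,N})}$ is, up to an error of smaller variance, the sum of the degree-$3$ monomials (in $V_n=X_n/n^\alpha$) in its expansion, to compute the limiting covariance of these degree-$3$ parts and identify it with the matrix $M_{k,\ell}$ of \eqref{eq:M_kl}, and to deduce the CLT from the fact that they are finite-range functionals of the i.i.d.\ sequence $\{X_n\}$. I would first treat the basis elements $P_k=x^k-p^k(\delta)x$ of $\mathcal Q_m^\perp$ (Lemma~\ref{lemma:perpdegree}) and pass to $P=\sum_k a_kP_k$ by linearity and Cram\'er--Wold. \emph{Step 1 (reduction to degree $3$).} By Proposition~\ref{coef_calc} (with Remark~\ref{shift_rmk} and Corollary~\ref{oddeven}), $\Tr{H_{\alpha,N}^k}$ is a polynomial in $V_1,\dots,V_N$ containing only monomials of odd degree in which the coefficient of $V^\beta$ equals $p^k(\beta)$ whenever $\mathrm{supp}\,\beta$ lies in a bulk $[R_k,N-R_k]$, $R_k=O(k)$. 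Hence in $\Tr{P_k(H_{\alpha,N})}=\Tr{H_{\alpha,N}^k}-p^k(\delta)\sum_{n=1}^N V_n$ the degree-$1$ terms cancel up to an $O(k)$-term boundary remainder — this is precisely the meaning of $P_k\in\mathcal Q_m^\perp$. Writing $\Tr{P_k(H_{\alpha,N})}-\Ex{\Tr{P_k(H_{\alpha,N})}}=S_N^{(k)}+R_N^{(k)}$ with
$$S_N^{(k)}=\sum_{|\beta|=3,\ \mathrm{supp}\,\beta\subseteq[1,N]}p^k(\beta)\bigl(V^\beta-\Ex{V^\beta}\bigr)$$
and $R_N^{(k)}$ collecting the boundary remainder and all monomials of degree $\ge5$, the bound $|X_n|\le C$ gives $\Var{R_N^{(k)}}=O(1)+O\bigl(\sum_{n\le N}n^{-10\alpha}\bigr)$, which for $0<\alpha\le\tfrac16$ is $o\bigl(g_{6\alpha}(N)^2\bigr)$. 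By Slutsky it suffices to analyze $S_N^{(k)}$.

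\emph{Step 2 (limiting covariance).} For $|\beta|=|\beta'|=3$, independence and $\Ex{X_n}=0$ force $\Cov{V^\beta,V^{\beta'}}=0$ unless every index has combined multiplicity $\ge2$ in $\beta+\beta'$, i.e.\ unless $\beta+\beta'$ realizes one of the partitions $6$, $4+2$, $3+3$, $2+2+2$ of $6$; evaluating moments, $\Cov{V^\beta,V^{\beta'}}$ then equals, up to a $1+o(1)$ factor, one of $\Ex{X_1^6}-\Ex{X_1^3}^2$, $\Ex{X_1^4}\Ex{X_1^2}$, $\Ex{X_1^3}^2$, $\Ex{X_1^2}^3$ divided by $n^{6\alpha}$, $n$ being the least index of the common support, according to the partition type. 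Summing over $n$ via \eqref{eq:AysmptoticSum}, and enumerating for each partition type all admissible relative positions of $\beta$ and $\beta'$ — which is exactly the bookkeeping in \eqref{eq:q_6}, \eqref{eq:q_32}, \eqref{eq:q_42}, \eqref{q_222} — gives $\Cov{S_N^{(k)},S_N^{(\ell)}}/g_{6\alpha}(N)^2\to M_{k,\ell}$. By bilinearity in $P=\sum_k a_kP_k$, $\Var{\Tr{P(H_{\alpha,N})}-\Ex{\cdot}}/g_{6\alpha}(N)^2\to\langle P,C_{\mathcal Q^\perp}^{(m)}P\rangle=\sigma_{\mathcal Q^\perp}(P)^2$.

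\emph{Step 3 (the CLT).} Grouping the monomials in $S_N^{(k)}$ by the least index of their support writes $S_N^{(k)}=\sum_{n\le N}U_n$, where $U_n$ depends only on $X_n,\dots,X_{n+d}$ with $d=(m-3)/2$ (using $p^k(\delta+\delta^s+\delta^{s+t})=0$ for $s+t>(k-3)/2$, noted before \eqref{eq:M_kl}); thus $\{U_n\}$ is $d$-dependent with $\Ex{U_n}=0$ and $|U_n|\le C_m n^{-3\alpha}\le C_m$. When $M_{k,k}>0$, $\Var{S_N^{(k)}}\sim M_{k,k}g_{6\alpha}(N)^2\to\infty$, so the Lindeberg condition holds trivially (uniformly bounded summands, diverging variance) and the standard CLT for $m$-dependent sequences (Bernstein blocking) yields $S_N^{(k)}/\sqrt{\Var{S_N^{(k)}}}\to N(0,1)$; applying this to an arbitrary combination $\sum_k c_kP_k$ (again a bounded $d$-dependent sum) together with Cram\'er--Wold gives joint asymptotic normality with covariance $M_{k,\ell}$, hence, for $P=\sum_k a_kP_k$, convergence of $\bigl(\Tr{P(H_{\alpha,N})}-\Ex{\cdot}\bigr)/g_{6\alpha}(N)$ to $N\bigl(0,\sigma_{\mathcal Q^\perp}(P)^2\bigr)$. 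When $\sigma_{\mathcal Q^\perp}(P)^2=0$ the assertion is the $L^2$-convergence to $0$ already obtained in Steps 1--2.

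\emph{Step 4 (positivity, the main obstacle).} Write $B(\beta)=\sum_k a_kp^k(\beta)$ for the degree-$3$ coefficients of $\Tr{P(H_{\alpha,N})}$ and set $B=B(3\delta)$, $S=\sum_{t\ge1}B(2\delta+\delta^t)$, $T=\sum_{t\ge1}B(2\delta+\delta^t)^2$, $G=\sum_{s,t\ge1}B(\delta+\delta^s+\delta^{s+t})^2$; using the reflection symmetry of paths ($p^k(2\delta+\delta^t)=p^k(\delta+2\delta^t)$), a direct evaluation at $(P,P)$ of the four quadratic forms in \eqref{eq:q_6}, \eqref{eq:q_32}, \eqref{eq:q_42}, \eqref{q_222} recasts
$$\sigma_{\mathcal Q^\perp}(P)^2=\Phi(B,S)+\bigl(2\Ex{X_1^3}^2+2\Ex{X_1^2}\Var{X_1^2}\bigr)T+\Ex{X_1^2}^3\,G,$$
where $\Phi(u,v)=\Var{X_1^3}\,u^2+4\Ex{X_1^4}\Ex{X_1^2}\,uv+4\Ex{X_1^2}^3\,v^2$ is positive semidefinite, being a diagonal rescaling of the (positive semidefinite) covariance matrix of $(X_1,X_1^3)$. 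All three summands are $\ge0$. If $\Var{X_1^2}>0$ the middle coefficient is $>0$, so $\sigma_{\mathcal Q^\perp}(P)^2=0$ forces $T=0$, hence $S=0$, then $\Phi(B,0)=\Var{X_1^3}B^2=0$ forces $B=0$ (note $\Var{X_1^3}>0$ since $X_1$ is not a.s.\ constant), and $G=0$ forces every $B(\delta+\delta^s+\delta^{s+t})$ to vanish; thus all degree-$3$ coefficients of $\Tr{P(H_{\alpha,N})}$ vanish, which together with $\langle P,Q_m\rangle=0$ forces $P=0$ by an injectivity argument for the path-count functionals $P\mapsto(B(\beta))_{|\beta|\le3}$ (in the spirit of Lemma~\ref{lemma:perpdegree}). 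If $\Var{X_1^2}=0$ (so $X_1=\pm\eta$ and $\Ex{X_1^3}=0$) then $\Phi$ collapses to $\Ex{X_1^2}^3(B+2S)^2$ and the middle term disappears, giving $\sigma_{\mathcal Q^\perp}(P)^2=\Ex{X_1^2}^3\bigl((B+2S)^2+G\bigr)$; its vanishing means $B+2S=0$ and $B(\delta+\delta^s+\delta^{s+t})=0$ for all $s,t$, and solving this linear system inside $\mathcal Q_m^\perp$ — using $p^j(\delta+\delta^s+\delta^{s+t})=0$ for $j\le5$, $p^5(3\delta)=10$, $p^5(2\delta+\delta^1)=5$ and $p^j(\delta)=j!/((j-1)/2)!^2$ — leaves exactly the line $\R\cdot\bigl(\tfrac{x^5}{5}-4x^3+18x\bigr)$. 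This determination of the exceptional direction, together with the injectivity statement, is the delicate part; given Proposition~\ref{coef_calc}, Steps 1--3 are routine.
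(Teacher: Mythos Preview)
Your Steps 1--3 are essentially the paper's argument: the same decomposition into the degree-$3$ part over $B_3$ plus a remainder of smaller variance (the paper's Lemmas~\ref{ARV}, \ref{norm_cov} and Corollary~\ref{cor:crossterms}), followed by Orey's CLT for $m$-dependent sequences (the paper's Lemma~\ref{clt_lemma}), and the case-by-case enumeration of pairs $(\beta,\gamma^j)$ with $\beta,\gamma\in B_3$ whose moments do not factor, yielding $M_{k,\ell}$. The only cosmetic difference is that the paper applies its CLT lemma directly to the linear combination $\sum_k a_k P_k$, rather than going through Cram\'er--Wold.

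Your Step 4, however, takes a genuinely different route from the paper. The paper establishes positivity by a degree trichotomy: for $d=\deg P\ge 7$ it isolates the multi-index $\widetilde\beta=\delta+\delta^1+\delta^{(d-3)/2}$, which is uncorrelated with every other degree-$3$ monomial and has $p^k(\widetilde\beta)=0$ for $k<d$; for $d=3$ it reads off $\sigma^2=\Var{X_1^3}$; and for $d=5$ it introduces auxiliary variables $W_1=\tfrac{1}{\sqrt3}\sum X_i^3$ and $W_2=\tfrac{1}{\sqrt3}\sum_{i\ne j}X_i^2X_j$ whose covariance matrix coincides with the limiting covariance on the basis $\{P_3,\tfrac15 P_5-2P_3\}$, and then argues directly about when $aW_1+bW_2$ is degenerate. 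Your approach instead exploits the reflection identity $p^k(2\delta+\delta^t)=p^k(\delta+2\delta^t)$ to collapse the quadratic form into the single expression
\[
\sigma_{\mathcal Q^\perp}(P)^2=\Phi(B,S)+\bigl(2\Ex{X_1^3}^2+2\Ex{X_1^2}\Var{X_1^2}\bigr)T+\Ex{X_1^2}^3 G,
\]
with $\Phi$ the congruence of the covariance matrix of $(X_1^3,X_1)$ by $\mathrm{diag}(1,2\eta^2)$, hence positive semidefinite. This is correct (I checked the bookkeeping against the paper's items (a)--(j)), and it handles all degrees uniformly: one reads off immediately which pieces can vanish, and the ``injectivity'' step is just the observation that the top-degree coefficient $a_d$ is detected by $B(2\delta+\delta^1)$ when $d=5$ and by $B(\delta+\delta^1+\delta^{(d-3)/2})$ when $d\ge7$. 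Your approach is more algebraic and avoids the paper's auxiliary $(W_1,W_2)$ construction; the paper's approach is perhaps more transparent about where the exceptional polynomial comes from (it is the direction along which $W_2-2W_1$ is a.s.\ zero when $X_1^2$ is constant). You should, however, spell out the injectivity explicitly rather than gesture at Lemma~\ref{lemma:perpdegree}: the descent on $\deg P$ using the specific multi-indices above is a two-line argument and is what actually does the work.
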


Assuming the propositions above, we can now present the

\begin{proof}[Proof of Theorem \ref{thm:main}]
Part ii) of all three cases is proved by Proposition \ref{propos:NoCLT}.
Part i) of b) and c) are proved by Propositions \ref{propos:even} and \ref{propos:perp}. To prove a), use $\mathcal{V}_m=\mathcal{Q}_m\oplus \mathcal{Q}_m^\perp\oplus \mathcal{E}_m$ to decompose $P=P_{\mathcal Q_m}+P_{\mathcal Q_m^\perp}+P_{\mathcal E_m}$.
From Propositions \ref{propos:even} and \ref{propos:perp}, we see that
$$\frac{\Tr{P_{\mathcal E_m}(H_{\alpha,N})}+\Tr{P_{\mathcal Q_m^\perp}(H_{\alpha,N})} - \Ex{\Tr{P_{\mathcal E_m}(H_{\alpha,N})}+\Tr{P_{\mathcal Q_m^\perp}(H_{\alpha,N})}}}{g_{2\alpha}(N)}$$
vanishes in the limit. Part i) of a) thus follows from Proposition \ref{propos:odd} applied to $P_{\mathcal Q_m}$.
\end{proof}

\section{Coefficient calculations}

In order to associate the paths described in the previous section with terms of the form $V^\beta$ in $\Tr{ H_{\alpha,N}^k}$, we first write $H_{\alpha,N}=S_N+V_{\alpha,N}+S^*_N$ where $S$ is the right-shift operator. Note that
\begin{equation} \label{eq:expansion}
\begin{split}
H_{\alpha,N}^k & =(S_N+V_{\alpha,N}+S^*_N)^k \\
 & =S_N^k+S_N^{k-1}V_{\alpha,N}+S_N^{k-1}S_N^*+S_N^{k-2}V_{\alpha,N} S_N^*+\ldots
\end{split}
\end{equation}
By replacing $S_N$ by $U$ and $S_N^*$ by $D$ in the summands above we get a sum of strings of $U$'s, $V_{\alpha,N}$'s and $D$'s. Since $N$ and $\alpha$ will be fixed in this section, we henceforth omit them from the notation for simplicity. Thus,
\begin{equation}
\eqref{eq:expansion}=U^k+U^{k-1}V+U^{k-1}D+U^{k-2}VD+\ldots
\end{equation}
By associating $U$ with an up step, $D$ with a down step, and $V$ with a flat step, we get a bijection between the following three objects:
\begin{itemize}
\item Matrices $M$ which appear in the expansion above
\item Strings of length $k$, consisting of the letters $U,V,D$,
\item Paths $\left(0=y_{0},y_{1},\ldots,y_{k}\right)$ starting at $0$ (so that, e.g., the string $UVUDD$ corresponds to the path in figure 1 above).
\end{itemize}

The contributions of each individual matrix in the expansion to the
total sum is outlined in the following lemma, which is easily verified:

\begin{lem}
Assume $k<N$. Then:
\begin{enumerate} \label{path_lem}
\item A path $\left(0=y_{0},y_{1},\ldots,y_{k}\right)$ has $y_{k}=m$,
if and only if there is a non-negative integer, $\ell \geq -m$, such that in the corresponding string of length $k$, the letter
$D$ appears $\ell$ times, and $U$ appears $\ell+m$ times. Equivalently, the corresponding
matrix $M$ has $M_{ij}=0$ whenever $j\neq i+m$.
\item A matrix $M$ has $M_{ij}\not \equiv 0$ (as a function of $V$) if and only if $y_{k}=j-i$ in the
corresponding path, and additionally $1\leq i+y_{t}\leq N$ for every
$1\leq t\leq k$.

Furthermore, under these conditions, $M_{ij}=V^{\beta}$, where for every integer $h$, the
number of flat steps at level $h$ is $\beta_{h+i}$ $($in the case that $\beta=0$, we interpret $V^\beta=1)$. In other words,
$M_{ij}=V^{\beta}$ if and only if the multi-index $\beta^{-i}$ counts
the number of flat steps (at each level) in the corresponding path.
\end{enumerate}
\end{lem}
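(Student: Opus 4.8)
The plan is to unwind the correspondence between the matrices in the expansion and the paths of the previous section at the level of individual matrix entries, using nothing beyond the elementary action of $U=S_N$, $D=S_N^{*}$ and $V=V_{\alpha,N}$ on the standard basis. First I would record these actions: $V$ is diagonal, $Ve_p=V_\alpha(p)e_p$, while $U$ and $D$ each send a basis vector $e_p$ to a single basis vector $e_{p\mp 1}$, or to $0$ when the shifted index leaves $\{1,\dots,N\}$. Fixing the sign convention for $S_N$ so that $U$ and $D$ move the index exactly as ``up'' and ``down'' steps move the height of a path, the letters of a string and the steps of its associated path then carry identical increment data.

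The core step is, for a matrix $M$ corresponding to a string $A_1A_2\cdots A_k$ with $A_t\in\{U,V,D\}$, to expand
$$M_{ij}=\sum_{i=i_0,\,i_1,\dots,\,i_{k-1},\,i_k=j}(A_1)_{i_0i_1}(A_2)_{i_1i_2}\cdots(A_k)_{i_{k-1}i_k}$$
and to notice that each factor $(A_t)_{i_{t-1}i_t}$ vanishes unless $i_t$ is the unique value forced by $i_{t-1}$ and $A_t$ — namely $i_t=i_{t-1}\pm1$ with weight $1$ when $A_t$ is $U$ or $D$, and $i_t=i_{t-1}$ with weight $V_\alpha(i_{t-1})$ when $A_t=V$. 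Hence at most one term of the sum survives, the one along the index chain $i_t=i+y_t$, where $(0=y_0,y_1,\dots,y_k)$ is the path associated with the string. That term is nonzero exactly when every $i_t$ lies in $\{1,\dots,N\}$, i.e.\ $1\le i+y_t\le N$ for $1\le t\le k$, and when the endpoint matches, $i_k=j$, i.e.\ $j=i+y_k$, equivalently $y_k=j-i$; this is the equivalence asserted in the second item. The formula $M_{ij}=V^{\beta}$ then follows by reading off the surviving weight: it equals $\prod_{t:\,A_t=V}V_\alpha(i_{t-1})=\prod_{t:\,A_t=V}V_\alpha(i+y_{t-1})$, and since a flat step at level $h$ is precisely a time $t$ with $y_{t-1}=y_t=h$, grouping equal factors yields $M_{ij}=\prod_h V_\alpha(h+i)^{f_h}=V^{\beta}$, where $f_h$ is the number of flat steps at level $h$ and $\beta$ is the multi-index with $\beta_{h+i}=f_h$, i.e.\ $\beta^{-i}$ counts the flat steps.

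The first item I would then handle by pure counting. If a string has $\ell$ letters equal to $D$ and $\ell+m$ letters equal to $U$ (hence $k-2\ell-m$ flat letters), the net height change is $y_k=(\ell+m)-\ell=m$; conversely $y_k=m$ forces $\#U-\#D=m$, so $\#U=\#D+m$, and writing $\ell:=\#D$ the only constraints are $\ell\ge 0$ and $\ell+m\ge 0$ (that is, $\ell\ge-m$), which is exactly the condition stated. The equivalent statement about $M$ is then immediate from the second item: a nonzero entry forces $y_k=j-i$, so if $y_k=m$ then $M_{ij}=0$ whenever $j\ne i+m$.

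I do not expect a genuine obstacle: the lemma is essentially a bookkeeping exercise once the shift/path dictionary is in place. The two points calling for care rather than ingenuity are pinning the sign convention for $S_N$ down consistently — so that the flat-step levels really match $\beta_{h+i}$ rather than some other translate or reflection — and observing that the range condition $1\le i+y_t\le N$ must hold at \emph{every} intermediate time $t$, since a single excursion of the index outside $\{1,\dots,N\}$, coming from a shift factor and hence independent of $V$, annihilates the whole product; this is why the support condition in the second item constrains the entire path and not merely its endpoints.
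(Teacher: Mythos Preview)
Your argument is correct and is precisely the natural elaboration the paper has in mind: the paper does not give a proof of this lemma at all, merely stating that it ``is easily verified.'' Your entrywise expansion of $M_{ij}$ along the forced index chain $i_t=i+y_t$, together with the flat-step bookkeeping and the $U/D$ count for part~1, is exactly the verification the authors are leaving to the reader.
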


\begin{cor} \label{oddeven}
If $j=i+m$ and $k-m$ is even (odd), the polynomial $\left(H_{\alpha,N}^k\right)_{\,i,j}$
is a finite sum of monomials of even (odd) degree.
\end{cor}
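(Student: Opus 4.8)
The plan is to read this off directly from Lemma \ref{path_lem} and the expansion \eqref{eq:expansion} by a parity count on the number of flat steps in a path. First I would recall that $\left(H_{\alpha,N}^k\right)_{i,j}$ is, by \eqref{eq:expansion}, a finite sum (over the at most $3^k$ strings of length $k$ in the letters $U,V,D$) of the $(i,j)$-entries of the corresponding matrices $M$; finiteness of the sum is immediate, so the only issue is the parity of the degrees of the monomials that occur.

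Next I would fix a matrix $M$ in the expansion with $M_{i,j}\not\equiv 0$ and let $\left(0=y_0,y_1,\ldots,y_k\right)$ be its associated path. By part 2 of Lemma \ref{path_lem} we must have $y_k = j-i = m$, and moreover $M_{i,j}=V^\beta$ where $\beta$ records the flat steps of the path, so $\deg\!\left(V^\beta\right)=\left|\beta\right|$ equals the total number of flat steps of the path. By part 1 of Lemma \ref{path_lem}, a path ending at height $m$ has exactly $\ell$ down steps and $\ell+m$ up steps for some integer $\ell\geq\max(0,-m)$; hence it has $2\ell+m$ non-flat steps and therefore $k-2\ell-m$ flat steps. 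Consequently $\deg\!\left(V^\beta\right)=k-2\ell-m\equiv k-m\pmod 2$.

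Thus every monomial occurring in $\left(H_{\alpha,N}^k\right)_{i,j}$ has degree congruent to $k-m$ modulo $2$: all degrees are even when $k-m$ is even and all are odd when $k-m$ is odd, which is exactly the assertion. I do not anticipate any real obstacle here — the content is entirely the bijection of Section 2 together with the observation $u-d=m \Rightarrow u+d\equiv m \pmod 2 \Rightarrow v=k-(u+d)\equiv k-m\pmod 2$, where $u,d,v$ are the numbers of up, down and flat steps; the only point requiring a line of care is invoking part 2 of Lemma \ref{path_lem} to identify $\deg\!\left(V^\beta\right)$ with the number of flat steps (rather than, say, with $\left|\beta\right|$ counted with the shift, which does not change the degree).
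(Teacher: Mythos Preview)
Your proof is correct and follows essentially the same approach as the paper: you use part~1 of Lemma~\ref{path_lem} to count $\ell$ down steps and $\ell+m$ up steps, leaving $k-2\ell-m$ flat steps, and part~2 to identify this number with the degree of the contributing monomial. The paper's proof is the same argument in slightly more condensed form.
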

\begin{proof}
By part 1 of Lemma \ref{path_lem}, non-zero contributions to $\left(H_{\alpha,N}^k\right)_{\,i,j}$
come from matrices $M$ in the expansion, for which in the corresponding
string some $\ell$ entries are $D$, another $\ell+m$ entries are $U$,
thus the remaining $k-2\ell-m$ entries are $V$. By part 2 of Lemma \ref{path_lem}, $M_{ij}$
is a monomial whose degree equals the total number of flat steps in the path, which is $k-2\ell-m$.
\end{proof}
\begin{cor} \label{oddeven2}
If $k$ is odd (even), then for any $1 \le j \le N$, $\left(H_{\alpha,N}^k\right)_{\,j,j}$ is a finite sum of monomials of an odd (even) degree.
\end{cor}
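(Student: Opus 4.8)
The plan is to deduce this immediately from Corollary \ref{oddeven} by setting $i=j$ and $m=0$. First I would observe that the diagonal entry $\left(H_{\alpha,N}^k\right)_{j,j}$ is exactly the case of Corollary \ref{oddeven} in which $j=i+m$ with $m=0$, so that $k-m=k$. Then Corollary \ref{oddeven} tells us that $\left(H_{\alpha,N}^k\right)_{j,j}$ is a finite sum of monomials of even degree when $k$ is even, and of odd degree when $k$ is odd, which is precisely the claim. One small bookkeeping point to address is that Corollary \ref{oddeven} was stated under the running assumption $k<N$ (inherited from Lemma \ref{path_lem}); for $k\ge N$ one can either note that the statement is trivial or unused in that regime, or remark that the parity-of-degree conclusion is stable under the truncation and does not actually require $k<N$ once one tracks which paths stay inside $[1,N]$. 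In the write-up I would simply invoke Corollary \ref{oddeven} with $i=j$, $m=0$.

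The only genuine content, already supplied by Corollary \ref{oddeven}, is that every path of length $k$ from $0$ back to $0$ (or more generally with net displacement $m$) uses a number of flat steps congruent to $k-m$ modulo $2$: since up-steps and down-steps must balance up to the net displacement, the number of non-flat steps is $2\ell+m$ for some $\ell\ge 0$, leaving $k-2\ell-m$ flat steps, and the monomial attached to such a path in $V$ has degree equal to that number of flat steps. Setting $m=0$ gives degree $k-2\ell\equiv k \pmod 2$.

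I do not anticipate any real obstacle here; this is a one-line specialization. The only thing to be careful about is making sure the hypotheses of the cited corollary are met (the $k<N$ caveat mentioned above) and that the indices are matched correctly, namely that "diagonal" corresponds to $m=0$ rather than to some other choice. If one wanted to remove the $k<N$ restriction entirely, the cleanest route would be to note that $\left(H_{\alpha,N}^k\right)_{j,j}$ as a polynomial in $V_1,\dots,V_N$ is obtained from the infinite-volume path expansion by discarding all paths that exit $[1,N]$, and discarding terms does not change the fact that every surviving term has degree of the correct parity; but for the purposes of Theorem \ref{thm:main} the stated form suffices.
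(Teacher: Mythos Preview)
Your proposal is correct and matches the paper's approach exactly: the paper states Corollary \ref{oddeven2} immediately after the proof of Corollary \ref{oddeven} with no separate argument, evidently intending the trivial specialization $i=j$, $m=0$ that you describe. Your remark about the $k<N$ caveat is more careful than the paper itself, which simply lets the standing assumption carry over.
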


We now turn our attention to diagonal entries of $H_{\alpha,N}^k$.
For corresponding strings and paths, this means we restrict our attention
to strings of length $k$ with an equal number of $D$'s and $U$'s,
and paths of length $k$ ending at $y_{k}=0$.

Note that if $\frac{k}{2}\leq i\leq N-\frac{k}{2}$, then
for every $1\leq t\leq k$ and every relevant path, we have $1\leq i+ y_{t}\leq N$.
We conclude:

\begin{cor}
If $\frac{k}{2}\leq i\leq N-\frac{k}{2}$, the coefficient of $V^{\beta}$
in the polynomial $\left(H_{\alpha,N}^k\right)_{i,i}$ equals
the number of paths $\left(y_{0},y_{1},\ldots,y_{k}\right)$ with
$y_{0}=y_{k}=0$, that have precisely $\beta_{h+i}$ flat steps at
each level $h$.
\end{cor}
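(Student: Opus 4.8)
The statement is a bookkeeping consequence of the bijection established above --- between the matrices in the expansion \eqref{eq:expansion}, the strings in the letters $U,V,D$, and the paths --- together with Lemma \ref{path_lem}, so the plan is short. First I would expand the diagonal entry: from $H_{\alpha,N}=S_N+V_{\alpha,N}+S_N^*$ and \eqref{eq:expansion},
\begin{equation} \nonumber
\left(H_{\alpha,N}^k\right)_{i,i}=\sum_M M_{i,i},
\end{equation}
where $M$ ranges over the matrices occurring in the expansion; via $U\leftrightarrow S_N$, $D\leftrightarrow S_N^*$, $V\leftrightarrow V_{\alpha,N}$ these are indexed by strings of length $k$, equivalently by paths $\left(0=y_0,y_1,\ldots,y_k\right)$ starting at $0$. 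Thus it suffices to read off $M_{i,i}$ for each such $M$ and then to collect like terms.

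Next I would apply part 2 of Lemma \ref{path_lem} with $j=i$, so that $m=j-i=0$: it tells us that $M_{i,i}\not\equiv 0$ only when the associated path has $y_k=0$ and satisfies $1\leq i+y_t\leq N$ for all $1\leq t\leq k$, and that in that case $M_{i,i}=V^{\beta}$, where $\beta^{-i}$ is exactly the multi-index counting the flat steps of the path at each level; otherwise $M_{i,i}=0$. Since every path in this correspondence starts at $y_0=0$, the matrices that contribute to $\left(H_{\alpha,N}^k\right)_{i,i}$ correspond precisely to the paths with $y_0=y_k=0$ that in addition stay in the strip $1\leq i+y_t\leq N$.

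Finally I would use the hypothesis on $i$ to discard the strip constraint. As noted just before the statement, any path with $y_0=y_k=0$ satisfies $|y_t|\leq\lfloor k/2\rfloor$ for all $t$, so the assumption $\frac{k}{2}\leq i\leq N-\frac{k}{2}$ already forces $1\leq i+y_t\leq N$ throughout. Hence under this hypothesis every path $\left(0=y_0,\ldots,y_k=0\right)$ contributes exactly the monomial $V^{\beta}$ with $\beta_{h+i}$ equal to its number of flat steps at level $h$, and there are no other contributions. Grouping equal monomials over all such paths then shows that the coefficient of $V^{\beta}$ in $\left(H_{\alpha,N}^k\right)_{i,i}$ equals the number of paths $\left(y_0,\ldots,y_k\right)$ with $y_0=y_k=0$ having precisely $\beta_{h+i}$ flat steps at each level $h$, which is the claim.

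I do not expect any serious obstacle here: the substantive work is entirely in Lemma \ref{path_lem} and in the matrix--string--path correspondence. The two points that warrant a little attention are that passing to a diagonal entry is precisely what forces the path to return to its initial level, and that the range $\frac{k}{2}\leq i\leq N-\frac{k}{2}$ is exactly what makes the conditions $1\leq i+y_t\leq N$ automatic, so that the enumeration runs over \emph{all} length-$k$ paths returning to $0$ rather than only those confined to $[1-i,\,N-i]$; for $i$ outside this range one would have to subtract the closed paths that leave that interval.
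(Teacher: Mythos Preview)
Your proposal is correct and follows exactly the paper's approach: the paper's proof is the single sentence ``Use part 2 of Lemma \ref{path_lem} and sum $M_{ii}$ from every matrix $M$ appearing in the expansion of $H_{\alpha,N}^k$,'' and you have simply spelled this out in full detail, including the observation (stated just before the corollary) that the hypothesis $\frac{k}{2}\leq i\leq N-\frac{k}{2}$ makes the strip constraint $1\leq i+y_t\leq N$ automatic.
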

\begin{proof}
Use part 2 of Lemma \ref{path_lem} and sum $M_{ii}$ from every matrix $M$ appearing
in the expansion of $H_{\alpha,N}^k$.
\end{proof}

We recall that for a multi-index $\beta$, $p^{k}\left(\beta\right)$ denotes
the number of paths $\left(0=y_{0},y_{1},\ldots,y_{k}=0\right)$,
such that $\beta^{i}$ counts flat steps in the path, for some integer
$i$. For a non-zero multi-index, $\beta$, let $\iota(\beta)$ denote the minimal $i\in\mathbb{Z}$ for which $\beta_i>0$.

\begin{propos} \label{coef_prop}
If $\beta\neq0$ has $\iota(\beta)=\iota\in[k,N-k]$, then the coefficient of $V^{\beta}$ in the polynomial $\Tr{H_{\alpha,N}^{k}}$ is $p^{k}\left(\beta\right)$.
\end{propos}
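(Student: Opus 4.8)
The plan is to expand the trace into diagonal entries and extract the coefficient of $V^{\beta}$ one entry at a time, using the path--monomial dictionary. Write $\Tr{H_{\alpha,N}^{k}}=\sum_{i=1}^{N}\left(H_{\alpha,N}^{k}\right)_{i,i}$, so that the coefficient of $V^{\beta}$ in $\Tr{H_{\alpha,N}^{k}}$ equals $\sum_{i=1}^{N}c_i$, where $c_i$ is the coefficient of $V^{\beta}$ in $\left(H_{\alpha,N}^{k}\right)_{i,i}$. By the last corollary above, whenever $\tfrac k2\le i\le N-\tfrac k2$ the number $c_i$ is the number of closed paths $(0=y_0,\dots,y_k=0)$ having exactly $\beta_{h+i}$ flat steps at each level $h$, i.e.\ whose flat-step profile is the shifted multi-index $\beta^{-i}$ (recall $\beta^{-i}_h=\beta_{h+i}$). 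The proof thus amounts to (a) confining the contributing indices $i$ to this range, and (b) recognizing $\sum_i c_i$ as $p^{k}(\beta)$.

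For (a): by part 2 of Lemma~\ref{path_lem}, if $V^{\beta}$ occurs in $\left(H_{\alpha,N}^{k}\right)_{i,i}$, then some closed path of length $k$ from $0$ makes a flat step at level $\iota(\beta^{-i})=\iota-i$. Such a path must travel from $0$ to that level, take the flat step, and return, using at least $2|\iota-i|+1$ steps, so $2|\iota-i|+1\le k$. Together with the hypothesis $\iota\in[k,N-k]$ this forces $\tfrac k2\le i\le N-\tfrac k2$, where the corollary applies and, moreover, no boundary truncation occurs (since $1\le i+y_t\le N$ for every step of a length-$k$ closed path once $\tfrac k2\le i\le N-\tfrac k2$). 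For $i$ outside that range the same estimate shows no closed length-$k$ path has a flat step at level $\iota-i$, so both $c_i=0$ and the corresponding path count is $0$. Hence for \emph{every} $i\in\{1,\dots,N\}$, $c_i$ equals the number of closed length-$k$ paths from $0$ whose flat-step profile is exactly $\beta^{-i}$.

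For (b): summing over $i$, the coefficient of $V^{\beta}$ in $\Tr{H_{\alpha,N}^{k}}$ is the cardinality of the disjoint union over $i\in\{1,\dots,N\}$ of the sets $\mathcal R_i$ of closed length-$k$ paths from $0$ with flat-step profile $\beta^{-i}$; the union is disjoint because $\beta\ne 0$ is finitely supported, so the shifts $\beta^{-i}$ are pairwise distinct (their minimal support point $\iota-i$ is strictly decreasing in $i$). Finally $\bigsqcup_{i=1}^{N}\mathcal R_i=\mathcal P^{k}(\beta)$: a path lies in $\mathcal P^{k}(\beta)$ exactly when its flat-step profile is $\beta^{j}$ for some (necessarily unique) $j\in\Z$, and then $i:=-j$ satisfies $1\le i\le N$, because the flat steps of a closed length-$k$ path sit at levels of absolute value at most $\tfrac{k-1}{2}$, so $\iota+j\in[-\tfrac{k-1}{2},\tfrac{k-1}{2}]$, and $\iota\in[k,N-k]$ then pins $i=-j$ into $\{1,\dots,N\}$; conversely each $\mathcal R_i$ with $1\le i\le N$ is contained in $\mathcal P^{k}(\beta)$. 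Therefore the coefficient equals $|\mathcal P^{k}(\beta)|=p^{k}(\beta)$.

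The only genuinely delicate point is step (a): locating the precise window of contributing indices $i$ and checking it is nested inside $[\tfrac k2,N-\tfrac k2]$. This is exactly what the hypothesis $\iota\in[k,N-k]$ buys — without it, either boundary truncations of paths or shifts falling outside $\{1,\dots,N\}$ would spoil the count. Everything else is routine bookkeeping with Lemma~\ref{path_lem} and the shift-invariance $p^{k}(\beta^{i})=p^{k}(\beta)$ of Remark~\ref{shift_rmk}.
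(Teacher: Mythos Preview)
Your proof is correct and follows essentially the same approach as the paper's: both arguments use the path--matrix dictionary of Lemma~\ref{path_lem}, and both exploit that a closed path of length $k$ has flat steps only at levels in $(-\tfrac{k}{2},\tfrac{k}{2})$, so the hypothesis $\iota\in[k,N-k]$ forces the contributing diagonal index $i$ into the safe range $[\tfrac{k}{2},N-\tfrac{k}{2}]$ where no boundary truncation occurs. Your write-up is simply more explicit than the paper's (you spell out both directions of the bijection and the disjointness of the sets $\mathcal R_i$), but there is no genuinely different idea.
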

\begin{proof}
We claim that any path $\left(0=y_{0},y_{1},\ldots,y_{k}=0\right)$ for which $\beta^{-i}$ counts the number of flat steps, is uniquely matched with a matrix $M$ from the expansion of $H_{\alpha,N}^{k}$ satisfying $M_{ii}=V^{\beta}$.
Indeed, a path $\left(y_{0},y_{1},\ldots,y_{k}\right)$ only has flat steps from the range $\left(-\frac{k}{2},\frac{k}{2}\right)$. We know that $\beta^{-i}_{\iota-i}=\beta_\iota$ is non-zero, therefore $-\frac{k}{2}\leq\iota-i\leq\frac{k}{2}$. From $k\leq\iota\leq N-k$, we deduce $\frac{k}{2}\leq i\leq N-\frac{k}{2}$, which in turn guarantees that $1\leq y_{t}+i\leq N$ for every $t=0,1,2\ldots,k$.
\end{proof}
Finally, we count paths to obtain explicit formulas for the coefficients in the polynomial $\Tr{H_{\alpha,N}^{k}}$.

\begin{propos}\label{coef_calc}\
\begin{enumerate}
\item If $k$ is even,
\[
p^{k}\left(0\right)=\binom{k}{\frac{k}{2}}
\]
(here $0$ denotes the zero function, as a multi-index).
\item If $k$ is odd,
\begin{equation} \label{odd_delta}
p^{k}\left(\delta\right)=k\cdot\binom{k-1}{\frac{k-1}{2}}=\frac{k!}{\left(\frac{k-1}{2}\right)!^{2}}
\end{equation}
\item If $k$ is even,
\[
\sum_{j\geq0}p^{k}\left(\delta+\delta^{j}\right)=\binom{k}{2}\cdot\binom{k-2}{\frac{k-2}{2}}
\]
\item If $k$ is even,
\[
p^{k}\left(2\delta\right)=k\cdot2^{k-3}
\]
\item If $k$ is even and $j\in\mathbb{N}$,
\[
p^{k}\left(\delta+\delta^{j}\right)=2\cdot\sum_{i,\ell,m}\binom{2m+i}{m}\binom{2\ell+j}{\ell}\binom{k-2-2m-i-2\ell-j}{\frac k2-1-m-\ell}
\]
where the sum is extended over all integers $i,\ell,m$ for which the
expression is meaningful (i.e.\ $m,m+i,\ell,\ell+j,m+\ell,m+\ell+i+j\in\left[0,\frac{k}{2}-1\right]$).
\end{enumerate}
\end{propos}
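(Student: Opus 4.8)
The unifying idea is to analyze a path $(0=y_0,y_1,\ldots,y_k=0)$ through its flat steps. Deleting all flat steps from such a path yields a path with the same endpoints using only $\pm1$ steps; conversely, a $\pm1$ path of length $k-r$ together with a choice of where to re-insert $r$ flat steps (and, when several distinct flat levels occur, of the time-order of the re-insertions) reconstructs the original path. Since a $\pm1$ path of even length $2n$ from $0$ to $0$ is simply a choice of which $n$ of its steps go up, there are exactly $\binom{2n}{n}$ of them. Parts (1)--(3) follow immediately: for (1) the path has no flat steps and so is a $\pm1$ path, giving $p^k(0)=\binom{k}{k/2}$; for (2) recording the position of the unique flat step ($k$ choices) and deleting it sets up a bijection with pairs (position, $\pm1$ path of length $k-1$ from $0$ to $0$), giving $p^k(\delta)=k\binom{k-1}{(k-1)/2}=k!/((k-1)/2)!^{2}$; for (3), since $\bigsqcup_{j\ge0}\mathcal{P}^k(\delta+\delta^j)$ is exactly the set of paths with two flat steps, recording the unordered pair of positions of the flat steps and deleting them gives a bijection with pairs ($2$-subset of $\{1,\ldots,k\}$, $\pm1$ path of length $k-2$), giving $\sum_{j\ge0}p^k(\delta+\delta^j)=\binom{k}{2}\binom{k-2}{(k-2)/2}$.

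For (4) I would cut a path counted by $p^k(2\delta)$ at its first and second flat steps in time, which occur at a common level $h$, thereby splitting it bijectively into $\pm1$ paths $A\colon 0\to h$, $B\colon h\to h$, $C\colon h\to 0$ with $|A|+|B|+|C|=k-2$. Summing over $h$ and using that $\sum_h \#\{0\to h,\ \text{length }a\}\cdot\#\{h\to 0,\ \text{length }c\}=\binom{a+c}{(a+c)/2}$ whenever $a+c$ is even (the two factors re-assemble to a $\pm1$ path of length $a+c$ from $0$ to $0$ cut at position $a$), one is led to
\begin{equation} \nonumber
p^k(2\delta)=\sum_{p=0}^{k/2-1}(2p+1)\binom{2p}{p}\binom{k-2-2p}{k/2-1-p},
\end{equation}
the $2p+1$ counting the splittings of $a+c=2p$ into $(a,c)$. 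Setting $n=k/2-1$, this is $\sum_{p=0}^{n}(2p+1)\binom{2p}{p}\binom{2(n-p)}{n-p}$, and applying the classical convolution identity $\sum_{p=0}^{n}\binom{2p}{p}\binom{2(n-p)}{n-p}=4^n$ together with the symmetry $p\leftrightarrow n-p$ (which replaces $2p+1$ by the average $n+1$) yields $(n+1)4^n=k\,2^{k-3}$.

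For (5), with $j\ge1$ the two flat steps lie at distinct levels $h<h+j$, so each path is either \emph{rising} (the level-$h$ flat step precedes the level-$(h+j)$ one in time) or \emph{falling}; reversing the path preserves flat-step levels while reversing their time-order, hence exchanges the two types bijectively, which accounts for the overall factor $2$. A rising path cuts bijectively at its two flat steps into $\pm1$ paths $A\colon 0\to i$, $B\colon i\to i+j$, $C\colon i+j\to 0$ with $|A|+|B|+|C|=k-2$. Enumerating $A$ by its number $m$ of down-steps (so $|A|=2m+i$, with $\binom{2m+i}{m}$ choices), $B$ by its number $\ell$ of down-steps (so $|B|=2\ell+j$, with $\binom{2\ell+j}{\ell}$ choices), and $C$ by the residual length $k-2-2m-i-2\ell-j$ (with $\binom{k-2-2m-i-2\ell-j}{k/2-1-m-\ell}$ choices), then doubling, produces the asserted formula; the admissible range of $(i,\ell,m)$ is exactly where all three binomial coefficients make sense, equivalently where each of the relevant up/down step-counts lies in $[0,k/2-1]$ and $|C|\ge0$.

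The main obstacle is not any single computation but the careful verification that each cut-and-reinsert correspondence is a genuine bijection --- in particular that re-assembling $A\cdot F\cdot B\cdot F\cdot C$ never creates a third flat step (which holds because $A,B,C$ are $\pm1$ paths) and that the recorded level/time data is complete --- and, in (5), the bookkeeping showing that the index set of the displayed triple sum coincides precisely with the set of admissible decompositions. The only input beyond such bookkeeping is the classical identity $\sum_p\binom{2p}{p}\binom{2(n-p)}{n-p}=4^n$ used in (4).
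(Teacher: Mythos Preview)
Your argument is correct and follows the same decomposition as the paper: encode a path by the positions of its flat steps and the $\pm1$ segments between them, then count each piece separately. Parts (1)--(3) and (5) match the paper almost verbatim; in (5) you make explicit, via path reversal, the bijection justifying the factor $2$ that the paper states in one line. The only substantive difference is in (4): both you and the paper arrive at $\sum_{p}(2p+1)\binom{2p}{p}\binom{2(n-p)}{n-p}$ with $n=k/2-1$, but the paper evaluates it via the generating functions $(1-4t)^{-1/2}$ and $(1-4t)^{-3/2}$, whereas you use the symmetry $p\leftrightarrow n-p$ to replace $2p+1$ by its average $n+1$ and then invoke $\sum_p\binom{2p}{p}\binom{2(n-p)}{n-p}=4^n$ directly. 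Your route is slightly more elementary (no differentiation of power series), at the cost of quoting that convolution identity; the paper's route is self-contained but a bit heavier. Either way the content is the same.
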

\begin{proof}\
\begin{enumerate}
\item Paths $\left(y_{0},y_{1},\ldots,y_{k}\right)$ counted by $p^{k}\left(0\right)$
correspond to strings of length $k$ with the letters $D,U$, where
$D$ and $U$ appear an equal number of times.

\item Paths $\left(y_{0},y_{1},\ldots,y_{k}\right)$ counted by $p^{k}\left(\delta\right)$
correspond to strings of length $k$ with the letters $D,V$, and $U$,
where $V$ appears exactly once, and $D,U$ appear an equal
number of times. We choose the position for $V$ from $k$ options,
then divide the remaining $k-1$ positions into two parts of size
$\frac{k-1}{2}$ to place $U$'s and $D$'s.

\item Paths $\left(y_{0},y_{1},\ldots,y_{k}\right)$ counted in this sum
are paths of length $k$ with exactly $2$ flat steps, corresponding
to strings of length $k$ where $V$ appears exactly twice,
and $U,D$ appear an equal number of times. (If the two steps are
at level $i$ and $i+j$, the path is counted by $p^{k}\left(\delta+\delta^{j}\right)$).
We choose two positions for $V$ and divide the remaining $k-2$
positions evenly among $U$ and $D$.

\item $p^{k}\left(2\delta\right)$ counts paths of length $k$, which have
exactly $2$ flat steps, at the same level. Such paths correspond
to composite strings of the form $s_{1}V s_{2}V s_{3}$,
where $s_{2}$ and $s_{1}s_{3}$ are strings composed from the letters
$U,D$, each appearing an equal number of times. There are ${\displaystyle \left(2m+1\right)\binom{2m}{m}\binom{k-2m-2}{\frac{k}{2}-m-1}}$
strings of this form in which $s_{1}s_{3}$ has length $2m$, therefore
\[
p^{k}(2\delta)=\sum_{m=0}^{\frac{k}{2}-1}\left(2m+1\right)\binom{2m}{m}\binom{k-2m-2}{\frac{k}{2}-m-1}.
\]
Using standard generating function techniques, one can verify that
\[
f\left(t\right)=\sum_{m\geq0}\binom{2m}{m}t^{m}=\left(1-4t\right)^{-\frac{1}{2}}
\]
and deduce that
\begin{align*}
\sum_{m\geq0}\binom{2m}{m}\left(2m+1\right)t^{m} & =2t\sum_{m\geq0}\binom{2m}{m}m\cdot t^{m-1}+\sum_{m\geq0}\binom{2m}{m}t^{m}\\
 & =2t\cdot f'\left(t\right)+f\left(t\right)\\
 & =\left(1-4t\right)^{-\frac{3}{2}}
\end{align*}
thus
\begin{align*}
\sum_{n\geq0}\left(\sum_{m\geq0}\left(2m+1\right)\binom{2m}{m}\binom{2\left(n-m\right)}{n-m}\right)t^{n} & =\left(1-4t\right)^{-\frac{1}{2}}\cdot\left(1-4t\right)^{-\frac{3}{2}}\\
 & =\frac{1}{\left(1-4t\right)^{2}}\\
 & =\frac{1}{4}\left(\frac{1}{1-4t}\right)' \\
 & =\sum_{n\geq0}\left(n+1\right)\cdot4^{n}t^{n}
\end{align*}
and our proof is completed by equating the coefficient of $t^{n}$
for $n={\displaystyle \frac{k}{2}}-1$.

\item $p^k(\delta+\delta^j)$ counts paths of length $k$, with exactly $2$ flat
steps: one at level $i$ and another at level $i+j$. If the first
step is at level $i$, we are counting strings of the form $s_{1}V s_{2}V s_{3}$,
where $s_{1},s_{2},s_{3}$ consist of the letters $U,D$, and:
\begin{itemize}
\item In $s_{1}$, the letter $D$ appears $m$ times and $U$ appears $m+i$
times,
\item In $s_{2}$, the letter $D$ appears $\ell$ times and $U$ appears $\ell+j$
times,
\item In $s_{3}$, the letter $D$ appears $\frac k2-1-m-\ell$ times, and
$U$ appears $\frac k2-1-m-i-\ell-j$ times.
\end{itemize}
This yields the sum $${\displaystyle \sum_{i,m,\ell}\binom{2m+i}{m}\binom{2\ell+j}{l}\binom{k-2-2m-i-2\ell-j}{\frac k2-1-m-\ell}}.$$
The factor $2$ comes from paths where the step at level $i+j$ appears
before the step at level $i$.
\end{enumerate}
\end{proof}


\section{Fluctuation Asymptotics}
In this section we will prove Propositions \ref{propos:NoCLT}, \ref{propos:odd}, \ref{propos:even}, and \ref{propos:perp}. We start with the simpler Proposition \ref{propos:NoCLT}.

\subsection{Proof of Proposition \ref{propos:NoCLT}}

The key to proving Proposition \ref{propos:NoCLT} is the following lemma \cite[Theorem 2.5.3]{Durett}:
\begin{lemma} \label{thm:kol}
Suppose $Y_1,Y_2,...$ are independent random variables with $\Ex{Y_n}=0$. If 
$$\sum_{n=1}^\infty \Var{Y_n}<\infty, $$
then with probability $1$, $\sum_{n=1}^\infty Y_n$ converges.
\end{lemma}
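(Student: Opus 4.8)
The plan is to show that the sequence of partial sums $S_n=\sum_{k=1}^n Y_k$ is almost surely Cauchy, which on $\R$ is equivalent to a.s.\ convergence. The engine of the argument is Kolmogorov's maximal inequality: if $Z_1,\ldots,Z_n$ are independent with mean zero and finite variance, and $T_j=\sum_{i=1}^j Z_i$, then for every $\lambda>0$
\begin{equation} \nonumber
\Pr\left(\max_{1\le j\le n}|T_j|\ge \lambda\right)\le \frac{1}{\lambda^2}\sum_{i=1}^n\Var{Z_i}.
\end{equation}
I would either invoke this directly or prove it by the standard first-passage decomposition: let $A$ be the event in question and $A_j$ the event that $j$ is the first index with $|T_j|\ge\lambda$; the $A_j$ are disjoint, and on $A_j$ one has $T_j^2\le T_n^2-2T_j(T_n-T_j)-(T_n-T_j)^2+2T_j^2$, so using $\ind_{A_j}T_j\perp (T_n-T_j)$ and $\Ex{T_n-T_j}=0$ gives $\Ex{T_n^2}\ge\sum_j\Ex{T_j^2\ind_{A_j}}\ge\lambda^2\Pr(A)$.

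Granting the maximal inequality, I would apply it to the block $Y_{m+1},\ldots,Y_n$ to obtain
\begin{equation} \nonumber
\Pr\left(\max_{m< j\le n}|S_j-S_m|\ge\lambda\right)\le\frac{1}{\lambda^2}\sum_{k=m+1}^n\Var{Y_k}\le\frac{1}{\lambda^2}\sum_{k>m}\Var{Y_k},
\end{equation}
and then let $n\to\infty$ (monotone convergence of events) to get $\Pr\left(\sup_{j>m}|S_j-S_m|\ge\lambda\right)\le\lambda^{-2}\sum_{k>m}\Var{Y_k}$. Setting $W_m=\sup_{j,j'\ge m}|S_j-S_{j'}|\le 2\sup_{j\ge m}|S_j-S_m|$, the $W_m$ are nonincreasing in $m$, so $W:=\lim_m W_m$ exists; and for each fixed $\lambda>0$, $\Pr(W\ge\lambda)\le\Pr(W_m\ge\lambda)\le 4\lambda^{-2}\sum_{k>m}\Var{Y_k}\to 0$ as $m\to\infty$ since $\sum_k\Var{Y_k}<\infty$. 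Hence $W=0$ a.s., i.e.\ $(S_n)$ is a.s.\ Cauchy and therefore converges a.s.

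The only genuinely nontrivial ingredient is Kolmogorov's maximal inequality; everything after it is a routine passage to the limit and a Borel--Cantelli-free monotonicity argument. Since the statement is quoted verbatim from \cite[Theorem 2.5.3]{Durett}, in the paper itself it suffices to cite the reference, but the sketch above is the proof I would reconstruct if needed.
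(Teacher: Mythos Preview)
Your proof is correct and is exactly the standard argument via Kolmogorov's maximal inequality. The paper itself gives no proof of this lemma at all---it simply states it with the citation to \cite[Theorem~2.5.3]{Durett}---so there is nothing to compare, and you already anticipated this in your closing sentence. (One cosmetic point: the displayed algebraic identity in your sketch of the maximal inequality is slightly garbled---the relation $T_j^2=T_n^2-2T_j(T_n-T_j)-(T_n-T_j)^2$ is an equality, and the trailing ``$+2T_j^2$'' should be deleted---but the conclusion $\Ex{T_n^2}\ge\sum_j\Ex{T_j^2\ind_{A_j}}\ge\lambda^2\Pr(A)$ is correct.)
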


We also need the definition of $m$-dependent random variables.

\begin{defin}
Let $Y_1,Y_2,...$ be a sequence of random variables. We say that the sequence is $m$-dependent, if the inequality $s-r>m$ implies that the two sets
$$(Y_1,Y_2,...,Y_r),\quad \quad (Y_s,Y_{s+1},...,Y_n) $$
are independent for any $n\ge s$.

\end{defin}

\begin{lemma}\label{using:kol}
Let $k\in \N$ and let $\beta$ be a multi-index with $|\beta|=n$, supported on $[0,k-1]$. Fix $\alpha>\frac{1}{2n}$. Then

$$\sum_{i=1}^N  V_\alpha^{\beta^i}-\E\left[V_\alpha^{\beta^i} \right]$$
converges a.s.\ to a random variable $Q$ as $N\rightarrow \infty $.

\end{lemma}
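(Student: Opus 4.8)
The plan is to reduce the statement to Kolmogorov's one-series theorem (Lemma~\ref{thm:kol}). Set $Y_i = V_\alpha^{\beta^i} - \Ex{V_\alpha^{\beta^i}}$, so that $\Ex{Y_i}=0$. The first step is a variance estimate. Since $\beta$ is supported on $[0,k-1]$, the shifted multi-index $\beta^i$ is supported on $[i,i+k-1]$, so $V_\alpha^{\beta^i}=\prod_{h=i}^{i+k-1} X_h^{\beta_{h-i}} h^{-\alpha\beta_{h-i}}$ is a product of $n=|\beta|$ factors $V_\alpha(h)$ (counted with multiplicity), each with $h\ge i$. Using the almost sure bound $|X_h|\le C$ one gets $\bigl|V_\alpha^{\beta^i}\bigr|\le C^{\,n} i^{-\alpha n}$ almost surely, hence
$$\Var{Y_i} \le \Ex{\left(V_\alpha^{\beta^i}\right)^2} \le C^{\,2n}\, i^{-2\alpha n}.$$
Because $\alpha>\frac1{2n}$ we have $2\alpha n>1$, so $\sum_{i\ge 1}\Var{Y_i}<\infty$. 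This is the only place the hypothesis $\alpha>1/(2n)$ enters, and it is essentially immediate.

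The one genuine point to address is that the $Y_i$ are \emph{not} independent: $Y_i$ is measurable with respect to $\sigma(X_i,\dots,X_{i+k-1})$, so $Y_i$ and $Y_j$ overlap whenever $|i-j|<k$; that is, the sequence is $(k-1)$-dependent (which is precisely why the lemma introduces this notion). I would handle this by the standard device of splitting into arithmetic progressions: write
$$\sum_{i=1}^{N} Y_i = \sum_{r=0}^{k-1}\ \sum_{j\,:\,1\le r+jk\le N} Y_{r+jk}.$$
Within a fixed residue class modulo $k$ the indices differ by at least $k$, so the blocks $\{i,\dots,i+k-1\}$ are pairwise disjoint; since the $X_n$ are independent, the subsequence $\{Y_i : i\equiv r \pmod k\}$ consists of independent, mean-zero random variables whose variances are summable (being a sub-series of the one above). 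Lemma~\ref{thm:kol} then applies to each of the $k$ subsequences, so each inner partial sum converges almost surely as $N\to\infty$.

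Finally, on the almost sure event (a finite intersection of almost sure events) on which all $k$ subsequences converge, the partial sums $\sum_{i=1}^{N}Y_i$ converge, as a finite sum of convergent sequences, to a random variable $Q$; this is exactly the assertion of the lemma. If finite variance of $Q$ is wanted for the later use in Proposition~\ref{propos:NoCLT}, I would note that $(k-1)$-dependence forces $\Cov{Y_i,Y_j}=0$ for $|i-j|\ge k$, so $\Var{\sum_{i=1}^{N}Y_i}\le(2k-1)\sum_{i\ge 1}\Var{Y_i}$ uniformly in $N$, and pass to the limit with Fatou (the same bound makes the partial sums $L^2$-Cauchy, hence $L^2$-convergent, which also gives $\Ex{Q}=0$). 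There is no real obstacle beyond this bookkeeping: the content is the variance estimate together with the residue-class decomposition that restores independence.
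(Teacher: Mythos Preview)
Your proof is correct and follows essentially the same route as the paper: bound $\Var{Y_i}$ by $C/i^{2n\alpha}$, split $\{Y_i\}$ into the $k$ residue classes modulo $k$ so that each class is an independent sequence, and apply Kolmogorov's one-series theorem (Lemma~\ref{thm:kol}) to each. Your write-up is in fact more detailed than the paper's, including the explicit finite-variance bound for the limit, which the paper does not spell out here.
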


\begin{proof}
Define $Y_i(\beta)=V_\alpha^{\beta^i}-\Ex{V_\alpha^{\beta^i}}$. Since $\beta^i$ is supported on $[i,i+k-1]$, the sequence $Y_1(\beta),Y_2(\beta),...$ is $k-1$ dependent. For every $1 \le \ell \le k$,
$$\sum_{i=0}^{\infty} \Var{Y_{{i\cdot k+\ell}}(\beta)} =\sum_{i=0}^{\infty} \Var{V_\alpha^{\beta^{i\cdot k+\ell}}}\le \sum_{i=1}^{\infty}\frac{C}{i^{2n\alpha}}<\infty $$
for some constant $C>0$. Since the conditions in Lemma \ref{thm:kol} hold ($\Ex{Y_i(\beta)}=0$), for every $1\le \ell \le k$ each of the series $\sum_{i=1}^{\infty} Y_{i\cdot k+\ell}(\beta)$ converges a.s.\ to a random variable, and therefore $\sum_{i=1}^{\infty} Y_i(\beta) $ converges a.s.
\end{proof}

Recall that $\iota(\beta)$ is the minimal $i\in \Z$ for which $\beta_i>0$. Also recall that $p^k(\beta)$ counts the number of length $k$ paths, starting and ending at $0$, whose flat steps have a structure described by $\beta$. Clearly, for any fixed $k$, there is a finite number of paths of length $k$ starting and ending at $0$, and so in particular, various sums of the form $\sum_{\iota(\beta)=0} p^k(\beta)$ appearing below, are finite.

\begin{lemma}\label{ARV}
For any $k\in\N$, define
\begin{equation*}
\begin{split}
A^k(N)\equiv \Tr{H_{\alpha,N}^{k}}&-\sum_{\iota(\beta)\in[1,N]}p^{k}\left(\beta\right)V_\alpha^{\beta}\\
&-\Ex{\Tr{H_{\alpha,N}^{k}}-\sum_{\iota(\beta)\in[1,N]}p^{k}\left(\beta\right)V_\alpha^{\beta}}.
\end{split}
\end{equation*}
Then $A^k(N)$ converges almost surely (when $N\rightarrow\infty$) to a bounded random variable.
\end{lemma}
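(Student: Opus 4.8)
The plan is to show that $A^k(N)$ is, up to constants, a sum of the "boundary" multi-index terms in $\Tr{H_{\alpha,N}^k}$ — those $V^\beta$ whose flat steps are pinned near the edges $1$ or $N$ rather than floating in the bulk — and that each such family contributes a series with summable variances, so that Lemma \ref{thm:kol} applies. First I would use the expansion \eqref{eq:expansion} together with Lemma \ref{path_lem} to write $\Tr{H_{\alpha,N}^k}=\sum_\beta c^k_{N}(\beta)V_\alpha^\beta$, where the sum is over multi-indices $\beta$ supported inside $[1,N]$, and $c^k_N(\beta)$ is the number of matrices $M$ in the expansion contributing $M_{ii}=V^\beta$ for some $i$. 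By Proposition \ref{coef_prop}, whenever $\iota(\beta)\in[k,N-k]$ we have $c^k_N(\beta)=p^k(\beta)$ exactly; the discrepancy $c^k_N(\beta)-p^k(\beta)$ is supported on $\beta$ with $\iota(\beta)\in[1,k-1]\cup[N-k+1,N]$ (and even there it is a bounded integer, since for fixed $k$ only finitely many path shapes of length $k$ exist, and $|c^k_N(\beta)|,p^k(\beta)\le$ (number of length-$k$ paths)). Hence
$$
A^k(N)=\sum_{\iota(\beta)\in[1,k-1]}\!\!\big(c^k_N(\beta)-p^k(\beta)\big)\big(V_\alpha^\beta-\Ex{V_\alpha^\beta}\big)\;+\;\sum_{\iota(\beta)\in[N-k+1,N]}\!\!\big(c^k_N(\beta)-p^k(\beta)\big)\big(V_\alpha^\beta-\Ex{V_\alpha^\beta}\big).
$$

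The first sum is over a fixed finite index set (at most $k-1$ starting positions, each supporting finitely many shapes), its coefficients $c^k_N(\beta)$ stabilize for $N$ large (they no longer see the right edge), and each $V_\alpha^\beta-\Ex{V_\alpha^\beta}$ is a single bounded random variable; so this part converges trivially to a bounded random variable. For the second sum, note every $\beta$ with $\iota(\beta)\ge N-k+1$ has $|\beta|\ge 1$ and support inside $[N-k+1,N+k]$, so $|V_\alpha^\beta|\le C^{|\beta|}/(N-k+1)^{\alpha|\beta|}\le C' N^{-\alpha}$; since there are only $O(1)$ such $\beta$ for each $N$ and the coefficients $c^k_N(\beta)-p^k(\beta)$ are uniformly bounded, the entire right-edge sum is deterministically $O(N^{-\alpha})\to 0$, hence also converges to a bounded (indeed zero) limit. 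Actually it is cleaner to observe the right-edge sum is a genuine tail: I would group it as a series $\sum_{j\ge 1}(\text{bounded coeff})(V_\alpha^{\beta}-\Ex{V_\alpha^\beta})$ with $\beta$ ranging over shapes based at position $N-j$, and bound $\sum_j \Var{\cdot}\le \sum_j C'' N^{-2\alpha j}$... but since these terms are simply deterministically small, the direct $O(N^{-\alpha})$ estimate suffices. Combining, $A^k(N)$ converges a.s.\ to a bounded random variable.

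The main obstacle — or rather the only point requiring care — is bookkeeping the discrepancy between $c^k_N(\beta)$ and $p^k(\beta)$ near the endpoints: one must check that it is zero in the bulk (this is exactly Proposition \ref{coef_prop}) and uniformly bounded near the edges (this follows because, for fixed $k$, the number of length-$k$ closed paths from $0$ is a constant independent of $N$, so all the path-counts $c^k_N(\beta)$ and $p^k(\beta)$ are bounded by that same constant). Once that uniform bound is in hand, the left-edge contribution is a finite sum of bounded random variables and the right-edge contribution is deterministically $O(N^{-\alpha})$, and no appeal to Lemma \ref{thm:kol} is even strictly necessary here — though it is the natural tool if one prefers to treat the right edge as a convergent series rather than a vanishing quantity. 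I note that this lemma is precisely what lets us replace $\Tr{H_{\alpha,N}^k}-\Ex{\Tr{H_{\alpha,N}^k}}$ by its bulk part $\sum_{\iota(\beta)\in[1,N]}p^k(\beta)(V_\alpha^\beta-\Ex{V_\alpha^\beta})$ modulo an a.s.\ convergent error, which is the form used in the proofs of Propositions \ref{propos:NoCLT}, \ref{propos:odd}, \ref{propos:even}, and \ref{propos:perp}.
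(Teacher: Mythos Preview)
Your proposal is correct and follows essentially the same approach as the paper: decompose $A^k(N)$ into a left-edge piece depending only on $V_\alpha(1),\ldots,V_\alpha(2k)$ (a fixed bounded random variable once $N>k$) and a right-edge piece depending only on $V_\alpha(N-2k),\ldots,V_\alpha(N)$ (which is deterministically $O(N^{-\alpha})\to 0$). The paper's proof is just these two sentences; your write-up simply spells out the bookkeeping in more detail, and the detour through Lemma~\ref{thm:kol} is, as you yourself note, unnecessary here.
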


\begin{proof}
From Proposition \ref{coef_prop}, we can decompose $A^k(N)=A^k_1(N)+A^k_2(N)$, where $A^k_1(N)$ is a polynomial in the variables $V_\alpha(1),\ldots,V_\alpha(2k)$ (this polynomial is fixed for $N>k$), and $A^k_2(N)$ is a polynomial in the variables $V_\alpha(N-2k),\ldots,V_{\alpha}(N)$. Therefore $A^k_2(N)\longrightarrow 0$ as $N\longrightarrow \infty$.
\end{proof}

\begin{proof}[Proof of Proposition \ref{propos:NoCLT}]
First use Remark \ref{shift_rmk} to write
\begin{equation} \label{eq:Sum2.3}
\begin{split}
&\Tr{H_{\alpha,N}^{k}}-\Ex{\Tr{H_{\alpha,N}^{k}}} \\
& =\sum_{\iota(\beta)\in[1,N]}p^{k}\left(\beta\right)V_\alpha^{\beta}-\Ex{\sum_{\iota(\beta)\in[1,N]}p^{k}\left(\beta\right)V_\alpha^{\beta}}+A^k(N)\\
& =\sum_{\iota(\beta)=0}\sum_{i=1}^{N} p^{k}\left(\beta\right)V_\alpha^{\beta^{i}}-\Ex{\sum_{\iota(\beta)=0}\sum_{i=1}^{N} p^{k}\left(\beta\right)V_\alpha^{\beta^{i}}}+A^k(N)\\
& =\sum_{\iota(\beta)=0} p^k(\beta) \left(\sum_{i=1}^{N} \left(V_\alpha^{\beta^i}-\Ex{V_\alpha^{\beta^i}}\right) \right)+A^k(N).\\
\end{split}
\end{equation}

Now, in the cases $P\in\mathcal O_m$ and $P \in \mathcal{E}_m$ it clearly suffices to prove the statement for monomials. Considering $P(x)=x^k$ with $k$ odd, it follows from Lemmas \ref{using:kol} and \ref{ARV}, that \eqref{eq:Sum2.3} converges a.s.\ for any $\alpha>\frac{1}{2}$ since $\alpha>\frac12\geq\frac{1}{2|\beta|}$ holds for any $\beta\not\equiv 0$. This proves the statement for $P \in \mathcal{O}_m$.

Taking $P(x)=x^k$ for $k$ even we see that every monomial $V_\alpha^{\beta^i}$ appearing in \eqref{eq:Sum2.3} is of degree $2$ or more ($p^k(\beta)=0$ when $k$ is even and $\left|\beta\right|$ is odd). We can therefore assume $|\beta|\ge 2 $, and the conditions of Lemma \ref{using:kol} hold for $\alpha>\frac14$. This takes care of $P \in \mathcal{E}_m$.

Finally, for $P\in\mathcal Q_m^\perp$, we consider $P=P_k$ (where $P_k$ was defined in \eqref{perpbasis}) and obtain
\begin{equation*}
\begin{split}
&\Tr{P_k\left(H_{\alpha,N}\right)}-\Ex{\Tr{P_k\left(H_{\alpha,N}\right)}} \\
& =\sum_{\iota(\beta)=0} \left(p^k(\beta)-p^k(\delta)p^1(\beta)\right) \left(\sum_{i=1}^{N} \left(V_\alpha^{\beta^i}-\Ex{V_\alpha^{\beta^i}}\right) \right)\\
& +A^k(N)-p^k(\delta)A^1(N).
\end{split}
\end{equation*}
Note that monomials of degree less than $3$ vanish from the sum so we may assume $|\beta|\geq 3$ and the conditions of Lemma \ref{using:kol} hold for $\alpha>\frac16\geq\frac{1}{2|\beta|}$. Since the $P_k$'s are basis elements for $\mathcal{Q}_m^\perp$, this finishes the proof.
\end{proof}

\vspace{\baselineskip}
\subsection{Preliminaries for Limit Theorems}

For our limit theorems, we shall use the following result of Orey \cite{SO}.

\begin{propos} \label{thm:orey}
Let $Y_1,Y_2,..$ be a sequence of $m$-dependent random variables with $\Ex{Y_k}=0$ and $\Var{Y_k}<\infty$ for every $k \in \N$. Denote
$$D_n^2=\Var{Y_1+...+Y_n}, $$
and assume that
\begin{enumerate}
  \item $\frac{1}{D_n^2}\sum_{k=1}^n \E[Y_k^2\cdot 1_{\{|Y_k|>\epsilon D_n\}}] \longrightarrow 0$ for every $\epsilon>0$
  \item $\frac{1}{D_n^2}\sum_{k=1}^n \Var{Y_k}=O(1)$.
\end{enumerate}
Then
$$\frac{Y_1+...+Y_n}{D_n}\overset{d}{\longrightarrow}N(0,1) .$$
\end{propos}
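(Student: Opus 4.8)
The plan is to prove this by Bernstein's big‑block/small‑block method, reducing it to the classical Lindeberg--Feller central limit theorem for independent (non‑identically distributed) summands. Two elementary consequences of the hypotheses will be used throughout. First, because the sequence is $m$-dependent, $\Cov(Y_i,Y_j)=0$ whenever $|i-j|>m$, so for any finite $B\subseteq\N$ one has $\Var\!\big(\sum_{k\in B}Y_k\big)\le(2m+1)\sum_{k\in B}\Var(Y_k)$ (bound each surviving covariance by the arithmetic mean of the two variances). In particular, by hypothesis (2), $\tfrac{1}{2m+1}\sum_{k\le n}\Var(Y_k)\le D_n^2\le(2m+1)\sum_{k\le n}\Var(Y_k)$, so $D_n^2\asymp\sum_{k\le n}\Var(Y_k)$. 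Second, hypothesis (1) forces $\max_{k\le n}\Var(Y_k)=o(D_n^2)$: writing $\Ex{Y_k^2}=\Ex{Y_k^2\ind_{\{|Y_k|\le\epsilon D_n\}}}+\Ex{Y_k^2\ind_{\{|Y_k|>\epsilon D_n\}}}\le\epsilon^2 D_n^2+\sum_{j\le n}\Ex{Y_j^2\ind_{\{|Y_j|>\epsilon D_n\}}}$ and letting $\epsilon\to0$. (We may also assume $D_n\to\infty$: if $D_n^2$ had a finite limit $D_\infty^2$, then since $\sum_{k\le n}\Ex{Y_k^2\ind_{\{|Y_k|>\epsilon D_n\}}}$ is nondecreasing in $n$, hypothesis (1) would force $\Ex{Y_k^2\ind_{\{|Y_k|>\epsilon D_\infty\}}}=0$ for every $k$ and every continuity value $\epsilon>0$, i.e.\ every $Y_k$ vanishes a.s.\ and the claim is trivial.)

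First I would truncate. By a diagonal argument applied to hypothesis (1) there is a sequence $\epsilon_n\downarrow0$ with $\frac{1}{D_n^2}\sum_{k\le n}\Ex{Y_k^2\ind_{\{|Y_k|>\epsilon_n D_n\}}}\to0$. Set $\tilde Y_k=Y_k\ind_{\{|Y_k|\le\epsilon_n D_n\}}-\Ex{Y_k\ind_{\{|Y_k|\le\epsilon_n D_n\}}}$, so that $|\tilde Y_k|\le2\epsilon_n D_n$ and $\{\tilde Y_k\}$ is again $m$-dependent with mean $0$. Then $\Var\!\big(\sum_{k\le n}(Y_k-\tilde Y_k)\big)\le(2m+1)\sum_{k\le n}\Ex{Y_k^2\ind_{\{|Y_k|>\epsilon_n D_n\}}}=o(D_n^2)$, so $\frac{1}{D_n}\sum_{k\le n}(Y_k-\tilde Y_k)\to0$ in probability and, by Cauchy--Schwarz on the cross term, $\Var\!\big(\sum_{k\le n}\tilde Y_k\big)=D_n^2(1+o(1))$. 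Hence it suffices to prove $\frac{1}{D_n}\sum_{k\le n}\tilde Y_k\overset{d}{\longrightarrow}N(0,1)$.

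Next I would choose block lengths $p_n\to\infty$ with $p_n=o(n)$, $p_n\epsilon_n\to0$, and $p_n\max_{k\le n}\Var(Y_k)=o(D_n^2)$ simultaneously (possible since each governing quantity tends to $0$, e.g.\ take $p_n$ to be the minimum of $\lfloor\sqrt n\rfloor$, $\lfloor\epsilon_n^{-1/2}\rfloor$, and $\lfloor (D_n^2/\max_{k\le n}\Var(Y_k))^{1/2}\rfloor$). Partition $\{1,\dots,n\}$ into consecutive big blocks $B_1,B_2,\dots$ of length $p_n$, each followed by a small block of length $m$, together with one leftover piece of length $<2(p_n+m)$; by an averaging/pigeonhole argument over the $p_n+m$ translates of this partition, the translate may be chosen so that the union $S$ of all small-block indices has $\sum_{k\in S}\Var(Y_k)\le\frac{2m}{p_n+m}\sum_{k\le n}\Var(Y_k)=O(D_n^2/p_n)$. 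Put $U_j=\sum_{k\in B_j}\tilde Y_k$; since consecutive big blocks are separated by a gap of $m$ indices, the $U_j$ are independent and mean zero. The sum over the small blocks has variance $\le(2m+1)\sum_{k\in S}\Var(Y_k)=o(D_n^2)$, and the leftover sum has variance $\le(2m+1)\cdot 2(p_n+m)\cdot\max_{k\le n}\Var(Y_k)=o(D_n^2)$; so both are $o_P(1)$ after dividing by $D_n$, and (again by Cauchy--Schwarz on cross terms) $s_n^2:=\sum_j\Var(U_j)=\Var\!\big(\sum_jU_j\big)=D_n^2(1+o(1))$. Finally, $|U_j|\le 2p_n\epsilon_n D_n=o(D_n)=o(s_n)$ uniformly in $j$, so for every fixed $\delta>0$ the events $\{|U_j|>\delta s_n\}$ are empty for all large $n$, and the Lindeberg condition for the independent array $\{U_j\}$ holds trivially. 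The Lindeberg--Feller CLT gives $s_n^{-1}\sum_jU_j\overset{d}{\longrightarrow}N(0,1)$, hence $D_n^{-1}\sum_jU_j\overset{d}{\longrightarrow}N(0,1)$; adding back the negligible small-block, leftover, and truncation terms and invoking Slutsky's lemma yields $D_n^{-1}\sum_{k\le n}Y_k\overset{d}{\longrightarrow}N(0,1)$.

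The step I expect to be the main obstacle is the third one: one must calibrate $\epsilon_n$ and $p_n$ so that all three discarded contributions---the truncation tail, the small blocks, and the leftover piece---are simultaneously $o(D_n^2)$, while keeping the big blocks uniformly $o(D_n)$ so that the block-level Lindeberg condition becomes vacuous. The averaging choice of translate, which substitutes for the stationarity available in the classical $m$-dependent CLT, is the one genuinely non-routine ingredient; everything else is standard bookkeeping built on the two $m$-dependence inequalities above.
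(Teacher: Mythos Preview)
The paper does not prove this proposition at all; it is quoted as a known result of Orey \cite{SO} and used as a black box. There is therefore no ``paper's proof'' to compare against, and what you have written is in fact additional content beyond the paper.

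Your argument follows the classical Bernstein big-block/small-block scheme and is essentially correct. A couple of small inaccuracies are worth tightening. First, the lower bound you state, $\frac{1}{2m+1}\sum_{k\le n}\Var(Y_k)\le D_n^2$, does not follow with that specific constant: hypothesis~(2) only gives $\sum_{k\le n}\Var(Y_k)\le C\,D_n^2$ for \emph{some} constant $C$, hence $D_n^2\ge C^{-1}\sum_{k\le n}\Var(Y_k)$. This is all you use, so the argument is unaffected; just replace $2m+1$ by the $O(1)$ constant from~(2). Second, in the pigeonhole step you bound the small-block contribution by $\sum_{k\in S}\Var(Y_k)$, whereas the small blocks carry the truncated variables $\tilde Y_k$; this is harmless since $\Var(\tilde Y_k)\le\Ex{Y_k^2}=\Var(Y_k)$, but it is worth saying. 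Finally, since the truncation level depends on $n$, the $\tilde Y_k$ and hence the $U_j$ form a triangular array; the Lindeberg--Feller theorem you invoke should be the array version, which your vacuous Lindeberg verification handles cleanly.
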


Our  limit theorems follow from the next two lemmas. The first lemma allows us to compute the variance of the fluctuations in terms of path counting numbers and the moments of $X_n$, and the second lemma shows that these fluctuations converge to random variables with normal distribution. Proving that the variance is positive (and describing under which conditions it is positive) requires separate arguments for each different case (if the polynomial $P$ is in $\mathcal Q_m, \mathcal E_m$, or $\mathcal Q_m^\perp$).

Note that if a multi-index $\beta$ is supported on $\N$, we have (since $X_n$ are i.i.d.):
\begin{equation} \label{index_ex}
\Ex{X^\beta}=\prod_{n=1}^{\infty}\Ex{X_n^{\beta_n}}=\prod_{n=-\infty}^{\infty}\Ex{X_1^{\beta_n}},
\end{equation}
which is of course a finite product. It is convenient to use (\ref{index_ex}) as a definition of $\Ex{X^\beta}$ for any $\beta$ not necessarily supported on $\N$. Accordingly, denote $\Cov{X^\beta,X^\gamma}=\Ex{X^{\beta+\gamma}}-\Ex{X^\beta}\Ex{X^\gamma}$. Finally, for a multi-index $\beta$ and $i \in \N$, denote
$$i^\beta=\prod_{j=-\infty}^{\infty} (i+j)^{\beta_j}. $$
Using this notation, we prove

\begin{lemma} \label{norm_cov}
Let $\alpha>0$, and let $\beta,\gamma$ be multi-indices with $\iota(\beta)=\iota(\gamma)=0$ such that $2|\beta|\alpha,2|\gamma|\alpha\le 1$. Then there exists a constant $C(\alpha,\beta,\gamma)>0$ such that
\begin{equation} \label{cov_value}
\underset{N\longrightarrow \infty}{\lim} \Cov{\frac{\sum_{i=1}^{N} V_\alpha^{\beta^i}}{g_{2|\beta|\alpha}(N)} , \frac{\sum_{i=1}^{N} V_\alpha^{\gamma^i}}{g_{2|\gamma|\alpha}(N)}} =C(\alpha,\beta,\gamma)\cdot \sum_{j=-\infty}^{\infty} \Cov{X^\beta,X^{\gamma^j}}.
\end{equation}
If $|\beta|=|\gamma|$, then $C(\alpha,\beta,\gamma)=1$.

Note that $\Ex{X^{\beta+\gamma^j}} = \Ex{X^\beta}\Ex{X^{\gamma^j}}$ if $\beta$ and $\gamma^j$ have disjoint supports, so the sum in (\ref{cov_value}) is in fact finite.
\end{lemma}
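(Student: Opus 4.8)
The plan is to compute the limit of the covariance directly by expanding the product of the two sums. Write
\[
\Cov{\sum_{i=1}^N V_\alpha^{\beta^i},\ \sum_{j'=1}^N V_\alpha^{\gamma^{j'}}} = \sum_{i,j'=1}^N \Cov{V_\alpha^{\beta^i}, V_\alpha^{\gamma^{j'}}},
\]
and observe that $V_\alpha^{\beta^i} = i^{-\alpha|\beta|} X^{\beta^i}$ (using the notation $i^\beta$ from the excerpt, more precisely $V_\alpha^{\beta^i} = (i^{\beta})^{-\alpha} X^{\beta^i}$ after a shift — I would be careful here to track that $\beta^i$ is supported near $i$ and each factor $V_\alpha(i+h)^{\beta_h} = (i+h)^{-\alpha\beta_h}X_{i+h}^{\beta_h}$). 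Since the $X_n$ are i.i.d., $\Cov{X^{\beta^i},X^{\gamma^{j'}}}$ depends only on $j' - i =: j$ and equals $\Cov{X^\beta, X^{\gamma^j}}$ in the notation of \eqref{index_ex}; moreover this vanishes unless the supports of $\beta$ and $\gamma^j$ overlap, i.e.\ unless $|j|$ is bounded by a constant depending only on $k$ (the lengths of the paths). So the double sum collapses to
\[
\sum_{j} \Cov{X^\beta, X^{\gamma^j}} \sum_{i} (\text{weight}_{i,j}),
\]
a finite sum over $j$, where $\text{weight}_{i,j}$ is the product of the decaying prefactors. The heart of the matter is then the asymptotics of $\sum_{i=1}^{N} (i^\beta)^{-\alpha}((i+j)^\gamma)^{-\alpha}$, or whatever the precise prefactor is, as $N\to\infty$.

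For that asymptotic step I would argue as follows. The prefactor for the pair $(\beta^i,\gamma^{j+i})$ behaves like $i^{-\alpha|\beta|} \cdot i^{-\alpha|\gamma|}(1+o(1))$ as $i\to\infty$, since each factor $(i+h)^{-\alpha\beta_h}$ equals $i^{-\alpha\beta_h}(1+O(1/i))$ and there are finitely many such factors. Hence
\[
\sum_{i=1}^N (\text{weight}_{i,j}) = \sum_{i=1}^N i^{-\alpha(|\beta|+|\gamma|)}(1+o(1)).
\]
Dividing by $g_{2|\beta|\alpha}(N) g_{2|\gamma|\alpha}(N)$ and using \eqref{eq:AysmptoticSum} together with $g_t(N)^2 \sim \sum_{n=1}^N n^{-t}$, the ratio converges: when $|\beta|=|\gamma|=:n$ the denominator is $g_{2n\alpha}(N)^2 \sim \sum_{n'=1}^N n'^{-2n\alpha}$, matching the numerator's exponent $\alpha(|\beta|+|\gamma|) = 2n\alpha$ exactly, giving limit $1$; when $|\beta|\neq|\gamma|$ one still gets a finite positive constant $C(\alpha,\beta,\gamma)$ by comparing the three power-law sums (this requires $2|\beta|\alpha\le 1$ and $2|\gamma|\alpha \le 1$ so each $g$ genuinely grows, and the exponent $\alpha(|\beta|+|\gamma|)$ lies in the appropriate range to make the quotient of partial sums converge). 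I would handle the $t=1$ (logarithmic) and $t<1$ (power) cases of $g_t$ uniformly by just invoking \eqref{eq:AysmptoticSum} and a Stolz–Cesàro / dominated-ratio comparison for $\sum i^{-a} \big/ \big(\sum i^{-b}\big)^{1/2}\big(\sum i^{-c}\big)^{1/2}$.

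The main obstacle is bookkeeping rather than conceptual: correctly identifying the decay prefactor $i^\beta$ for a shifted multi-index (the subtlety that $\beta^i$ is the shift of $\beta$, that $\iota(\beta)=\iota(\gamma)=0$ anchors things at $0$, and that the cross term forces the relative shift $j$ to range over a bounded set), and then making the $(1+O(1/i))$ error terms rigorously negligible after division by the $g$ factors — one needs that $\sum_{i=1}^N O(i^{-\alpha(|\beta|+|\gamma|)-1})$ is bounded (it is, since the exponent exceeds $1$) while the main term diverges or is comparable. A secondary point to get right is the claim $C(\alpha,\beta,\gamma)>0$: this follows because all the prefactors are strictly positive and the comparison sums are eventually of one sign, so the limiting ratio, if it exists, is strictly positive; existence of the limit is exactly the power-sum asymptotic. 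Finally I would note explicitly, as the lemma statement already anticipates, that the sum on the right of \eqref{cov_value} is finite because $\Cov{X^\beta,X^{\gamma^j}}=0$ once the supports are disjoint, so no convergence issue arises on that side.
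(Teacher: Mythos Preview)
Your proposal is correct and follows essentially the same route as the paper: expand the double sum, use the i.i.d.\ structure to reduce $\Cov{X^{\beta^i},X^{\gamma^{j'}}}$ to a function of the shift $j=j'-i$ that vanishes for all but finitely many $j$, and then compare the resulting weight $\sum_i (i^\beta (i+j)^\gamma)^{-\alpha}\sim g_{(|\beta|+|\gamma|)\alpha}(N)^2$ to the normalization $g_{2|\beta|\alpha}(N)\,g_{2|\gamma|\alpha}(N)$. The paper packages the last step as the convergence of $g_{t_1}(N)g_{t_2}(N)\big/g_{(t_1+t_2)/2}(N)^2$ to a positive constant, whereas you do the $(1+O(1/i))$ expansion by hand; these are the same computation.
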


\begin{proof}
For a fixed $j\in \Z$, define
$$I_N = \{i \ : \ 1\le i \le N, \ 1 \le i+j \le N \},$$
$$c_j(N)=\left\{ \begin{array}{cc} 0 & I_N=\emptyset \\ \sum_{i\in I_N} \frac{1}{(i^{\beta}(i+j)^\gamma)^{\alpha}} & \textrm{ otherwise.} \end{array} \right. $$
Then
$$\Cov{\sum_{i=1}^N V_\alpha^{\beta^i,},\sum_{i=1}^N V_\alpha^{\gamma^{i}}}=\sum_{j \in \Z} c_j(N)\Cov{X^{\beta^i},X^{\gamma^{(i+j)}}} $$
For multi-indices $\beta,\gamma$, fix $j$ such that $\Cov{X^\beta,X^{\gamma^j}}\neq 0$.
Since for a fixed $j$
$$\frac{c_j(N)}{g_{(|\beta|+|\gamma|)\alpha}(N)^2}\overset{N \rightarrow \infty }{\longrightarrow} 1, $$
and since for any $0< t_1,t_2 \le 1$
$$\frac{g_{t_1}(N)\cdot g_{t_2}(N)}{g_{\frac{t_1+t_2}{2}}(N)^2}\overset{N \rightarrow \infty }{\longrightarrow} C(t_1,t_2)$$
for some $C(t_1,t_2)>0$, we obtain the desired result.
(For $t_1=t_2$, $g_{t_1}(N)\cdot g_{t_2}(N)=g_{\frac{t_1+t_2}{2}}(N)^2$, so if $|\beta|=|\gamma|$, $C(\alpha,\beta,\gamma)=1$).

\end{proof}

\begin{cor} \label{cor:crossterms}
Let $\alpha>0$, $k\in \N$, and let $\beta,\gamma$ be multi-indices which satisfy $\iota(\beta)=\iota(\gamma)=0$. Also assume that $|\beta|=k$ and $|\gamma|>k$. Then
$$\frac{\Cov{\sum_{i=1}^{N} V_\alpha^{\beta^i},\sum_{i=1}^{N}V_\alpha^{\gamma^i}}}{g_{2k\alpha}(N)^2}\overset{N\rightarrow \infty}{\longrightarrow}0. $$
\end{cor}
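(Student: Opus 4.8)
The plan is to reduce this to the computation already carried out in the proof of Lemma \ref{norm_cov}. As in that proof, write
\begin{equation} \nonumber
\Cov{\sum_{i=1}^N V_\alpha^{\beta^i},\sum_{i=1}^N V_\alpha^{\gamma^{i}}}=\sum_{j \in \Z} c_j(N)\,\Cov{X^{\beta},X^{\gamma^{j}}},
\end{equation}
where $c_j(N)=\sum_{i\in I_N}\big(i^{\beta}(i+j)^{\gamma}\big)^{-\alpha}$ and $I_N=\{i:1\le i\le N,\ 1\le i+j\le N\}$. Only finitely many $j$ contribute, since $\Cov{X^{\beta},X^{\gamma^{j}}}=0$ once $\beta$ and $\gamma^j$ have disjoint support, so it suffices to handle a single fixed $j$ and show $c_j(N)/g_{2k\alpha}(N)^2\to 0$.

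For a fixed $j$ the summand $\big(i^{\beta}(i+j)^{\gamma}\big)^{-\alpha}$ is, up to a bounded multiplicative factor (depending on $j,\beta,\gamma$ but not on $i$), comparable to $i^{-(|\beta|+|\gamma|)\alpha}$ as $i\to\infty$, because $i^\beta=\prod_\ell(i+\ell)^{\beta_\ell}=i^{|\beta|}(1+O(1/i))$ and similarly for $(i+j)^\gamma$. Hence $c_j(N)$ is of order $\sum_{i=1}^N i^{-(|\beta|+|\gamma|)\alpha}$, which by \eqref{eq:AysmptoticSum} is of order $g_{(|\beta|+|\gamma|)\alpha}(N)^2$ when $(|\beta|+|\gamma|)\alpha\le 1$, and is bounded (convergent) when $(|\beta|+|\gamma|)\alpha>1$. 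In either case, since $|\gamma|>k=|\beta|$ we have $(|\beta|+|\gamma|)\alpha>2k\alpha$, so
\begin{equation} \nonumber
\frac{c_j(N)}{g_{2k\alpha}(N)^2}=O\!\left(\frac{\max\{1,g_{(|\beta|+|\gamma|)\alpha}(N)^2\}}{g_{2k\alpha}(N)^2}\right),
\end{equation}
and the right-hand side tends to $0$: if $2k\alpha<1$ then $g_{2k\alpha}(N)^2=\frac{N^{1-2k\alpha}}{1-2k\alpha}\to\infty$ strictly faster than $g_{(|\beta|+|\gamma|)\alpha}(N)^2$ (which is either bounded, or $\log N$, or a smaller power of $N$), while if $2k\alpha=1$ then $g_{2k\alpha}(N)^2=\log N\to\infty$ and the numerator is bounded. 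Summing over the finitely many relevant $j$ gives the claim.

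The only mild subtlety — the ``main obstacle,'' such as it is — is bookkeeping the three regimes for the exponent $(|\beta|+|\gamma|)\alpha$ (less than, equal to, or greater than $1$) against the two regimes for $2k\alpha$ (less than or equal to $1$; note $2k\alpha>1$ is excluded by hypothesis), and checking uniformity of the comparison $\big(i^{\beta}(i+j)^{\gamma}\big)^{-\alpha}\asymp i^{-(|\beta|+|\gamma|)\alpha}$ in $i$ for each fixed $j$; this is immediate once one notes $i^\beta\ge i^{|\beta|}$ for $i\ge 1$ (all shifts $i+\ell$ with $\beta_\ell>0$ satisfy $i+\ell\ge i$ since $\iota(\beta)=0$) and $i^\beta\le (i+R)^{|\beta|}$ where $R$ bounds the support of $\beta$, and likewise for $\gamma^j$. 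Everything else is a direct application of \eqref{eq:AysmptoticSum} and the elementary asymptotics of $g_t(N)$.
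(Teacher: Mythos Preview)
Your proof is correct and follows the same approach as the paper, which states this result as an unproved corollary of Lemma~\ref{norm_cov}: you recycle the $c_j(N)$ decomposition from that lemma's proof and then compare growth rates of $g_t(N)$. In fact your write-up is slightly more careful than the paper's implicit argument, since the statement of Lemma~\ref{norm_cov} requires $2|\gamma|\alpha\le 1$, and you correctly handle the additional regimes $(|\beta|+|\gamma|)\alpha\ge 1$ that the lemma's statement alone does not cover.
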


\begin{lemma} \label{clt_lemma}
Let $B$ be a set of multi-indices, such that every $\beta\in B$ satisfies $\iota(\beta)=0$ and $|\beta|=k$ (for some fixed $k\in\N$). Let $\{a_\beta\}_{\beta\in B}$ be a set of coefficients, such that $a_\beta=0$ for all but finitely many $\beta\in B$. Then
$$\frac{1}{g_{2k\alpha}(N)}\sum_{i=1}^N \sum_{\beta\in B} a_\beta\left(V_\alpha^{\beta^i}-\Ex{V_\alpha^{\beta^i}}\right) \overset{d}{\longrightarrow} N\left(0,\sigma^2\right),$$
as $N \rightarrow \infty$, where $\sigma^2\geq 0$.
\end{lemma}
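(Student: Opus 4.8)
The strategy is to realize the sum $\sum_{i=1}^N \sum_{\beta \in B} a_\beta(V_\alpha^{\beta^i} - \Ex{V_\alpha^{\beta^i}})$ as a sum of an $m$-dependent sequence and apply the central limit theorem of Orey (Proposition \ref{thm:orey}). First I would define, for each $i \ge 1$,
$$
Y_i = \sum_{\beta \in B} a_\beta\left(V_\alpha^{\beta^i} - \Ex{V_\alpha^{\beta^i}}\right),
$$
so that our random variable is $\sum_{i=1}^N Y_i$. Since every $\beta \in B$ is supported on a set of integers of bounded size (all $\beta$ have $|\beta| = k$ and $\iota(\beta) = 0$, so $\beta$ is supported in $[0, k-1]$ after the shift, and only finitely many $a_\beta$ are nonzero), the monomial $V_\alpha^{\beta^i}$ depends only on the variables $X_i, \ldots, X_{i+k-1}$. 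Hence $\{Y_i\}$ is an $(k-1)$-dependent sequence with $\Ex{Y_i} = 0$; the boundedness assumption $|X_n| < C$ a.s.\ guarantees $\Var{Y_i} < \infty$ (indeed uniformly bounded).

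Next I would identify $D_N^2 = \Var{Y_1 + \cdots + Y_N}$ with the expression computed in Lemma \ref{norm_cov}. Expanding bilinearly,
$$
D_N^2 = \sum_{\beta, \beta' \in B} a_\beta a_{\beta'} \Cov{\sum_{i=1}^N V_\alpha^{\beta^i}, \sum_{i=1}^N V_\alpha^{(\beta')^i}},
$$
and since $|\beta| = |\beta'| = k$ for all terms, Lemma \ref{norm_cov} gives $D_N^2 / g_{2k\alpha}(N)^2 \to \sigma^2$ where $\sigma^2 = \sum_{\beta,\beta'} a_\beta a_{\beta'} \sum_{j} \Cov{X^\beta, X^{(\beta')^j}} \ge 0$. (The non-negativity is automatic since $\sigma^2$ is a limit of variances $D_N^2/g_{2k\alpha}(N)^2 \ge 0$.) If $\sigma^2 = 0$ the statement is trivial (the normalized sum converges to $0$ in probability, hence in distribution to $N(0,0)$), so I would assume $\sigma^2 > 0$, which means $D_N^2 \to \infty$ and in fact $D_N^2 \asymp g_{2k\alpha}(N)^2$.

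It then remains to verify the two hypotheses of Proposition \ref{thm:orey}. Condition (2), $\frac{1}{D_N^2}\sum_{k=1}^N \Var{Y_k} = O(1)$, follows because each $\Var{Y_k}$ is bounded by a constant, so $\sum_{k=1}^N \Var{Y_k} = O(N)$, while $D_N^2 \asymp g_{2k\alpha}(N)^2 = \Theta(N^{1-2k\alpha})$ for $2k\alpha < 1$ and $\Theta(\log N)$ for $2k\alpha = 1$; in the borderline case $k=1, \alpha = 1/2$ one has $D_N^2 \asymp \log N$ but also many of the $\Var{Y_k}$ are $\Theta(1/k)$ rather than $\Theta(1)$ — here I would be slightly more careful and bound $\sum \Var{Y_k}$ by $C\sum_{k=1}^N k^{-2|\beta|\alpha} \asymp g_{2k\alpha}(N)^2 \asymp D_N^2$, using that every $\beta$ contributing has $|\beta| = k$. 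Actually the cleanest uniform bound is $\Var{Y_i} \le C' i^{-2k\alpha}$ (each summand $V_\alpha^{\beta^i}$ has variance $O(i^{-2k\alpha})$), which makes $\sum_{i=1}^N \Var{Y_i} = O(g_{2k\alpha}(N)^2) = O(D_N^2)$, giving (2). For the Lindeberg-type condition (1): since $|X_n| < C$ a.s., each $|V_\alpha^{\beta^i}| \le C^k i^{-2k\alpha} \le C^k$, so $|Y_i|$ is bounded by an absolute constant $M$; as $D_N \to \infty$, for $N$ large the event $\{|Y_i| > \epsilon D_N\}$ is empty for every $i \le N$, so the sum in (1) is eventually $0$. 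This is the one place the almost-sure boundedness of the $X_n$ is genuinely used and it makes condition (1) immediate. Applying Proposition \ref{thm:orey} yields $(Y_1 + \cdots + Y_N)/D_N \xrightarrow{d} N(0,1)$, and multiplying by $D_N/g_{2k\alpha}(N) \to \sigma$ (Slutsky) gives the claim.

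The main obstacle, such as it is, is bookkeeping rather than conceptual: making sure the normalization $g_{2k\alpha}(N)$ matches $D_N$ up to the constant $\sigma$ across all three regimes ($2k\alpha < 1$, $= 1$), and handling the degenerate case $\sigma = 0$ gracefully. The substantive input — the variance asymptotics — is already packaged in Lemma \ref{norm_cov}, and the boundedness hypothesis trivializes the Lindeberg condition, so no hard analysis remains.
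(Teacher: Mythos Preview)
Your proof is correct and follows essentially the same route as the paper: define $Y_i=\sum_{\beta}a_\beta(V_\alpha^{\beta^i}-\Ex{V_\alpha^{\beta^i}})$, observe $m$-dependence and uniform boundedness, use Lemma~\ref{norm_cov} to get $D_N^2\asymp g_{2k\alpha}(N)^2$ and to control $\sum_i\Var{Y_i}$, and apply Orey's CLT (Proposition~\ref{thm:orey}). One small slip: $|\beta|=k$ and $\iota(\beta)=0$ do \emph{not} force $\beta$ to be supported in $[0,k-1]$ (e.g.\ $\beta=\delta+(k-1)\delta^{100}$), but since only finitely many $a_\beta$ are nonzero the supports are uniformly bounded and your $m$-dependence conclusion survives; in fact you are slightly more careful than the paper in separating out the degenerate case $\sigma^2=0$ and in invoking Slutsky for the final normalization.
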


\begin{proof}
We shall apply Proposition \ref{thm:orey} to the random variables
$$Y_i=\sum_{\beta\in B} a_\beta\left(V_\alpha^{\beta^i}-\Ex{V_\alpha^{\beta^i}}\right). $$
One can verify that $Y_i$ are $|\beta|$-dependent uniformly bounded random variables. From Lemma \ref{norm_cov},
\begin{equation}
D_n^2=\Var{Y_1+...+Y_n}=\sum_{\beta,\gamma\in B}\left(\Cov{\sum_{i=1}^{N} V_\alpha^{\beta^i},\sum_{i=1}^{N}V_\alpha^{\gamma^i}} \right)=O(g_{2k\alpha}(N)^2).
\end{equation}
Since $D_n^2\overset{n\rightarrow \infty }{\longrightarrow} \infty$ and $Y_i$ are uniformly bounded, for every $\epsilon>0$, $\Ex{Y_k^2\cdot 1_{\{|Y_k|>\epsilon D_n\}}}=0$ for $n$ large enough, and the first condition of Proposition \ref{thm:orey} holds.
For the second condition, note that
$$\Var{Y_i}=\sum_{\beta,\gamma\in B} a_{\beta}a_{\gamma}\Cov{V_{\alpha}^{\beta_i},V_{\alpha}^{\gamma_i}}, $$
so again using Lemma \ref{norm_cov}, we get
$$\sum_{i=1}^N \Var{Y_i}=\sum_{\beta,\gamma\in B} \left(a_{\beta}a_{\gamma}\sum_{i=1}^N \Cov{V_{\alpha}^{\beta_i},V_{\alpha}^{\gamma_i}}\right)=O\left(g_{2k\alpha}(N)^2\right).$$
and the second condition of Proposition \ref{thm:orey} is fulfilled. Hence, the proposition holds.
\end{proof}

\subsection{Proof of Propositions \ref{propos:odd}, \ref{propos:even} and \ref{propos:perp}}

\begin{proof}[Proof of Proposition \ref{propos:odd}]
We use Remark \ref{shift_rmk} and Lemma \ref{ARV} to decompose and rearrange
\begin{equation} \label{odd_decomp_1}
\begin{split}
& \Tr{H_{\alpha,N}^k}-\Ex{\Tr{H_{\alpha,N}^k}}\\
&\quad=\sum_{\iota(\beta)\in[1,N]} p^{k}\left(\beta\right)V_\alpha^{\beta}-\Ex{\sum_{\iota(\beta)\in[1,N]} p^{k}\left(\beta\right)V_\alpha^{\beta}}+A^k(N)\\
&\quad=\sum_{i=1}^N \left(\sum_{\iota(\beta)=0} p^k(\beta) \left(V_\alpha^{\beta^i}-\Ex{V_\alpha^{\beta^i}}\right)\right)+A^k(N)\\
&\quad=\sum_{i=1}^{N} \left(\sum_{\beta\in B_1} p^k(\beta) \left(V_\alpha^{\beta^i}-\Ex{V_\alpha^{\beta^i}}\right)\right)\\
&\qquad+\sum_{i=1}^{N} \left(\sum_{\beta\in B_1^\mathsf c} p^k(\beta) \left(V_\alpha^{\beta^i}-\Ex{V_\alpha^{\beta^i}}\right)\right)+A^k(N)\\
\end{split}
\end{equation}
where
\begin{itemize}
\item $B_1$ consists of all multi-indices $\beta$ for which $\iota(\beta)=0$ and $\left|\beta\right|=1$,
\item $B_1^\mathsf c$ consists of all multi-indices $\beta$ for which $\iota(\beta)=0$ and $\left|\beta\right|>1$.
\end{itemize}
Note that for any $k\in\N$, there are only finitely many $\beta$ with $\iota(\beta)=0$ and $p^k(\beta)\neq0$, so the above sums are finite. Moreover, $B_1=\left\{\delta\right\}$. We now use Lemmas \ref{ARV}, \ref{norm_cov} and Corollary \ref{cor:crossterms} to write \eqref{odd_decomp_1} as
\begin{equation} \label{odd_decomp_2}
\Tr{H_{\alpha,N}^k}-\Ex{\Tr{H_{\alpha,N}^k}} = \sum_{i=1}^{N} p^k(\delta) \left(V_\alpha^{\delta^i} - \Ex{V_\alpha^{\delta^i}}\right) +R_{\alpha,k,N}
\end{equation}
so that
\begin{equation} \label{eq:varEst1}
\Var{R_{\alpha,k,N}}=o\left(g_{2\alpha}(N)^2\right),
\end{equation}
and
\begin{equation} \label{eq:covEst1}
\Cov{\sum_{i=1}^{N} p^k(\delta) \left(V_\alpha^{\delta^i} - \Ex{V_\alpha^{\delta^i}}\right),R_{\alpha,k,N}}=o\left(g_{2\alpha}(N)^2\right).
\end{equation}

Using Lemma \ref{clt_lemma}, we deduce
\begin{equation} \nonumber
\frac{\sum_{i=1}^{N} p^k(\delta) \left(V_\alpha^{\delta^i} - \Ex{V_\alpha^{\delta^i}}\right)}{g_{2\alpha}(N)} \overset{d}{\longrightarrow} N\left(0,\sigma_k^2\right),
\end{equation}
as $N\rightarrow\infty$, for some $\sigma_k^2\geq 0$. By \eqref{odd_decomp_2}, \eqref{eq:varEst1} and \eqref{eq:covEst1}
$$\frac{\Tr{H_{\alpha,N}^k}-\Ex{\Tr{H_{\alpha,N}^k}}}{g_{2\alpha}(N)} \overset{d}{\longrightarrow} N\left(0,\sigma_k^2\right).$$

For any odd $k,\ell$, we use Lemma \ref{norm_cov} to compute
\begin{equation} \nonumber
\begin{split}
&\underset{N\longrightarrow \infty}{\lim} \Cov{\frac{\Tr{H_{\alpha,N}^k}-\Ex{\Tr{H_{\alpha,N}^k}}}{g_{2\alpha}(N)}, \frac{\Tr{H_{\alpha,N}^\ell}-\Ex{\Tr{H_{\alpha,N}^\ell}}}{g_{2\alpha}(N)}}\\
&\quad =\underset{N\longrightarrow \infty}{\lim} \Cov{\frac{\sum_{i=1}^{N} p^k(\delta) V_\alpha^{\delta^i}}{g_{2\alpha}(N)} , \frac{\sum_{i=1}^{N} p^\ell(\delta) V_\alpha^{\delta^i}}{g_{2\alpha}(N)}}\\
&\quad =p^k(\delta)p^\ell(\delta) \cdot\underset{N\longrightarrow \infty}{\lim} \Cov{\frac{\sum_{i=1}^{N} V_\alpha^{\delta^i}}{g_{2\alpha}(N)} , \frac{\sum_{i=1}^{N} V_\alpha^{\delta^i}}{g_{2\alpha}(N)}}=p^k(\delta)p^\ell(\delta)\eta^2.
\end{split}
\end{equation}
The value of $\sigma_k^2$ is obtained when $\ell=k$.

For any $P(x)=a_1x+\ldots+a_mx^m\in\mathcal O_m$, from (\ref{odd_decomp_2}) we have
\begin{equation} \nonumber
\begin{split}
\Tr{P\left(H_{\alpha,N}\right)}-\Ex{\Tr{P\left(H_{\alpha,N}\right)}} &= \sum_{i=1}^{N} \left(\sum_{k=1}^m a_k p^k(\delta)\right) \left(V_\alpha^{\delta^i}-\Ex{V_\alpha^{\delta^i}}\right)\\
&\quad+ o\left(g_{2\alpha}(N)\right),
\end{split}
\end{equation}
where the $o$ notation should be interpreted in the sense of \eqref{odd_decomp_2}--\eqref{eq:covEst1}.
Once again, Lemma \ref{clt_lemma} and Corollary \ref{cor:crossterms} tells us that
$$\frac{\Tr{P(H_{\alpha,N})}-\Ex{\Tr{P(H_{\alpha,N})}}}{g_{2\alpha}(N)} \overset{d}{\longrightarrow} N\left(0,\sigma_{\mathcal O}(P)^2\right)$$
as $N\rightarrow\infty$. Now Lemma \ref{norm_cov} allows us to recall (\ref{Qdef}) and (\ref{odd_delta}), and prove
$$\sigma_{\mathcal O}(P)^2=\left \langle P,C^{(m)}_{\mathcal O}P \right \rangle=\left|\left \langle P,Q_m \right \rangle \right|^2 \eta^2.$$
\end{proof}

For an even $k$, things get a bit more complicated, since we now have different types of monomials in $\Tr{H_{\alpha,N}^k}$. Monomials of order $2$ equal $p^k(\delta+\delta^s)V_\alpha(i) V_\alpha(i+s)$, and $p^k(\delta+\delta^s)$ depends on both $k$ and $s$. However, the methods used are essentially the same.

\begin{proof}[Proof of Proposition \ref{propos:even}]
For even $k$, we once again use Remark \ref{shift_rmk} and Lemma \ref{ARV}, to decompose and rearrange
\begin{equation} \label{even_decomp_1}
\begin{split}
& \Tr{H_{\alpha,N}^k}-\Ex{\Tr{H_{\alpha,N}^k}}\\
&\quad=\sum_{\iota(\beta)\in[1,N]} p^{k}\left(\beta\right)V_\alpha^{\beta}-\Ex{\sum_{\iota(\beta)\in[1,N]} p^{k}\left(\beta\right)V_\alpha^{\beta}}+A^k(N)\\
&\quad=\sum_{i=1}^N \left(\sum_{\iota(\beta)=0} p^k(\beta) \left(V_\alpha^{\beta^i}-\Ex{V_\alpha^{\beta^i}}\right)\right)+A^k(N)\\
&\quad=\sum_{i=1}^{N} \left(\sum_{\beta\in B_2} p^k(\beta) \left(V_\alpha^{\beta^i}-\Ex{V_\alpha^{\beta^i}}\right)\right)\\
&\qquad+\sum_{i=1}^{N} \left(\sum_{\beta\in B_2^\mathsf c} p^k(\beta) \left(V_\alpha^{\beta^i}-\Ex{V_\alpha^{\beta^i}}\right)\right)+A^k(N)\\
\end{split}
\end{equation}
where
\begin{itemize}
\item $B_2$ consists of all multi-indices $\beta$ for which $\iota(\beta)=0$ and $|\beta|=2$,
\item $B_2^\mathsf c$ consists of all multi-indices $\beta$ for which $\iota(\beta)=0$ and $|\beta|>2$,
\end{itemize}
again noting that $p^k(\beta)\neq 0$ only for finitely many $\beta\in B_2\cup B_2^\mathsf c$. From Lemmas \ref{ARV}, \ref{norm_cov} and Corollary \ref{cor:crossterms} we again deduce
\begin{equation} \label{even_decomp_2}
\begin{split}
\Tr{H_{\alpha,N}^k}-\Ex{\Tr{H_{\alpha,N}^k}} &= \sum_{i=1}^N \left(\sum_{\beta\in B_2} p^k(\beta)\left(V_\alpha^{\beta^i}-\Ex{V_\alpha^{\beta^i}}\right)\right)\\
&\quad +R_{\alpha,k,N},
\end{split}
\end{equation}
so that
\begin{equation} \label{eq:varEst2}
\Var{R_{\alpha,k,N}}=o\left(g_{4\alpha}(N)^2\right),
\end{equation}
and
\begin{equation} \label{eq:covEst2}
\Cov{\sum_{i=1}^N \left(\sum_{\beta\in B_2} p^k(\beta)\left(V_\alpha^{\beta^i}-\Ex{V_\alpha^{\beta^i}}\right)\right),R_{k,\alpha,N}}
=o\left(g_{4\alpha}(N)^2\right).
\end{equation}

As in the previous proof, it follows from \eqref{even_decomp_2}--\eqref{eq:covEst2} and Lemma \ref{clt_lemma} that
\begin{equation} \nonumber
\begin{split}
\frac{\Tr{H_{\alpha,N}^k}-\Ex{\Tr{H_{\alpha,N}^k}}}{g_{4\alpha}(N)} \overset{d}{\longrightarrow} N\left(0,\sigma_k^2\right)
\end{split}
\end{equation}
as $N\rightarrow\infty$, for some $\sigma_k^2\geq 0$.

Using Lemma \ref{norm_cov}, we compute for even $k,\ell$
\begin{equation} \label{even_cov_1}
\begin{split}
&\underset{N\longrightarrow \infty}{\lim} \Cov{\frac{\Tr{H_{\alpha,N}^k}-\Ex{\Tr{H_{\alpha,N}^k}}}{g_{4\alpha}(N)}, \frac{\Tr{H_{\alpha,N}^\ell}-\Ex{\Tr{H_{\alpha,N}^\ell}}}{g_{4\alpha}(N)}}\\
&\quad =\underset{N\longrightarrow \infty}{\lim} \Cov{\frac{\sum_{i=1}^N \left(\sum_{\beta\in B_2} p^k(\beta) V_\alpha^{\beta^i} \right)}{g_{4\alpha}(N)} , \frac{\sum_{i=1}^N \left(\sum_{\beta\in B_2} p^\ell(\beta) V_\alpha^{\beta^i} \right)}{g_{4\alpha}(N)}}\\
&\quad =\sum_{\beta,\gamma\in B_2} p^k(\beta)p^\ell(\gamma) \cdot\underset{N\longrightarrow \infty}{\lim} \Cov{\frac{\sum_{i=1}^N V_\alpha^{\beta^i}}{g_{4\alpha}(N)} , \frac{\sum_{i=1}^N V_\alpha^{\gamma^i}}{g_{4\alpha}(N)}}\\
&\quad =\sum_{\beta,\gamma\in B_2} p^k(\beta)p^\ell(\gamma)\sum_{j=-\infty}^\infty \Cov{X^\beta,X^{\gamma^j}} .
\end{split}
\end{equation}
Note that any $\beta\in B_2$ is of the form $\delta+\delta^s$, for some $s\geq 0$. Thus for any $\beta,\gamma\in B_2$, we have
$$\Cov{X^\beta,X^{\gamma^j}} =
\begin{cases}
\Ex{X_1^4}-\eta^4 & \beta=\gamma=2\delta\ ,\ j=0 \\
\eta^4 & \beta=\gamma\neq 2\delta\ ,\ j=0\\
0 & \text{otherwise}
\end{cases}.$$
(\ref{even_cov_1}) now becomes
\begin{equation} \label{even_cov_2}
\begin{split}
&\underset{N\longrightarrow \infty}{\lim} \Cov{\frac{\Tr{H_{\alpha,N}^k}-\Ex{\Tr{H_{\alpha,N}^k}}}{g_{4\alpha}(N)}, \frac{\Tr{H_{\alpha,N}^\ell}-\Ex{\Tr{H_{\alpha,N}^\ell}}}{g_{4\alpha}(N)}}\\
&\quad =p^k(2\delta)p^\ell(2\delta)\left(\Ex{X_1^4}-\eta^4\right) + \left(\sum_{s=1}^\infty p^k(\delta+\delta^s)p^\ell(\delta+\delta^s)\right)\eta^4
\end{split}
\end{equation}
as required. Note that the sum in (\ref{even_cov_2}) is in fact finite, as $p^k(\delta+\delta^s)=0$ if $s>(k-2)/2$. The value of $\sigma_k^2$ is obtained when $\ell=k$.

For any $P(x)=a_0+a_2x^2+\ldots+a_{m-1}x^{m-1}\in\mathcal E_m$, from (\ref{even_decomp_2}) we have
\begin{equation} \label{even_poly_decomp}
\begin{split}
&\Tr{P(H_{\alpha,N})}-\Ex{\Tr{P(H_{\alpha,N})}} \\
&\quad=\sum_{i=1}^N \left(\sum_{\beta\in B_2} \left(\sum_{k=0}^{m-1} a_kp^k(\beta)\right) \left(V_\alpha^{\beta^i}-\Ex{V_\alpha^{\beta^i}}\right)\right) + o\left(g_{4\alpha}(N)\right)
\end{split}
\end{equation}
where the $o$ notation should be interpreted in the sense of \eqref{even_decomp_2}--\eqref{eq:covEst2}. We apply Lemma \ref{clt_lemma} to obtain
$$\frac{\Tr{P(H_{\alpha,N})}-\Ex{\Tr{P(H_{\alpha,N})}}}{g_{4\alpha}(N)} \overset{d}{\longrightarrow} N\left(0,\sigma_{\mathcal E}(P)^2\right)$$
as $N\rightarrow\infty$. From Lemma \ref{norm_cov} and (\ref{even_cov_2}) above, we deduce
$$\sigma_{\mathcal E}(P)^2=\left \langle P,C^{(m)}_{\mathcal E}P \right \rangle.$$

It remains to establish the conditions under which $\sigma_{\mathcal E}(P)^2>0$. Our method is to isolate a unique multi-index $\widetilde\beta\in B_2$, such that
$$\sum_{i=1}^N \left(\sum_{k=0}^{m-1} a_k p^k\left(\widetilde\beta\right)\right) \left(V_\alpha^{\widetilde\beta^i}-\Ex{V_\alpha^{\widetilde\beta^i}}\right)$$
is uncorrelated with the other summands in (\ref{even_poly_decomp}), and has non-vanishing variance in the limit.

Denote $d=\deg(P)\geq 2$, and let $\widetilde{\beta}=\delta+\delta^{\frac{d}{2}-1}$. Note that $p^k\left(\widetilde\beta\right)$ is $0$ for every $k<d$, but positive for $k=d$, thus $a_d p^d\left(\widetilde\beta\right)\neq 0$. Define
$$Y^1_N=\sum_{i=1}^N a_d p^d\left(\widetilde\beta\right) \left(V_\alpha^{\widetilde\beta^i}-\Ex{V_\alpha^{\widetilde\beta^i}}\right)$$
and
$$Y_N^2=\sum_{i=1}^N \left(\sum_{\widetilde\beta\neq\beta\in B_2} \left(\sum_{k=0}^d a_kp^k(\beta)\right) \left(V_\alpha^{\beta^i}-\Ex{V_\alpha^{\beta^i}}\right)\right).$$
(\ref{even_poly_decomp}) now becomes
$$\Tr{P(H_{\alpha,N})}-\Ex{\Tr{P(H_{\alpha,N})}}=Y^1_N+Y^2_N+o\left(g_{4\alpha}(N)\right).$$
Note that $V_\alpha^{\beta^i}$ and $V_\alpha^{\gamma^j}$ are uncorrelated for different $\beta,\gamma\in B_2$, therefore $Y^1_N$ and $Y^2_N$ are uncorrelated. Using Lemma \ref{norm_cov}, we obtain
\begin{equation*} \label{eq1}
\begin{split}
&\quad \underset{N\longrightarrow \infty}{\lim} \Var{\frac{\Tr{P(H_{\alpha,N})}-\Ex{\Tr{P(H_{\alpha,N})}}}{g_{4\alpha}(N)}}=\underset{N\longrightarrow \infty}{\lim} \Var{\frac{Y^1_N+Y^2_N}{g_{4\alpha}(N)}}\\
&\geq\underset{N\longrightarrow \infty}{\lim} \Var{\frac{Y^1_N}{g_{4\alpha}(N)}}=
\begin{cases}
a_k^2 \left(p^k(\widetilde\beta)\right)^2 \eta^4 & k>2 \\
a_k^2 \left(p^k(\widetilde\beta)\right)^2 \left(\Ex{X_1^4}-\eta^4\right) & k=2
\end{cases}
\end{split}
\end{equation*}
which is positive under our assumptions.
\end{proof}

Finally,
\begin{proof}[Proof of Proposition \ref{propos:perp}]
The important property of any polynomial $P\in \mathcal Q_m^\perp$, is that the coefficient of $V_\alpha(i)$ in $\Tr{P(H_{\alpha,N})}$ is $0$ for any $k\leq i\leq N-k$ (this can be verified for $P_3,P_5,\ldots,P_m$, which form a basis for $\mathcal{Q}_m^\perp$). Thus, when we proceed in a similar manner to the previous propositions, using Remark \ref{shift_rmk}, Corollary \ref{oddeven2}, and Lemma \ref{ARV}, we obtain
\begin{equation} \label{perp_decomp_1}
\begin{split}
& \Tr{P_k(H_{\alpha,N})} - \Ex{\Tr{P_k(H_{\alpha,N})}}\\
&\quad=\sum_{\underset{|\beta|>1}{\iota(\beta)\in[1,N]}} p^{k}\left(\beta\right)V_\alpha^{\beta}-\Ex{\sum_{\underset{|\beta|>1}{\iota(\beta)\in[1,N]}} p^{k}\left(\beta\right)V_\alpha^{\beta}}+A^k(N)-p^k(\delta)A^1(N)\\
&\quad=\sum_{i=1}^N \left(\sum_{\underset{|\beta|>1}{\iota(\beta)=0}} p^k(\beta) \left(V_\alpha^{\beta^i}-\Ex{V_\alpha^{\beta^i}}\right)\right)+A^k(N)-p^k(\delta)A^1(N)\\
&\quad=\sum_{i=1}^{N} \left(\sum_{\beta\in B_3} p^k(\beta) \left(V_\alpha^{\beta^i}-\Ex{V_\alpha^{\beta^i}}\right)\right)\\
&\qquad+\sum_{i=1}^{N} \left(\sum_{\beta\in B_3^\mathsf c} p^k(\beta) \left(V_\alpha^{\beta^i}-\Ex{V_\alpha^{\beta^i}}\right)\right)+A^k(N)-p^k(\delta)A^1(N)
\end{split}
\end{equation}
where
\begin{itemize}
\item $B_3$ consists of all multi-indices $\beta$ for which $\iota(\beta)=0$ and $|\beta|=3$,
\item $B_3^\mathsf c$ consists of all multi-indices $\beta$ for which $\iota(\beta)=0$ and $|\beta|>3$,
\end{itemize}
once again noting that $p^k(\beta)\neq 0$ only for finitely many $\beta\in B_3\cup B_3^\mathsf c$.

As in the previous proofs, we see by Lemmas \ref{ARV}, \ref{norm_cov} and Corollary \ref{cor:crossterms} that the fluctuations coming from $\beta \in B_3$ dominate the other terms and, applying Lemma \ref{clt_lemma}, we get
$$\frac{\Tr{H_{\alpha,N}^k}-\Ex{\Tr{H_{\alpha,N}^k}}}{g_{6\alpha}(N)} \overset{d}{\longrightarrow} N\left(0,\sigma_{\mathcal Q^\perp}(P_k)^2\right)$$
as $N\rightarrow\infty$, for some $\sigma_{\mathcal Q^\perp}(P_k)^2\geq 0$.

Using Lemma \ref{norm_cov}, we now compute for odd $k,\ell\geq 3$:
\begin{equation} \label{perp_cov_1}
\begin{split}
&\underset{N\longrightarrow\infty}{\lim}\Cov{\text{\scalebox{0.9}
{$\frac{\Tr{P_k(H_{\alpha,N})}-\Ex{\Tr{P_k(H_{\alpha,N})}}}{g_{6\alpha}(N)},\frac{\Tr{P_\ell(H_{\alpha,N})}-\Ex{\Tr{P_\ell(H_{\alpha,N})}}}{g_{6\alpha}(N)}$}}}\\
&\quad =\underset{N\longrightarrow \infty}{\lim} \Cov{\frac{\sum_{i=1}^N \left(\sum_{\beta\in B_3} p^k(\beta) V_\alpha^{\beta^i} \right)}{g_{6\alpha}(N)} , \frac{\sum_{i=1}^N \left(\sum_{\beta\in B_3} p^\ell(\beta) V_\alpha^{\beta^i} \right)}{g_{6\alpha}(N)}}\\
&\quad =\sum_{\beta,\gamma\in B_3} p^k(\beta)p^\ell(\gamma) \cdot\underset{N\longrightarrow \infty}{\lim} \Cov{\frac{\sum_{i=1}^N V_\alpha^{\beta^i}}{g_{6\alpha}(N)} , \frac{\sum_{i=1}^N V_\alpha^{\gamma^i}}{g_{6\alpha}(N)}}\\
&\quad =\sum_{\beta,\gamma\in B_3} p^k(\beta)p^\ell(\gamma)\sum_{j=-\infty}^\infty \Cov{X^\beta,X^{\gamma^j}}\\
&\quad=\sum_{\beta,\gamma\in B_3}\sum_{j=-\infty}^\infty p^k(\beta)p^\ell(\gamma) \Cov{X^\beta,X^{\gamma^j}}.
\end{split}
\end{equation}
Our problem is thus reduced to understanding the elements in $B_3$.

All multi-indices in $B_3$ are of the form $\delta+\delta^s+\delta^{s+t}$, for some $s, t\geq 0$. We want to determine conditions on $t$ and $s$ that guarantee a non-zero contribution of the corresponding element to the overall covariance calculation, and to compute that contribution. For example, if $\beta=3\delta$ and $\gamma=2\delta+\delta^1$ then $\Cov{X^\beta,X^{\gamma}}=0$.  In fact, $\Cov{X^\beta,X^{\gamma^j}}=\left(\Ex{X^{\beta+\gamma^j}}-\Ex{X^\beta}\Ex{X^{\gamma^j}}\right)$ is non-zero (for a general distribution of $X_1$) if and only if the supports of $\beta$ and $\gamma^j$ are not disjoint \textit{and} $\beta_n+\gamma^j_n\neq 1$ for all $n$. We proceed to compute the individual contributions of all possible elements to the covariance.

\begin{enumerate}
\item[(a)] From $\beta=\gamma=3\delta$ and $j=0$, we obtain
$$p^k(3\delta)p^\ell(3\delta)\left(\Ex{X_1^6}-\Ex{X_1^3}^2\right).$$

\item[(b)] From $\beta=2\delta+\delta^t$, $\gamma=\delta+2\delta^t$, and $j=0$, we obtain
$$p^k(2\delta+\delta^t) p^\ell(\delta+2\delta^t) \Ex{X_1^3}^2.$$
Symmetrically (from $\beta=\delta+2\delta^t$, $\gamma=2\delta+\delta^t$, and $j=0$), we obtain
$$p^\ell(2\delta+\delta^t) p^k(\delta+2\delta^t) \Ex{X_1^3}^2.$$

\item[(c)] From $\beta=3\delta$, $\gamma=2\delta+\delta^t$, and $j=-t$, we obtain
$$p^k(3\delta)p^\ell(2\delta+\delta^t) \Ex{X_1^4}\Ex{X_1^2}.$$
Symmetrically (from $\beta=2\delta+\delta^t$, $\gamma=3\delta$, and $j=t$), we obtain
$$p^\ell(3\delta)p^k(2\delta+\delta^t) \Ex{X_1^4}\Ex{X_1^2}.$$

\item[(d)] From $\beta=3\delta$, $\gamma=\delta+2\delta^s$, and $j=0$, we obtain
$$p^k(3\delta)p^\ell(\delta+2\delta^s)\Ex{X_1^4}\Ex{X_1^2}.$$
Symmetrically (from $\beta=\delta+2\delta^s$, $\gamma=3\delta$, and $j=0$), we obtain
$$p^\ell(3\delta)p^k(\delta+2\delta^s)\Ex{X_1^4}\Ex{X_1^2}.$$

\item[(e)] From $\beta=\gamma=2\delta+\delta^t$ and $j=0$, we obtain
$$p^k(2\delta+\delta^t)p^\ell(2\delta+\delta^t)\Ex{X_1^4}\Ex{X_1^2}.$$

\item[(f)] From $\beta=\gamma=\delta+2\delta^s$ and $j=0$, we obtain
$$p^k(\delta+2\delta^s) p^\ell(\delta+2\delta^s) \Ex{X_1^4}\Ex{X_1^2}.$$

\item[(g)] From $\beta=2\delta+\delta^{t}$, $\gamma=2\delta+\delta^{s+t}$, and $j=-s$, we obtain
$$p^k(2\delta+\delta^t) p^\ell(2\delta+\delta^{s+t}) \Ex{X_1^2}^3.$$
Symmetrically (from $\beta=2\delta+\delta^{s+t}$, $\gamma=2\delta+\delta^t$, and $j=s$), we obtain
$$p^\ell(2\delta+\delta^t) p^k(2\delta+\delta^{s+t}) \Ex{X_1^2}^3.$$

\item[(h)] From $\beta=2\delta+\delta^t$, $\gamma=\delta+2\delta^s$, and $j=t$, we obtain
$$p^k(2\delta+\delta^t) p^\ell(\delta+2\delta^s) \Ex{X_1^2}^3.$$
Symmetrically (from $\beta=\delta+2\delta^s$, $\gamma=2\delta+\delta^t$, and $j=-t$), we obtain
$$p^\ell(2\delta+\delta^t) p^k(\delta+2\delta^s) \Ex{X_1^2}^3.$$

\item[(i)] From $\beta=\delta+2\delta^s$, $\gamma=\delta+2\delta^{s+t}$, and $j=0$, we obtain
$$p^k(\delta+2\delta^s) p^\ell(\delta+2\delta^{s+t}) \Ex{X_1^2}^3.$$
Symmetrically (from $\beta=\delta+2\delta^{s+t}$, $\gamma=\delta+2\delta^s$, and $j=0$), we obtain
$$p^\ell(\delta+2\delta^s) p^k(\delta+2\delta^{s+t}) \Ex{X_1^2}^3.$$

\item[(j)] Finally, from $\beta=\gamma=\delta+\delta^s+\delta^{s+t}$ and $j=0$, we obtain
$$p^k(\delta+\delta^s+\delta^{s+t}) p^\ell(\delta+\delta^s+\delta^{s+t}) \Ex{X_1^2}^3.$$
\end{enumerate}

Summing the different contributions described above, we see that $(a)$ leads to $q_6$, $(b)$ leads to $q_{3,3}$, $(c)-(f)$ lead to $q_{4,2}$, and $(g)-(j)$ lead to $q_{2,2,2}$. Overall, \eqref{perp_cov_1} becomes
\begin{equation} \label{perp_cov_2}
\begin{split}
&\underset{N\longrightarrow\infty}{\lim}\Cov{\text{\scalebox{0.9}
{$\frac{\Tr{P_k(H_{\alpha,N})}-\Ex{\Tr{P_k(H_{\alpha,N})}}}{g_{6\alpha}(N)},\frac{\Tr{P_\ell(H_{\alpha,N})}-\Ex{\Tr{P_\ell(H_{\alpha,N})}}}{g_{6\alpha}(N)}$}}}\\
&\quad =M_{k,\ell}.
\end{split}
\end{equation}
as required.

Note that if $P(x)=a_1x+a_3x^3+\ldots+a_mx^m\in\mathcal Q_m^\perp$, we also have $P=a_3P_3+a_5P_6+\ldots+a_mP_m$. So as above, we obtain
\begin{equation} \label{perp_poly_decomp}
\begin{split}
&\Tr{P(H_{\alpha,N})}-\Ex{\Tr{P(H_{\alpha,N})}} \\
&\quad=\sum_{i=1}^N \left(\sum_{\beta\in B_3} \left(\sum_{k=3}^m a_kp^k(\beta)\right) \left(V_\alpha^{\beta^i}-\Ex{V_\alpha^{\beta^i}}\right)\right) + o\left(g_{6\alpha}(N)\right),
\end{split}
\end{equation}
where the $o$-notation is to be understood, as in the proofs above, in the sense of the contribution of the corresponding random variable to the asymptotic fluctuations. Lemma \ref{clt_lemma} applies as before, so
$$\frac{\Tr{P(H_{\alpha,N})}-\Ex{\Tr{P(H_{\alpha,N})}}}{g_{6\alpha}(N)} \overset{d}{\longrightarrow} N\left(0,\sigma_{\mathcal Q^\perp}(P)^2\right)$$
as $N\rightarrow\infty$, where
$$\sigma_{\mathcal Q^\perp}(P)^2=\left \langle P,C^{(m)}_{\mathcal Q^\perp}P \right \rangle.$$

It remains to establish the conditions under which $\sigma_{\mathcal Q^\perp}(P)^2>0$. Denote $d=\deg(P)$. As before, we would like to isolate a single $\widetilde\beta\in B_3$, such that
$$\sum_{i=1}^N \left(\sum_{k=3}^d a_kp^k\left(\widetilde\beta\right)\right) \left(V_\alpha^{\widetilde\beta^i}-\Ex{V_\alpha^{\widetilde\beta^i}}\right)$$
is uncorrelated with the other summands in (\ref{perp_poly_decomp}), and has non-vanishing variance in the limit. This is possible for $d\geq 7$ and $d=3$, but not for $d=5$ (which behaves differently).

\subsubsection*{First case, $d\geq 7$:}
Define $\widetilde\beta=\delta+\delta^1+\delta^{\frac{d-3}{2}}$. Note that $p^k\left(\widetilde\beta\right)$ is $0$ for every $k<d$, but positive for $k=d$. In particular, $a_d p^d\left(\widetilde\beta\right)\neq 0$. Define
$$Y^1_N=\sum_{i=1}^N a_dp^d\left(\widetilde\beta\right) \left(V_\alpha^{\widetilde\beta^i}-\Ex{V_\alpha^{\widetilde\beta^i}}\right)$$
and
$$Y^2_N=\sum_{i=1}^N \left(\sum_{\widetilde\beta\neq\beta\in B_3} \left(\sum_{k=3}^d a_kp^k(\beta)\right) \left(V_\alpha^{\beta^i}-\Ex{V_\alpha^{\beta^i}}\right)\right).$$
(\ref{perp_poly_decomp}) now becomes
$$\Tr{P(H_{\alpha,N})}-\Ex{\Tr{P(H_{\alpha,N})}}=Y^1_N+Y^2_N+o\left(g_{6\alpha}(N)\right).$$
Since $X_i X_{i+1} X_{i+\frac{d-3}{2}}$ and $X_j X_{j+s} X_{j+s+t}$ are uncorrelated unless $i=j$, $s=1$ and $t=\frac{d-3}{2}-1$, we deduce that $V_\alpha^{\widetilde\beta^i}$ and $V_\alpha^{\gamma^j}$ are uncorrelated unless $\gamma=\widetilde\beta$ and $j=i$. Therefore, $Y^1_N$ and $Y^2_N$ are uncorrelated, and applying Lemma \ref{norm_cov} we obtain
\begin{equation*}
\begin{split}
&\quad \underset{N\longrightarrow \infty}{\lim} \Var{\frac{\Tr{P(H_{\alpha,N})}-\Ex{\Tr{P(H_{\alpha,N})}}}{g_{6\alpha}(N)}}=\underset{N\longrightarrow \infty}{\lim} \Var{\frac{Y^1_N+Y^2_N}{g_{6\alpha}(N)}}\\
&\geq\underset{N\longrightarrow \infty}{\lim} \Var{\frac{Y^1_N}{g_{6\alpha}(N)}}=a_k^2 \left(p^k(\widetilde\beta)\right)^2 \eta^6>0.
\end{split}
\end{equation*}

\subsubsection*{Second case, $d=3$:}
Since $\mathcal Q_3^\perp$ is spanned by $P_3(x)=x^3-6x$, it suffices to prove that $\sigma_{\mathcal Q^\perp}(P_3)^2>0$. Note that $\beta=3\delta$ is the only multi-index in $B_3$ with $p^3(\beta)\neq 0$ (in fact $p^3(3\delta)=1$), so (\ref{eq:M_kl}) reduces to $\sigma_{\mathcal Q^\perp}(P_3)^2=M_{3,3}=\Ex{X_1^6}-\Ex{X_1^3}^2>0$.

\subsubsection*{Third case, $d=5$:}
Consider the polynomials 
$$f(x)=P_3(x)=x^3-6x\quad,\quad g(x)=\frac15 P_5(x)-2P_3(x)=\frac{x^5}{5}-2x^3+6x$$
which form a basis for $Q_5^\perp$. Using the covariance formulas computed above, we deduce that the random vector
$$\left(\frac{\Tr{f(H_{\alpha,N})}-\Ex{\Tr{f(H_{\alpha,N})}}}{g_{6\alpha}(N)} , \frac{\Tr{g(H_{\alpha,N})}-\Ex{\Tr{g(H_{\alpha,N})}}}{g_{6\alpha}(N)}\right)$$
converges in distribution to a Gaussian vector, with the covariance matrix
$$C\equiv \left( {\begin{array}{cc}
   \Ex{X_1^6}-\Ex{X_1^3}^2 & 2\Ex{X_1^4}\Ex{X_1^2} \\
   2\Ex{X_1^4}\Ex{X_1^2} & 2\left(\Ex{X_1^4}\Ex{X_1^2} + \Ex{X_1^3}^2+\Ex{X_1^2}^3\right) \\
  \end{array} } \right).$$
Next, consider the random variables
$$W_1=\frac{1}{\sqrt 3}\left(X_1^3+X_2^3+X_3^3\right)$$
and
$$W_2=\frac{1}{\sqrt 3}\left(X_1X_2^2+X_2X_3^2+X_3X_1^2 + X_1^2X_2+X_2^2X_3+X_3^2X_1\right).$$
A straightforward computation reveals that the covariance matrix of $(W_1,W_2)$ is the same $C$. Therefore, for any $P\in Q_5^\perp$, find $a,b$ such that $P=a\cdot f+b\cdot g$, and deduce
$$\sigma_{\mathcal Q^\perp}(P)^2=\Var{aW_1+bW_2}.$$
If $\Var{X_1^2}>0$, we may find $A,B$ in the support of $X_1$, such that $B\neq A,-A$. Replacing $(X_1,X_2,X_3)$ with $(A,A,A)\ ,\ (A,A,B)\ ,\ (A,B,B)$ and $(B,B,B)$, and then computing the value of $aW_1+bW_2$, we deduce that unless $a=b=0$, $aW_1+bW_2$ is not supported on a single value, therefore $\Var{aW_1+bW_2}>0$.

However, if $\Var{X_1^2}=0$, there exists some constant $A$, such that $X_1^2=X_2^2=X_3^2=A^2$ with probability $1$. Therefore $W_2=2 W_1$ with probability $1$, and we deduce that the polynomial
$$P(x)=g(x)-2f(x)=\frac{x^5}{5}-4x^3+18x$$
satisfies $\sigma_{\mathcal Q^\perp}(P)^2=0$.
\end{proof}


\begin{thebibliography}{99}

\bibitem{RMT} G.\ Anderson, A.\ Guionnet and O.\ Zeituni, \emph{An introduction to random matrices}. Cambridge University Press, Cambridge, 2009.

\bibitem{serfaty}  F.~Bekerman, T.~Lebl\'e and S.~Serfaty, \emph{CLT for fluctuations of $\beta$-ensembles with general potential}, Electron.\ J.\ Probab.\ \textbf{23} (2018), 31 pp.



\bibitem{Bor} A.~Borodin, \emph{CLT for spectra of submatrices of Wigner random matrices}, Mosc.\ Math.\ \textbf{14} (2014), 29--38.


\bibitem{BG} A.~Borodin and V.~Gorin, \emph{General $\beta$-Jacobi corners process and the Gaussian free field}, Comm.\ Pure Appl.\ Math.\ \textbf{68} (2015), 1774-–1844.

\bibitem{BGG} A.~Borodin, V.~Gorin and A.~Guionnet, \emph{Gaussian asymptotics of discrete $\beta$-ensembles}, Publ.\ Math.\ Inst.\ Hautes \'Etudes Sci.\ \textbf{125} (2017), 1-–78.

\bibitem{Breuer-TAMS} J.~Breuer, \emph{Spectral and dynamical properties of certain random Jacobi matrices with growing parameters}, Tran.\ of the Amer.\ Math.\ Soc.\ \textbf{362} (2010), 3161--3182.

\bibitem{BD} J.~Breuer and M.~Duits, \emph{Central limit theorems for biorthogonal ensembles and asymptotics of recurrence coefficients}, J.\ Amer.\ Math.\ Soc.\ \textbf{30} (2017), 27-–66.


\bibitem{BFS} J.~Breuer, P.~Forrester and U.~Smilansky, \emph{Random Schr\"odinger operators from random matrix theory}, J.\ Phys.\ A: Math.\ Theor.\ \textbf{40} (2007), F1--F8.

\bibitem{BLS} J.~Breuer, Y.~Last and B.~Simon, \emph{ Stability of asymptotics of Christoffel-Darboux kernels}, Commun.\ Math.\ Phys.\ \textbf{330} (2014), 1155--1178.


\bibitem{BG} A.~Bufetov and V.~Gorin, \emph{Fluctuations of particle systems determined by Schur generating functions}, Adv.\ Math.\ \textbf{338} (2018), 702-–781.



\bibitem{DKS} F.\ Delyon, H.\ Kunz and B.\ Souillard, \emph{One-dimensional wave equations in disordered media}, J.\ Phys.\ A: Math.\ Gen.\ 16 (1983), 25--42.

\bibitem{DSS2} F.\ Delyon, B.\ Simon and B. Souillard, \emph{From power localized to extended states in a class of one-dimensional disordered systems}, Phys. Rev. Lett. 52 (1984), 2187--2189.

\bibitem{DSS} F.\ Delyon, B.\ Simon and B. Souillard, \emph{From power pure point to continuous spectrum in disordered systems}, Ann. Inst. H. Poincaré 42 (1985), 283--309.


\bibitem{DE1} I.~Dumitriu and A.~Edelman, \emph{Matrix models for $\beta$ ensembles}, J.\ Math.\ Phys.\ \textbf{43} (2002), 5830--5847.

\bibitem{DE} I.\ Dumitriu and A.\ Edelman, \emph{Global spectrum fluctuations for the $\beta$-Hermite and $\beta$-Laguerre ensembles via matrix models}, J.\ Math.\ Phys.\ \textbf{47} (2006), 36pp.

\bibitem{DP} I.~Dumitriu and E.~Paquette, \emph{Global fluctuations for linear statistics of β-Jacobi ensembles}, Random Matrices Theory Appl.\ \textbf{1} (2012), 60 pp.

\bibitem{Durett} R.\ Durett, \emph{Probability: Theory and Examples,}, Cambridge Series in Statistical and Probabilistic Mathematics, Fourth Edition,  Cambridge University Press, Cambridge, (2010).

\bibitem{Duy} T.~K.~Duy, \emph{On spectral measures of random Jacobi matrices}, Osaka J.\ Math.\ \textbf{55} (2018), 595-–617.




\bibitem{flajolet} P.~Flajolet, \emph{Combinatorial aspects of continued fractions}, Combinatorics \textbf{79} (Proc.\ Colloq., Univ.\ Montr\'eal, Montreal, Que., 1979), Part II. Ann.\ Discrete Math.\ \textbf{9} (1980), 217-–222.



\bibitem{GS} V.~Gorin and M.~Shkolnikov, \emph{Stochastic Airy semigroup through tridiagonal matrices}, Ann.\ Probab.\ \textbf{46} (2018), 2287–-2344.

\bibitem{HA} A.\ Hardy, \emph{Polynomial ensembles and recurrence coefficients} Constr. Approx. 48: 137 (2018).


\bibitem{J1} K.~Johansson, \emph{On random matrices from the compact classical groups}, Ann.\ of Math.\ (2) \textbf{145} (1997), 519--545.

\bibitem{J2} K.~Johansson, \emph{On fluctuations of eigenvalues of random Hermitian matrices}, Duke Math.\ J.\ \textbf{91} (1998), 151--204.





\bibitem{Kil-St} R.\ Killip and M.\ Stoiciu, \emph{Eigenvalue statistics for CMV matrices: From Poisson to clock via random matrix ensembles}, Duke Mathematical Journal 146 (2009), no.\ 3 361--399.

\bibitem{KP} W.\ Kirsch and L.A.\ Pastur, \emph{On the analogues of Szego's theorem for ergodic operators}, Sb. Math. 206 no.\ 1-2 (2015) 93--119 .

\bibitem{KLS} A.\ Kiselev, Y.\ Last and B.\ Simon, \emph{Modified Pr\"ufer and EFGP transforms and the spectral analysis of one-dimonsional Schr\"odinger operators}, Comm.\ Math.\ Phys.\ 194 (1998), 1--45.

\bibitem{KN} S.\ Kotani and F.\ Nakano, \emph{Level statistics of one-dimensional Schr\"odinger operators with random decaying potential}, Interdiscip.\ Math.\ Sci.\ 17 (2015), 343--373.

\bibitem{KN2} S.\ Kotani and F.\ Nakano, \emph{Poisson statistics for 1d Schr\"odinger operators with random decaying potentials}, Electron.\ J.\ Probab.\ \textbf{22} (2017), 31pp.

\bibitem{KVV} E.\ Kritchevski, B.\ Valk\'{o} and B.\ Vir\'{a}g, \emph{The scaling limit of the critical one-dimension random Schr\"odinger operator}, Comm.\ Math.\ Phys. 314, no.\ 3 (2014) 775--806.

\bibitem{lambert} G.~Lambert, \emph{Limit theorems for biorthogonal ensembles and related combinatorial identities}, Adv.\ Math.\ \textbf{329} (2018), 590--648.





\bibitem{Nakano} F.\ Nakano, \emph{Fluctuation of density of states for $1$d Schr\"odinger operators}, J. Stat. Phys. 166 (2017), 1393--1404.

\bibitem{SO} S.\ Orey, \emph{A central limit theorem for m-dependent random variables}, Duke Math.\ J.\ 25 (1958), no.\ 4, 543--546.


\bibitem{PS} L.\ Pastur and M.\ Shchrebina, \emph{Szeg\"o-type theorems for one-dimensional schr\"odinger operators with random potential (smooth case)}, Preprint, (2018).

\bibitem{Popescu} I.\ Popescu, \emph{General tridiagonal random matrix models, limiting distributions and fluctuations}, Probab.\ Theory Relat.\ Field 144(1-2), (2009) 179--200.

\bibitem{Simon} B.\ Simon, \emph{Some Jacobi matrices with decaying potential and dense point spectrum}, Commun.\ Math.\ Phys.\ 87 (1982), 253--258.

\bibitem{sosh1} A.~Soshnikov, \emph{Gaussian fluctuation for the number of particles in Airy, Bessel, sine, and other determinantal random point fields}, J.\ Statist.\ Phys.\ \textbf{100} (2000), 491-–522.

\bibitem{sosh2} A.~Soshnikov, \emph{Gaussian limit for determinantal random point fields}, Ann.\ Probab.\ \textbf{30} (2002), 171-–187.

\bibitem{viennot} G.~Viennot, \emph{Une th\'eorie combinatoire des polyn$\hat{o}$mes orthogonaux}, Lecture Notes, UQ`AM, 1984. http://www.xavierviennot.org/xavier/livres.html

\bibitem{Wong} P.~Wong \emph{Local semicircle law at the spectral edge for Gaussianβ-ensembles}, Commun.\ Math.\ Phys.\ \textbf{312} (2012), 251-–263.




\end{thebibliography}
\end{document}